\newtheorem{theorem}{Theorem}[section]
\newtheorem{lemma}{Lemma}[section]
\newtheorem{corollary}{Corollary}[section]
\newtheorem{proposition}{Proposition}[section]
\newcommand{\nn}{\mathcal{N}}
\newcommand{\g}{\mathcal{G}}
\newcommand{\np}{n_+}
\newcommand{\nm}{n_-}
\newcommand{\npm}{n_{\pm}}
\newcommand{\PP}{{\mathbf{P}}}
\newcommand{\FF}{{\mathcal{F}}}
\newcommand{\GG}{{\mathcal{G}}}
\newcommand{\EE}{{\mathbf{E}}}
\newcommand{\II}{\mathbf{1}}
\newcommand{\RR}{\mathbb{R}}
\begin{document}
\bibliographystyle{IEEEtran}
\title{Community detection thresholds\\and the weak Ramanujan property}


\author{Laurent Massouli\'e\\
Microsoft Research--Inria Joint Centre\\
laurent.massoulie@inria.fr}
\date{\today}
\maketitle


\begin{abstract}
The present work is concerned with community detection, that is reconstruction of hidden components, in random graph models. 
Decelle et al.~\cite{Decelle11} conjectured the existence of a sharp threshold on model parameters for community detection in sparse random graphs drawn from the {\em stochastic block model}. Mossel, Neeman and Sly~\cite{Mossel12} established the negative part of the conjecture, proving impossibility of meaningful reconstruction below the threshold. The positive part of the conjecture remained elusive so far: results of Coja-Oghlan~\cite{Coja-oghlan10} imply that a particular spectral method applied to the graph's adjacency matrix achieves non-trivial reconstruction, but this applies only when above the conjectured threshold by a possibly large constant.

In this work we solve the positive part of the conjecture. To that end we introduce a modified adjacency matrix $B$ based on neighborhood expansion. Specifically $B$ counts {\em simple}, or {\em self-avoiding} paths of a given length $\ell$ between pairs of nodes. We then prove that for logarithmic  length $\ell$, the leading eigenvectors of this modified matrix provide a non-trivial reconstruction of the underlying structure, thereby settling the conjecture. A key step in the proof consists in establishing a {\em weak Ramanujan property} of the constructed matrix $B$. Namely, the spectrum of $B$ consists in two leading eigenvalues $\rho(B)$, $\lambda_2$ and $n-2$ eigenvalues of a lower order $O(n^{\epsilon}\sqrt{\rho(B)})$ for all $\epsilon>0$, $\rho(B)$ denoting $B$'s spectral radius. 

Graphs are called Ramanujan when they are $d$-regular with second eigenvalue $\lambda\le 2 \sqrt{d-1}$. Random $d$-regular graphs were shown to have a second largest eigenvalue $\lambda$ of $2\sqrt{d-1}+o(1)$ by Friedman~\cite{friedman08}, thus being {\em almost} Ramanujan. Erd\H{o}s-R\'enyi graphs with average degree $d$ at least logarithmic ($d=\Omega(\log n)$) were shown by Feige and Ofek~\cite{Feige05} to have a second eigenvalue of $O(\sqrt{d})$, a slightly weaker version of the Ramanujan property. 

However this spectrum separation property fails for sparse ($d=O(1)$)  Erd\H{o}s-R\'enyi graphs, whose spectrum is dominated by the presence of high-degree nodes. Our result thus shows that by constructing matrix $B$ through neighborhood expansion, we regularize the original adjacency matrix to eventually recover a weak form of the Ramanujan property.
\end{abstract}
\thispagestyle{empty}
\newpage
\setcounter{page}{1}



\section{Introduction}
\subsection{Background}
Community detection, like clustering, aims to identify groups of similar items from a global population. It is a useful primitive for performing recommendation, e.g. of contents or contacts to users of online social networks. The stochastic block model has been introduced by Holland et al.~\cite{Holland83} to represent interactions between individuals. It consists of a random graph on $n$ nodes, each node $i\in\nn=\{1,\ldots,n\}$ being assigned a type $\sigma_i$ from some fixed set $\Sigma$. Conditionally on node types, edge $(i,j)$ is present with probability $p(\sigma_i,\sigma_j)$ independently of other edges, for some matrix of probabilities $(p(\sigma,\sigma'))$.

It constitutes an adequate testbed for community detection. Indeed the performance of candidate detection schemes, captured by the fraction of nodes $i$ for which estimated types $\hat{\sigma}_i$ and true types $\sigma_i$ coincide, can be compared and analysed on instances of the stochastic block model. Such analyses can in turn suggest new schemes.

Recently Decelle et al.~\cite{Decelle11} conjectured the existence of a phase transition in the sparse regime where the graph's average degree is $O(1)$. Specifically, they predicted that for parameters below a certain threshold, no estimates $\hat{\sigma}_i$ of node types existed that would be positively correlated with true types $\sigma_i$, while above the threshold, belief propagation algorithms could determine estimates $\hat{\sigma}_i$ achieving such a positive correlation. Their conjecture is formulated on a simple symmetric instance of the stochastic block model featuring two node types $\{+1,-1\}$. The phenomenon appears more general though: Heimlicher et al.~\cite{NIPS12} extended the conjecture to the more general setup of labeled stochastic block models.

The study of this phenomenon is important for two reasons. First, by localizing precisely the transition point below which no useful signal is present in the observations, one thus characterizes how much subsampling of the original graph can be performed before all information is lost. Second, algorithms leading to estimates $\hat{\sigma}_i$ that achieve positive correlation all the way down to the transition are expected to constitute more robust approaches than alternatives which would fail before the transition. It is therefore important to determine such algorithms.

The negative part of the conjecture has been proven by Mossel, Neeman and Sly~\cite{Mossel12}. Essentially they established that existence of estimates $\hat{\sigma}_i$ positively correlated with true types $\sigma_i$ would imply feasibility of a reconstruction problem on a random tree model describing the local statistics of the original random graph. However by results of Evans et al.~\cite{evans00} such reconstruction is infeasible below the conjectured transition point.

Until now, positive results in the sparse case did not apply down to the transition point. The best results to date (see \cite{Mossel12}) relied on Coja-Oghlan~\cite{Coja-oghlan10}, showing that spectral clustering applied to the adjacency matrix, suitably trimmed by removal of high degree nodes, yields positively correlated estimates. However this does not apply down to the conjectured threshold. 

This limitation stems from the following fact. Spectral methods perform well on matrices enjoying a spectral separation property, namely the spectrum should comprise a few large eigenvalues whose associated eigenvectors reflect the sought structure and all other eigenvalues should be negligible. The prototype of such separation is the Ramanujan property, according to which $d$-regular graphs have the second eigenvalue $\lambda$ no larger than $2\sqrt{d-1}$ in absolute value. Friedman~\cite{friedman08} established that random $d$-regular graphs almost satisfy this, in that for them $|\lambda|\le 2\sqrt{d-1}+o(1)$. Erd\H{o}s-R\'enyi graphs with average degree $d$ are such that $|\lambda|\le O(\sqrt{d})$, {\em provided} $d=\Omega(\log n)$ (see Feige and Ofek~\cite{Feige05}), but such Ramanujan-like separation is lost for smaller $d$. This lack of separation inherently limits the power of spectral methods in the sparse case.

\subsection{Main results}
We focus on the stochastic block model in Decelle et al.~\cite{Decelle11}. The graph is denoted $\g$, node types (or spins) $\sigma_i$ are uniformly and i.i.d. drawn from $\{-1,+1\}$. An edge is present between any two nodes $i$,$j$ with probability $a/n$ if $\sigma_i=\sigma_j$, and $b/n$ if $\sigma_i=-\sigma_j$, constants $a$ and $b$ being the model parameters. The conjectured transition point is specified by quantity $\tau=(a-b)^2/[2(a+b)]$: for $\tau<1$ it is known that positively correlated detection is impossible; we set out to prove that it is feasible for $\tau>1$. 

We shall make use of the notations $\alpha:=(a+b)/2$, $\beta:=(a-b)/2$. The detectability condition $\tau>1$ can be  restated as 
\begin{equation}\label{c1} 
\beta^2> \alpha.
\end{equation}
As mentioned, Coja-Oghlan regularizes the adjacency matrix of the random graph by removing high degree nodes before applying spectral clustering. In contrast, we regularize the initial data through {\em path expansion}.
Namely, we do not work directly with the adjacency matrix of the orginal graph. Instead we form matrix $B^{(\ell)}$, where $B^{(\ell)}_{ij}$ counts the number of self-avoiding paths of graph edges of  length $\ell$ connecting $i$ to $j$. 

Our main result is then the following
\begin{theorem}\label{th_0}Assume Condition~(\ref{c1}) holds. 
Set the path length parameter $\ell$  such that $\ell\sim c\log(n)$ for a constant $c$ such that $c\log(\alpha)<1/4$. Let $x$ be a normed eigenvector corresponding to the second largest eigenvalue of $B^{(\ell)}$. There exists $t$ such that, defining the spin estimates $\hat{\sigma}_i$ as
\begin{equation}
\hat{\sigma}_i=\left\{\begin{array}{ll}
+1&\hbox{if } x_i\sqrt{n}\ge t,\\
-1&\hbox{otherwise,}
\end{array}
\right.
\end{equation}
the empirical {\em overlap} between the true and estimated spins defined as
\begin{equation}\label{hihi} 
\hbox{ov}(\sigma,\hat{\sigma}):=\frac{1}{n}\sum_{i\in \nn}\sigma_i\hat{\sigma}_i
\end{equation} 
converges in probability to the set $\{-r,+r\}$ for some strictly positive constant $r>0$ as $n\to\infty$.
\end{theorem}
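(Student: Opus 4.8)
\medskip

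\noindent The plan is to pin down the spectrum of the symmetric matrix $B^{(\ell)}$ precisely enough that the statement follows from it. I expect $B^{(\ell)}$ to have two dominant eigenvalues, close to $\alpha^\ell$ and to $\beta^\ell$ --- the first carried by an eigenvector that reflects the local ``mass'' of $\g$ around each vertex, the second by an eigenvector that reflects the hidden bipartition --- while the remaining $n-2$ eigenvalues are $O(n^\epsilon\sqrt{\alpha^\ell})$ in absolute value for every $\epsilon>0$; this last point is the \emph{weak Ramanujan property} of the abstract. Note that Condition~(\ref{c1}) forces $\alpha>1$ (from $|\beta|\le\alpha$ one gets $\alpha<\beta^2\le\alpha^2$), so all these eigenvalues diverge with $n$, and that for $\ell\sim c\log n$ the ``signal'' eigenvalue has modulus $n^{c\log|\beta|}$ while the bulk is of size $n^{\epsilon+(c/2)\log\alpha}$; since $\log|\beta|>\tfrac12\log\alpha$, for small enough $\epsilon$ the signal eigenvalue exceeds the bulk by a polynomial factor $n^{\Omega(1)}$. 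Granting this picture, eigenvector perturbation places the community eigenvector $x$ essentially in the signal direction, and a threshold on the entries of $\sqrt n\,x$ recovers $\sigma$ up to a global sign with positive overlap. I would proceed in three steps.

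\medskip

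\noindent\emph{Step 1: local geometry and the two signal vectors.} First I would use that $c\log\alpha<1/4$ is equivalent to $\alpha^{4\ell}=o(n)$, which ensures that for all but $o(n)$ vertices $i$ the ball of radius $2\ell$ around $i$ in $\g$ is a tree, and that the exploration of $\g$ from $i$ couples, to depth $O(\ell)$, with a two-type Poisson Galton--Watson tree whose mean offspring matrix has $a/2$ on the diagonal and $b/2$ in the off-diagonal entries, with Perron eigenvalue $\alpha$ and second eigenvalue $\beta$. On such a tree the number of depth-$\ell$ vertices equals $\alpha^\ell W_i(1+o(1))$ and the signed count $\sum_{v:\,d(i,v)=\ell}\sigma_i\sigma_v$ equals $\beta^\ell W^\sigma_i(1+o(1))$, where $W_i$ and $W^\sigma_i$ are the limits of the corresponding Kesten--Stigum martingales, both of mean $1$. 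The martingale attached to $\beta$ is non-degenerate \emph{precisely} under Condition~(\ref{c1}), and then $\PP[W^\sigma_i>0\mid\sigma_i=+1]=:\rho>1/2$ --- this is exactly the tree-reconstruction solvability of Evans et al.~\cite{evans00}. Transcribed to $B^{(\ell)}$, the vectors $w:=(W_i)_i$ and $w^\sigma:=(W^\sigma_i\sigma_i)_i$ would be approximate eigenvectors of $B^{(\ell)}$ for the eigenvalues $\alpha^\ell$ and $\beta^\ell$ respectively, with relative errors negligible against the spectral gap obtained in Step 2; moreover $\langle w^\sigma,\sigma\rangle=\Theta(n)$ while $\|w^\sigma\|^2=\Theta(n)$, so $w^\sigma$ is genuinely --- but, $W^\sigma$ being non-constant, not perfectly --- correlated with $\sigma$.

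\medskip

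\noindent\emph{Step 2: the weak Ramanujan bound.} Let $\tilde B$ denote $B^{(\ell)}$ with the spectral projection onto $\mathrm{span}(w,w^\sigma)$ subtracted. I would then show $\|\tilde B\|=O(n^\epsilon\sqrt{\alpha^\ell})$ --- equivalently, the $n-2$ eigenvalues of $B^{(\ell)}$ outside the signal pair are $O(n^\epsilon\sqrt{\alpha^\ell})$ in absolute value --- for every $\epsilon>0$, with probability $1-o(1)$. The natural instrument is a moment method: bound $\EE\,\mathrm{tr}(\tilde B^{\,2k})$ and apply Markov's inequality with $k\sim\log n$, using $\|\tilde B\|^{2k}\le\mathrm{tr}(\tilde B^{\,2k})$. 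A term of this trace is indexed by a closed tour $i_0,i_1,\dots,i_{2k}=i_0$ in $\nn$ together with, for each edge of the tour, a self-avoiding path of length $\ell$ in $\g$ joining its endpoints; the union of these $2k$ paths is a subgraph with at most $2k\ell$ edges whose presence has probability a product of $O(1/n)$ factors. Self-avoidance of each path, together with the cancellations produced by subtracting the rank-two part, should force the surviving configurations to be those in which every edge of the union subgraph is traversed at least twice; a careful enumeration of such configurations and of their (small) probabilities --- and it is here that the slack in $c\log\alpha<1/4$ is spent, keeping every relevant power of $\alpha^\ell$ comfortably below $\sqrt n$ and rendering configurations with ``excess'' edges negligible --- should give $\EE\,\mathrm{tr}(\tilde B^{\,2k})\le n\,(n^\epsilon\sqrt{\alpha^\ell})^{2k}$, whence the bound. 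I expect this step to be the main obstacle by a wide margin: unlike the adjacency matrix, $B^{(\ell)}$ is a high-degree and strongly non-local functional of $\g$ --- a count of counts --- so its powers unfold into an elaborate sum over systems of self-avoiding paths, and retaining the \emph{sharp} order $n^\epsilon\sqrt{\alpha^\ell}$ rather than a crude bound requires a precise classification of the union subgraphs that can occur together with tight estimates on their number and likelihood; this is exactly where the restriction $\ell<\log n/(4\log\alpha)$ is used.

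\medskip

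\noindent\emph{Step 3: conclusion.} Steps 1 and 2 with Weyl's inequality would give $\lambda_1(B^{(\ell)})=\alpha^\ell(1+o(1))$, the community eigenvalue equal to $\beta^\ell(1+o(1))$ (the second largest when $\beta>0$; when $\beta<0$ it is the smallest eigenvalue, and one takes $x$ accordingly), and all other $n-2$ eigenvalues $O(n^\epsilon\sqrt{\alpha^\ell})$ in modulus. Since $\beta^2>\alpha$ the spectral gap separating the community eigenvalue from both $\alpha^\ell$ and the bulk is $\Omega(|\beta|^\ell)$ up to a factor $n^{o(1)}$, so the Davis--Kahan $\sin\theta$ theorem gives $|\langle x,\,w^\sigma/\|w^\sigma\|\rangle|=1-o(1)$. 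Hence $\sqrt n\,x$ lies within Euclidean distance $o(\sqrt n)$ of $\pm\sqrt n\,w^\sigma/\|w^\sigma\|$, a vector proportional to $(W^\sigma_i\sigma_i)_i$ whose $i$-th entry has the sign of $\sigma_i$ on a $\rho$-fraction of $\nn$; choosing the threshold $t$ appropriately, $\hat\sigma$ then agrees with $\sigma$, or with $-\sigma$, on a $\rho-o(1)$ fraction of $\nn$, so $\mathrm{ov}(\sigma,\hat\sigma)$ converges in probability to $\{-r,+r\}$ for some $r>0$ (heuristically $r=2\rho-1$).
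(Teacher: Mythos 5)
Your high-level plan --- couple the local structure with a two-type Galton--Watson tree, establish a weak Ramanujan property for $B^{(\ell)}$, then finish by eigenvector perturbation --- is indeed the shape of the paper's argument, and your Step~3 is essentially the standard Davis--Kahan endgame the paper also uses (Theorem~\ref{spectral_thm}). But there is a genuine gap at the heart of the proof, and a smaller one at the end of Step~1.

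The gap is Step~2. You propose applying the trace method to $\tilde B = B^{(\ell)} - P$, where $P$ is the spectral projection onto $\mathrm{span}(w,w^\sigma)$. This cannot work as sketched: $P$ is a \emph{random} projection (the vectors $w,w^\sigma$ depend on $\g$), so the entries of $\tilde B$ are not polynomials in the edge indicators, the trace $\mathrm{tr}(\tilde B^{2k})$ does not unfold into the clean ``closed tour with self-avoiding path segments'' combinatorics you describe, and there is no mechanism for the ``cancellations produced by subtracting the rank-two part'' to materialize. In effect you would be trying to bound $\mathrm{tr}((B^{(\ell)})^{2k}) - \lambda_1^{2k} - \lambda_2^{2k}$, a difference of two large random quantities, with no algebraic handle on the subtracted part. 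The paper's way around this is the decomposition of Theorem~\ref{global}: write $B^{(\ell)}=\Delta^{(\ell)}+\sum_{m}\Delta^{(\ell-m)}\bar A B^{(m-1)}-\sum_{m}\Gamma^{\ell,m}$, where $\Delta^{(m)}$ counts self-avoiding paths with the \emph{deterministic} centering $A-\bar A$ and $\bar A$ (given in~(\ref{bara})) is the deterministic rank-2 conditional mean of the adjacency matrix. The trace method then applies cleanly to $\Delta^{(m)}$ and to the error matrices $\Gamma^{\ell,m}$ because they \emph{are} polynomials in the centered edge indicators (Propositions~\ref{trace_delta} and~\ref{trace_gamma}). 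The ``deflation'' is handled through $\bar A$ instead of a spectral projection: for $x\perp B^{(\ell)}e$ and $x\perp B^{(\ell)}\sigma$, the quasi-deterministic growth of Theorem~\ref{lemma_LD_bounds} gives $B^{(m)}e\approx\alpha^{m-\ell}B^{(\ell)}e$ and $B^{(m)}\sigma\approx\beta^{m-\ell}B^{(\ell)}\sigma$, hence $e'B^{(m)}x$ and $\sigma'B^{(m)}x$ are small, hence $|\bar A B^{(m-1)}x|$ is small (Corollary~\ref{cor_int}). That chain of ideas --- recursive expansion through $\bar A$, plus growth control transported from $\{S_t,D_t\}$ to $\{B^{(m)}e,B^{(m)}\sigma\}$ --- is what is missing from your proposal and is the actual technical core of the paper.

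A smaller point on Step~1: the claim $\rho:=\PP[W^\sigma_i>0\mid\sigma_i=+1]>1/2$ does not follow from $\EE W^\sigma_i=1$ (a mean-one variable can be negative more than half the time), and it is not literally the statement of Evans et al.\ (which is about existence of \emph{some} estimator beating a coin flip, not the positivity of the median of this particular martingale limit). The paper avoids committing to $t=0$ and instead observes that $\EE(X)=\int_0^\infty[\PP(X\ge x)-\PP(-X\ge x)]\,dx=1>0$ forces $\PP(X\ge x)-\PP(-X\ge x)>0$ on a set of positive measure, then picks a threshold $x$ that is an atom of neither $X$ nor $-X$. Your heuristic $r=2\rho-1$ with $t=0$ would need a separate argument that $0$ is neither an atom nor a point where the difference is nonpositive.
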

It proves the positive part of Decelle et al.'s conjecture. It further identifies a specific spectral method based on the path-expanded matrix $B^{(\ell)}$. An auxiliary result consists in showing that matrix $B^{(\ell)}$ enjoys a spectral separation property, that is a weak version of the Ramanujan property. Namely, denoting by $\rho(B^{(\ell)})$ the spectral radius of $B^{(\ell)}$, we show that the third largest eigenvalue $\lambda$ of matrix $B^{(\ell)}$ satisfies for all positive constant $\epsilon$:
$$
|\lambda|\le n^{\epsilon}\sqrt{\rho(B^{(\ell)})}.
$$
We note that computation of $B^{(\ell)}$ and hence of the $\hat{\sigma}_i$ can be done in polynomial time: as shown in Lemma~\ref{cycles} the $\ell$-neighborhood of any $i$ contains at most one cycle so that each $B^{(\ell)}_{ij}$ is readily evaluated by suitable breadth-first search.
\subsection{Paper organization} 
Section~\ref{sec:2} contains the intermediate results involved in the proof and how they combine to establish Theorem~\ref{th_0}. Section~\ref{sec:3} proves Theorem~\ref{global}, which expresses matrix $B^{(\ell)}$ as an expansion in terms of the matrices $B^{(m)}$, $m<\ell$, together with bounds on the spectral norm of the matrix coefficients involved.
Section~\ref{sec:4} contains the so-called ``local analysis'' of node neighborhoods. Specifically it gives controls on the vectors $B^{(m)}e$ and $B^{(m)}\sigma$, where $e$ is the all-ones vector and $\sigma$ is the vector of spins, establishing a quasi-deterministic growth pattern with respect to $m$. Section~\ref{sec:conclusions} concludes. 
\section{Proof structure}\label{sec:2}
Our key objective is to determine the spectral structure of $B^{(\ell)}$. Specifically we wish to establish
\begin{theorem}\label{th_1} 
Assume (\ref{c1}) and $\ell=c\log n$ with $c\log(\alpha)<1/4$. 

(i) The leading eigenvalue of $B^{(\ell)}$ is up to logarithmic factors $\Theta(\alpha^{\ell})$, with corresponding eigenvector asymptotically parallel to $B^{(\ell)}e$. 

(ii) Its second eigenvalue is $\Omega(\beta^{\ell})$ up to logarithmic factors, with corresponding eigenvector asymptotically parallel to $B^{(\ell)}\sigma$. 

(iii) There is a random variable $X$ with unit mean and variance $1/(\beta^2/\alpha-1)$ such that for all $x$ that is an atom of neither $X$'s nor $-X$'s distribution, the following convergence in probability holds for any normed vector $y$ asymptotically aligned with $B^{(\ell)}\sigma$:
\begin{equation}\label{hoho}
\frac{1}{n}\sum_{i=1}^{n}\II\left\{y_i\ge \frac{x}{\sqrt{n \EE(X^2)}}\right\}\to \frac{1}{2}\left[\PP(X\ge x)-\PP(-X\ge x)\right].
\end{equation}

(iv) For any $\epsilon>0$, all other eigenvalues are of order $n^{\epsilon}\sqrt{\alpha^{\ell}}$.
\end{theorem}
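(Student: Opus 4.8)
\medskip
\noindent\textbf{Proof plan.} The plan is to feed the global expansion of Theorem~\ref{global} into a perturbation argument whose inputs are the local estimates of Section~\ref{sec:4}. Those estimates provide, with high probability and uniformly over $m\le\ell$, the quasi-deterministic relations $B^{(m)}e=\alpha^m\varphi+o(\cdot)$ and $B^{(m)}\sigma=\beta^m\psi+o(\cdot)$ for fixed vectors $\varphi,\psi$ of Euclidean norm $\Theta(\sqrt n)$ up to polylogarithmic factors, with $\varphi$ and $\psi$ asymptotically orthogonal ($\langle\varphi,\psi\rangle=\otilde(\sqrt n)$), and with the empirical distribution of the coordinates of $\psi$ converging to the law of $\sigma X$, where $\sigma$ is an independent Rademacher sign and $X$ is the martingale limit of the signed generation size in the two-type Galton--Watson tree approximating an $\ell$-neighborhood; this approximation is legitimate because that neighborhood is unicyclic by Lemma~\ref{cycles}, the stray cycle feeding only the $o(\cdot)$ terms. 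The announced value $\mathrm{Var}(X)=1/(\beta^2/\alpha-1)$ is the limit of $v_m/\beta^{2m}$, where $v_m=\mathrm{Var}(D_m)$ for $D_m$ the signed generation-$m$ size satisfies the recursion $v_{m+1}=\alpha^{m+1}+\beta^2 v_m$, $v_0=0$.

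\medskip
\noindent\textbf{Leading spectrum, (i)--(iii).} As $B^{(\ell)}$ is symmetric, $\lambda_1(B^{(\ell)})\ge\langle e,B^{(\ell)}e\rangle/\|e\|^2$, the mean number of self-avoiding $\ell$-paths issuing from a vertex, which the local analysis pins at $\Theta(\alpha^\ell)$ up to polylog, the matching upper bound $\otilde(\alpha^\ell)$ being a by-product of (iv); this gives the lower half of (i). For (ii) I would apply Courant--Fischer with the test vector $\hat\sigma=\sigma-\langle\sigma,u_1\rangle u_1$, $u_1$ denoting the leading eigenvector: by (iv), $u_1$ is essentially parallel to $\varphi$, and since $\langle\sigma,\varphi\rangle=\otilde(\sqrt n)$ we get $\hat\sigma=\sigma+o(\|\sigma\|)$ and $\langle\hat\sigma,B^{(\ell)}\hat\sigma\rangle=\langle\sigma,B^{(\ell)}\sigma\rangle+o(n|\beta|^\ell)$, of magnitude $\Theta(n|\beta|^\ell)$; hence the second eigenvalue of $B^{(\ell)}$ (second largest in modulus) has magnitude $\Omega(|\beta|^\ell)$ up to polylog. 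Condition~(\ref{c1}) makes $|\beta|^\ell/\alpha^{\ell/2}=(\beta^2/\alpha)^{\ell/2}$ outgrow every fixed $n^\epsilon$ when $\ell\sim c\log n$ with $c\log\alpha<1/4$, so (iv) supplies a spectral gap between the top two eigenvalues and the rest; a Davis--Kahan estimate then confines the first two eigenvectors to the effectively two-dimensional space $\mathrm{span}(\varphi,\psi)$, and comparing with the Rayleigh quotients above identifies the first with $B^{(\ell)}e$ and the second with $B^{(\ell)}\sigma$. Statement (iii) is the image, under this last identification and the normalization of (\ref{hoho}), of the limiting empirical law of the coordinates of $\psi$.

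\medskip
\noindent\textbf{The weak Ramanujan bound (iv): the crux.} Set $V=\mathrm{span}\{B^{(m)}e,\ B^{(m)}\sigma:\ 0\le m\le\ell\}$, of dimension $O(\log n)$, and $H=V^\perp$; by Cauchy interlacing it suffices to show $\|P_H B^{(\ell)}P_H\|=\otilde(n^\epsilon\alpha^{\ell/2})$. Substituting the expansion of Theorem~\ref{global}, namely $B^{(\ell)}=\sum_{m<\ell}C_{\ell,m}B^{(m)}+(\text{negligible})$ with $\|C_{\ell,m}\|=\otilde(n^\epsilon\alpha^{(\ell-m)/2})$, I would split each factor as $B^{(m)}=L_m+S_m$, with $L_m$ the part of rank at most $2$ carried by $\varphi,\psi$ and $S_m$ the remainder. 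The local analysis applied to the coefficients shows each $C_{\ell,m}$ maps $\mathrm{span}(\varphi,\psi)$ into itself up to negligible error, so $C_{\ell,m}L_m$ is, whatever its operator norm, supported in $V$ up to negligible error and hence annihilated by the outer $P_H$; while $\|P_H C_{\ell,m}S_m P_H\|\le\|C_{\ell,m}\|\,\|S_m\|=\otilde(n^{2\epsilon}\alpha^{\ell/2})$ using the inductive bound $\|S_m\|=\otilde(n^\epsilon\alpha^{m/2})$. Summing over the $O(\log n)$ indices $m$ and relabelling $\epsilon$ closes the induction on $\ell$; its base case is small $\ell$, where every term has $m$ small enough that $\|S_m\|=\otilde(\mathrm{polylog})\le\otilde(\alpha^{m/2})$ holds outright. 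Together with (i) this yields $\rho(B^{(\ell)})=\Theta(\alpha^\ell)$ up to polylog, so the bound may be written $\otilde(n^\epsilon\sqrt{\rho(B^{(\ell)})})$ as stated.

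\medskip
\noindent\textbf{Where the difficulty lies.} Everything hinges on two hard points. The first is Theorem~\ref{global} itself: its coefficient matrices must have operator norm only $\otilde(\alpha^{(\ell-m)/2})$, the \emph{square root} of the naive $\alpha^{\ell-m}$, and establishing this gain --- via a moment/trace estimate on products of centered adjacency and degree matrices --- is exactly where the self-avoiding constraint earns its keep, by forbidding the star-shaped walks around high-degree vertices that otherwise inflate the spectrum of the bare adjacency matrix. The second is carrying the local estimates and the inductive spectral bound through the $\Theta(\log n)$ levels of the expansion without the polylog and $n^\epsilon$ losses compounding: the high-degree pathologies that make the weak Ramanujan property genuinely \emph{fail} at small $\ell$ must be shown to survive only inside the remainders $S_m$ at polylogarithmic size --- harmless because $\sqrt{\Delta_{\max}}=\otilde(1)$ in this degree regime --- and then to be crushed by the smallness of the coefficients $C_{\ell,m}$. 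That gluing is the step I expect to be the main obstacle.
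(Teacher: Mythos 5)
Your high-level architecture matches the paper: the expansion of Theorem~\ref{global} together with the local quasi-deterministic growth of Theorem~\ref{lemma_LD_bounds} drives a weak Ramanujan bound, which combined with Rayleigh-quotient estimates and a Davis--Kahan type argument yields (i)--(ii), while (iii) comes from the tree coupling and martingale analysis of Section~\ref{sec:coupling}; your identification of $\mathrm{Var}(X)=1/(\beta^2/\alpha-1)$ via the variance recursion is exactly Lemma~\ref{martingales1}/Corollary~\ref{corollary_martingales}.

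The gap is in your proof of (iv). You assert that each $C_{\ell,m}=\Delta^{(\ell-m)}\bar{A}$ ``maps $\mathrm{span}(\varphi,\psi)$ into itself up to negligible error,'' which is what you need to kill $P_H C_{\ell,m} L_m P_H$. But $\Delta^{(\ell-m)}$ is precisely the centered, noise part of the path count; applying $\bar A$ to $\varphi,\psi$ lands you (approximately) in $\mathrm{span}(e,\sigma)$, but then $\Delta^{(\ell-m)}e$ and $\Delta^{(\ell-m)}\sigma$ are generic random vectors of norm up to $\otilde(n^{1/2+\epsilon}\alpha^{(\ell-m)/2})$ with no structural reason to lie near $V$ --- the whole point of centering is that they do not. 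As a result $P_H C_{\ell,m}$ is not small on $\mathrm{span}(\varphi,\psi)$, and the leading-order piece $L_m$ of $B^{(m)}$ contributes a term of order $\otilde(\alpha^{(\ell+m)/2})\gg\alpha^{\ell/2}$ that your split does not control. Relatedly, your projected space $V$ has dimension $O(\log n)$, so Cauchy interlacing would only bound eigenvalues from rank $O(\log n)+1$ on, not from rank $3$ as claimed.

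The paper avoids both problems by never asking $\Delta^{(\ell-m)}$ to respect any subspace, and by imposing orthogonality only to the two vectors $B^{(\ell)}e,B^{(\ell)}\sigma$. From the expansion~(\ref{matrix_expansion}) it writes
$|B^{(\ell)}x|\le \rho(\Delta^{(\ell)})+\sum_m\rho(\Delta^{(\ell-m)})\,|\bar A B^{(m-1)}x|+\sum_m\rho(\Gamma^{\ell,m})$,
so $\Delta^{(\ell-m)}$ enters only through its operator norm $\otilde(n^\epsilon\alpha^{(\ell-m)/2})$; then $|\bar A B^{(m-1)}x|$ is bounded directly because $\bar A$ has range essentially $\mathrm{span}(e,\sigma)$ with scalar coefficients $n^{-1}e'B^{(m-1)}x=n^{-1}(B^{(m-1)}e)'x$ and $n^{-1}\sigma'B^{(m-1)}x$, which are small by Corollary~\ref{cor_int}: the quasi-deterministic growth $B^{(m-1)}e\approx\alpha^{m-1-\ell}B^{(\ell)}e$ transports orthogonality at level $\ell$ down to every level $m$ in one step. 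There is no induction on $\ell$, and the compounding of polylog/$n^\epsilon$ losses through $\Theta(\log n)$ levels that you flag as the second difficulty never arises. This direct mechanism --- $\bar A$ collapses onto a fixed two-dimensional space with scalars controlled by the growth estimates, $\Delta$ is bounded only in operator norm --- is the missing idea in your proposal.
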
 
Before we describe the steps used to establish this, let us verify how it implies Theorem~\ref{th_0}. Note that since $\EE(X)=1$, writing
$$
\EE(X)=\int_0^{\infty}\left(\PP(X\ge x)-\PP(-X\ge x)\right)dx,
$$
we see that inequality $\PP(X\ge x)-\PP(-X\ge x)>0$ must hold on a set of $x$'s of positive Lebesgue measure. Since the points $x$ at which the distribution of either $X$ or $-X$ has an atom is at most countable, there thus exists an $x$ at which neither distribution has an atom, and the desired inequality $\PP(X\ge x)-\PP(-X\ge x)>0$ holds. Letting $t=x/\sqrt{\EE(X^2)}$ and $r=\PP(X\ge x)-\PP(-X\ge x)$ we readily have by (\ref{hoho}) that the empirical overlap in (\ref{hihi}) must converge to $\{-r,+r\}$.

Theorem~\ref{th_1} will follow from the combination of two analyses. Let $\bar{A}$ denote the expectation of the graph's adjacency matrix conditional on the spin vector $\sigma$, that is
\begin{equation}\label{bara}
\bar{A}=\frac{a}{n}\left[\frac{1}{2}(e e'+\sigma \sigma')-I\right]+\frac{b}{2n}(e e'-\sigma\sigma').
\end{equation}
The first analysis establishes the following
\begin{theorem}\label{global}
Matrix $B^{(\ell)}$ verifies the identity
\begin{equation}\label{matrix_expansion} 
B^{(\ell)}=\Delta^{(\ell)}+\sum_{m=1}^{\ell}(\Delta^{(\ell-m)}\bar{A} B^{(m-1)})-\sum_{m=1}^{\ell}\Gamma^{\ell,m},
\end{equation}
for matrices $\Delta^{(\ell)}$, $\Gamma^{\ell,m}$ such that for $\ell=O(\log n)$ and any fixed $\epsilon>0$, with high probability
\begin{eqnarray}\label{rho_delta} 
\rho(\Delta^{(\ell)})\le n^{\epsilon} \alpha^{\ell/2},\\
\label{rho_gamma} 
\rho(\Gamma^{\ell,m})\le n^{\epsilon-1}\alpha^{(\ell+m)/2},\;m=1,\ldots,\ell.
\end{eqnarray} 
\end{theorem}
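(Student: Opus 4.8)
The plan is to derive the identity (\ref{matrix_expansion}) by a combinatorial inclusion–exclusion argument over paths, and then control each remainder term by a moment/trace computation.

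\textbf{Step 1: the path identity.} Write $A$ for the adjacency matrix of $\g$. The product $(A^\ell)_{ij}$ counts all walks of length $\ell$, whereas $B^{(\ell)}_{ij}$ counts only self-avoiding ones. I would classify a walk $i=v_0,v_1,\dots,v_\ell=j$ by the first time it fails to be self-avoiding, i.e. the smallest index $m$ such that $v_m\in\{v_0,\dots,v_{m-1}\}$ (with $m=\infty$ if the walk is self-avoiding). Decomposing on this $m$, and on the value of $v_m$, expresses $A^\ell$ as $B^{(\ell)}$ plus correction terms that involve a self-avoiding prefix $B^{(m-1)}$ from $i$, one more edge from $A$, and then an arbitrary walk of length $\ell-m$. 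Writing $A = \bar A + (A-\bar A)$ and iterating, one peels off the ``mean-field'' part $\bar A$ at each stage; what remains after using the expected-adjacency recursion are two families of error matrices: $\Delta^{(\ell)}$, which collects the centred increments $(A-\bar A)$ composed with self-avoiding path-count matrices, and $\Gamma^{\ell,m}$, which collects the genuinely non-self-avoiding walk contributions (the ones that revisit a vertex), of which there is one per ``break point'' $m$. This bookkeeping is somewhat delicate but is essentially forced once the recursion is set up; I would organize it so that $\Delta^{(\ell)}$ is a telescoping-type sum $\sum_m \Delta^{(\ell-m)}(A-\bar A)B^{(m-1)}$ defined recursively, and $\Gamma^{\ell,m}$ is exactly the count of length-$\ell$ walks whose first repeat occurs at step $m$ and which are otherwise unconstrained after that.

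\textbf{Step 2: spectral norm of $\Delta^{(\ell)}$.} The bound (\ref{rho_delta}) is the heart of the matter. Since $\Delta^{(\ell)}$ is built from the centred matrix $A-\bar A$, I would estimate $\rho(\Delta^{(\ell)})=\|\Delta^{(\ell)}\|$ via a high trace moment: $\EE\,\mathrm{tr}\big((\Delta^{(\ell)}(\Delta^{(\ell)})')^{k}\big)$ for $k$ growing like a power of $\log n$, then Markov's inequality. Expanding the trace produces a sum over closed sequences of paths in $\g$; the key point is that because each factor carries a centred increment $A-\bar A$, any edge used an odd number of times kills the expectation, so only configurations in which every edge appears at least twice survive. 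Counting such configurations — the standard ``even multigraph'' enumeration underlying sparse random matrix bounds à la Füredi–Komlós and Feige–Ofek — yields, after taking the $2k$-th root, a bound of the form $(\text{poly}\log n)\,\alpha^{\ell/2}$, which gives $n^\epsilon\alpha^{\ell/2}$ for any fixed $\epsilon>0$ once $\ell=O(\log n)$ and $k$ is chosen appropriately. The crucial cancellation is that the self-avoiding constraint inside the $B^{(m-1)}$ factors, together with the fact that the $\ell$-neighborhood of any node is almost a tree (Lemma~\ref{cycles}), prevents the ``high-degree vertex'' configurations that spoil the ordinary adjacency matrix in the sparse regime: each vertex appearing in the trace expansion contributes a controlled factor $\alpha$ rather than a heavy-tailed degree.

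\textbf{Step 3: spectral norm of $\Gamma^{\ell,m}$.} For (\ref{rho_gamma}), the matrices $\Gamma^{\ell,m}$ count length-$\ell$ walks that revisit a vertex at step $m$; such walks contain a cycle, and by Lemma~\ref{cycles} the $\ell$-neighborhood of any node contains at most one cycle, so these walks are rare. I would bound $\rho(\Gamma^{\ell,m})\le \|\Gamma^{\ell,m}\|_F$ or again use a moderate trace moment: the extra factor $n^{-1}$ compared with $\Delta^{(\ell)}$ reflects the probability, $O(1/n)$, that a uniformly explored neighborhood closes up into a cycle, while the exponent $\alpha^{(\ell+m)/2}$ tracks the two path segments (the self-avoiding prefix of length $\sim m$ and the remaining walk of length $\sim \ell-m$), each contributing half its length in powers of $\alpha^{1/2}$ after the even-multiplicity analysis. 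Summing the $\ell=O(\log n)$ terms only costs a further logarithmic factor, absorbed into $n^\epsilon$.

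\textbf{Main obstacle.} I expect Step 2 to be the real difficulty: setting up the trace expansion so that the self-avoiding constraints inside the $B^{(m-1)}$ blocks can be handled in the combinatorial enumeration (one cannot simply drop them, since that would reintroduce the $O(1)$-degree pathologies), and simultaneously keeping track of the recursive definition of $\Delta^{(\ell)}$ so that the counting is not circular. Getting the exponent of $\alpha$ exactly $\ell/2$ — rather than something slightly larger — requires that every surviving configuration in the moment expansion be a tree-like ``doubled'' structure, which is where Lemma~\ref{cycles} and the sparsity of the neighborhoods must be used carefully.
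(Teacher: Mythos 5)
Your high-level plan — derive a path decomposition, then control remainders by a trace-moment estimate — matches the paper in spirit, but there are concrete gaps in both halves.

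\textbf{The decomposition.} You propose going from $A^\ell$ to $B^{(\ell)}$ by conditioning on the first time a walk revisits a vertex, and you describe $\Delta^{(\ell)}$ as a recursive telescoping sum $\sum_m \Delta^{(\ell-m)}(A-\bar A)B^{(m-1)}$ and $\Gamma^{\ell,m}$ as counting walks with first repeat at step $m$ ``otherwise unconstrained after that.'' Neither matches what the identity (\ref{matrix_expansion}) actually needs. The paper never expands $A^\ell$: it writes $B^{(\ell)}_{ij}$ directly as a sum over self-avoiding paths $P_{ij}$ of $\prod_t A_{i_{t-1}i_t}$, applies the elementary telescoping identity
$\prod_t A = \prod_t(A-\bar A) + \sum_m\bigl[\prod_{t\le\ell-m}(A-\bar A)\bigr]\,\bar A\,\bigl[\prod_{t\ge\ell-m+2}A\bigr]$,
and defines $\Delta^{(\ell)}_{ij}$ \emph{directly} (not recursively) as $\sum_{i_0^\ell\in P_{ij}}\prod(A-\bar A)_{i_{t-1}i_t}$. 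The middle telescoping term sums over $P_{ij}$, while the matrix product $\Delta^{(\ell-m)}\bar A B^{(m-1)}$ sums over the strictly larger set $Q^m_{ij}$ of concatenations of \emph{two self-avoiding} segments; $\Gamma^{\ell,m}$ is the correction over $Q^m_{ij}\setminus P_{ij}$, i.e. pairs of self-avoiding halves that \emph{intersect}. Your characterization of the suffix as unconstrained would produce a different (and much larger) error term, and the recursive definition of $\Delta^{(\ell)}$ with $(A-\bar A)$ in the middle does not yield the stated formula with $\bar A$ in the middle. As described, your bookkeeping does not reconstruct (\ref{matrix_expansion}).

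\textbf{The trace bound.} You correctly identify that centering kills configurations with a singly-traversed edge, and you correctly flag the circuit enumeration as ``the real difficulty.'' But that enumeration is precisely the content of Propositions~\ref{trace_delta} and~\ref{trace_gamma}, and you leave it unaddressed beyond citing Füredi--Komlós/Feige--Ofek as a black box. The key input in the paper is a tailored encoding of circuits that exploits the fact that each length-$\ell$ block is a self-avoiding path: one breaks each block into (tree backtrack)/(fresh discoveries)/(one cycle edge), and crucially the self-avoiding constraint bounds each redundant edge's number of traversals by $2k$, giving the count $[(v+1)^2(\ell+1)]^{2k(1+e-v+1)}$. This is what controls the exponent $\alpha^{\ell/2}$ and avoids the sparse-regime degree pathologies. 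Your appeal to Lemma~\ref{cycles} for this purpose is misplaced: Lemma~\ref{cycles} concerns cycles in $\ell$-neighborhoods of $\g$ and is used elsewhere (Lemma~\ref{det_grow2}, Corollary~\ref{cor_int}); the trace bounds are proved without it. Likewise for $\Gamma^{\ell,m}$: a Frobenius-norm bound is almost surely too lossy to get the factor $n^{-1}$, and the paper again uses a trace/encoding argument, with the $n^{-2k}$ coming from the $2k$ $\bar A$-edges in each circuit, not from a ``probability of a cycle'' heuristic. In short, the structure of the decomposition and the circuit-encoding combinatorics — the two load-bearing parts — are each either misdescribed or left open.
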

A {\em local} analysis is then needed to establish properties of the $\ell$-neighborhoods of nodes in graph $\g$. 
The key quantities in this analysis are the following
\begin{equation}
\begin{array}{ll}
S_t(i)&=|\{j: d_{\g}(i,j)=t\}|,\\
D_t(i)&=\sum_{j}\II_{d_{\g}(i,j)=t}\sigma_j.
\end{array}
\end{equation}
They are close (in a sense made precise in Section~\ref{sec:4}) to the corresponding quantities $(B^{(t)}e)_i$, $(B^{(t)}\sigma)_i$, and are easier to analyze. In particular, they enjoy a {\em quasi-deterministic growth} property:
\begin{theorem}\label{lemma_LD_bounds}
Under Condition~(\ref{c1}) for some suitable constants $C$ and $\epsilon$, with probability $1-O(n^{-\epsilon})$ the following holds for all $i\in \nn$ and $\ell=O(\log n)$:
\begin{equation}\label{control_1}
\begin{array}{ll}
S_t(i)&\le C\log(n) \alpha^t,\; t=1,\ldots,\ell,\\
|D_t(i)|&\le C\log(n)\beta^t,\; t=1,\ldots,\ell.
\end{array}
\end{equation}
\begin{equation}\label{control_2}
\begin{array}{ll}
S_t(i)&= \alpha^{t-\ell}S_{\ell}(i)+O(\log(n)+\sqrt{\log(n)\alpha^{t}}),\; t=1,\ldots,\ell,\\
D_t(i)&=\beta^{t-\ell}D_{\ell}(i)+ O(\log(n)+\sqrt{\log(n)\alpha^{t}}),\; t=1,\ldots,\ell.
\end{array}
\end{equation}
\end{theorem}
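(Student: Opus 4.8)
The starting point is the observation, made rigorous in Lemma~\ref{cycles}, that the $\ell$-neighborhood of any node contains at most one cycle with high probability. This means that, up to a negligible correction coming from the (at most one) extra edge, the $\ell$-neighborhood of a fixed node $i$ is a tree, and the sphere sizes $S_t(i)$ and signed sphere sums $D_t(i)$ can be generated by a branching-type recursion. Concretely, I would condition on the exploration process: having revealed the nodes at distance $\le t$, each node $j$ at distance $t$ spawns new neighbors among the unrevealed vertices, and the number of new same-spin (resp.\ opposite-spin) neighbors is, conditionally, a sum of $\sim n/2$ independent Bernoulli$(a/n)$ (resp.\ Bernoulli$(b/n)$) variables, hence approximately Poisson with mean $a/2$ (resp.\ $b/2$). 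Writing $Z_j^+$ and $Z_j^-$ for these counts, one gets the recursions $S_{t+1}(i)=\sum_{j\in\text{sphere }t}(Z_j^++Z_j^-)$ and $D_{t+1}(i)=\sum_{j}\sigma_j(Z_j^+-Z_j^-)$, where $\EE[Z_j^++Z_j^-\mid\cdot]\approx\alpha$ and $\EE[\sigma_j(Z_j^+-Z_j^-)\mid\cdot]\approx\beta\sigma_j$. Thus $S_t$ grows like a supercritical Galton--Watson process with mean offspring $\alpha$, and $D_t$ like the associated signed/martingale-limit process with mean $\beta$.

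The bounds \eqref{control_1} are the ``no anomalously large neighborhood'' estimates. For $S_t$, I would run a supermartingale argument on $\alpha^{-t}S_t(i)$: since each generation multiplies the expected size by $\alpha$, the process $M_t:=\alpha^{-t}S_t(i)$ is a supermartingale (modulo the small depletion and cycle corrections), and a Doob/Azuma-type maximal inequality controls $\max_{t\le\ell}M_t$. Because the per-generation increments are sums of independent bounded variables, a Bernstein/Chernoff bound gives that $S_t(i)$ exceeds $C\log(n)\alpha^t$ with probability at most, say, $n^{-3}$ for $C$ large enough; a union bound over the $n$ choices of $i$ and the $\ell=O(\log n)$ values of $t$ then yields the claim with probability $1-O(n^{-\epsilon})$. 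For $|D_t(i)|$, the same scheme applies but now the centered increments have conditional variance governed by $S_t(i)$ (which we have just bounded by $O(\log n)\,\alpha^t$), so a Freedman-type martingale concentration inequality applied to $\beta^{-t}D_t(i)$ gives $|D_t(i)|\le C\log(n)\beta^t$ on the same high-probability event. The only subtlety is handling the at-most-one cycle: one shows it contributes an additive $O(S_\ell(i))=O(\log n\,\alpha^\ell)$ term, which is absorbed into the stated error.

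The bounds \eqref{control_2} are the finer ``quasi-deterministic'' statements: they say that $S_t(i)$ is, up to fluctuations of order $\sqrt{\log(n)\alpha^t}$ plus an $O(\log n)$ floor, exactly the deterministic backward extrapolation $\alpha^{t-\ell}S_\ell(i)$, and similarly for $D$. The natural way to get these is to exploit the martingale structure directly: $\alpha^{-t}S_t(i)$ converges (it is a nonnegative supermartingale), so $\alpha^{-t}S_t(i)-\alpha^{-\ell}S_\ell(i)$ is a martingale increment over $[t,\ell]$ whose quadratic variation is $\sum_{s=t}^{\ell-1}\alpha^{-2s}\cdot O(S_s(i))=\sum_s O(\alpha^{-s}\log n)=O(\alpha^{-t}\log n)$ using \eqref{control_1}. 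Multiplying back by $\alpha^t$ converts this into an $L^2$ deviation of order $\sqrt{\alpha^t\log n}$, and a maximal inequality plus union bound over $i$ promotes it to a high-probability uniform statement; the additive $O(\log n)$ accounts for the regime $t$ small where $S_t(i)$ itself is only $O(\log n)$. The argument for $D_t(i)$ is identical with $\beta$ replacing $\alpha$ in the normalization, while the quadratic variation is still controlled by $S_s(i)\le C\log(n)\alpha^s$, which is exactly why the error term carries $\alpha^t$ rather than $\beta^t$ inside the square root.

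The main obstacle, I expect, is not any single concentration inequality but rather making the exploration/coupling rigorous while tracking \emph{all} nodes simultaneously: the $S_t(i), D_t(i)$ for different $i$ are far from independent, so the uniform-in-$i$ claim cannot be obtained by a naive product bound — one genuinely needs the per-$i$ tail to be $n^{-\omega(1)}$ (or at least $n^{-1-\epsilon}$), which forces the $C\log n$ rather than $C$ in front. Getting the Poisson approximation errors, the vertex-depletion corrections (the pool of unrevealed vertices shrinks by $O(\log n\,\alpha^\ell)=n^{o(1)}$, hence the rates change by a factor $1-n^{o(1)-1}$), and the single-cycle correction all to be genuinely negligible relative to the stated error terms — and to do so uniformly over $t\le\ell=O(\log n)$ — is where the real bookkeeping lies. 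I would organize this by first proving a clean statement for an idealized Poisson branching process, then bounding the total-variation distance between the true neighborhood exploration and the idealized one on the event that no neighborhood is too large (which is \eqref{control_1}, proved first and bootstrapped).
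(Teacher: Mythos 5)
Your proposal shares the paper's skeleton---analyze the neighborhood exploration via the exact conditional dynamics (\ref{dynamics1}) and propagate concentration across generations---but there is a genuine circularity in the setup, and the concentration mechanism itself is a different route.

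The circularity: you take Lemma~\ref{cycles} as a starting point, but in the paper that lemma is proved \emph{after} and \emph{using} Theorem~\ref{lemma_LD_bounds} (its proof conditions on the events $\{S_{k}(i)\le C\log(n)\alpha^k\}$ established here). Invoking it now is not permissible, and it is also unnecessary: $S_t(i)$ and $D_t(i)$ are defined through graph distance, so the recursion (\ref{dynamics1}) for $(U^+_t,U^-_t)$ is \emph{exact} whether or not the $\ell$-ball is a tree. You appear to have conflated ``number of nodes at distance $t$'' with ``number of simple length-$t$ paths from $i$''; those agree only on tree neighborhoods, and making that identification is precisely the content of the later Lemma~\ref{det_grow2}, not of the present theorem. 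Relatedly, your remark that the cycle correction is ``an additive $O(S_\ell(i))=O(\log n\,\alpha^\ell)$'' quotes the very bound (\ref{control_1}) you are trying to prove.

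On the concentration step, your quadratic-variation intuition for $\alpha^{-t}S_t$ and $\beta^{-t}D_t$ is the right one (later generations contribute geometrically less), but the paper realizes it multiplicatively rather than additively. It introduces a hitting time $T:=\inf\{t\le\ell:S_t\ge K\log n\}$, handles $t\le T$ trivially since then $S_t,|D_t|=O(\log n)$ by definition of $T$, and for $t>T$ applies a per-step Chernoff bound to the conditional binomials of (\ref{dynamics1}) with a geometrically decaying \emph{relative} tolerance $\epsilon_t=\epsilon\,\alpha^{-(t-T)/2}$, obtaining a tunnel $U_t\in\bigl[\prod_s(1-\epsilon_s),\prod_s(1+\epsilon_s)\bigr]M^{t-t'}U_{t'}$ whose products converge because $\epsilon_s$ is geometric. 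Diagonalizing $M^k$ then delivers both (\ref{control_1}) and (\ref{control_2}) simultaneously, with $\beta^2>\alpha$ used to sum the $D_t$-error series $\sum_s\beta^{t-s}\alpha^{(s-T)/2}$. The concrete advantages over your Freedman route are: Chernoff for conditional binomials needs no a priori bound on martingale increments (Freedman would need truncation or a separate argument); the supermartingale drift from depletion and from the nonlinearity of $1-(1-a/n)^x(1-b/n)^y$ is absorbed by the elementary sandwich (\ref{basic_ineq}) instead of requiring a Doob decomposition; and the hitting time $T$ cleanly isolates the small-$t$ regime where the multiplicative picture has not yet started, a regime your proposal flags as delicate but does not resolve. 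Your approach can likely be made to work, but these are the pieces you would have to add.
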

This, combined with Theorem~\ref{global}, yields the key intermediate step:
\begin{theorem}\label{th_ramanujan} 
Let $\ell=c\log(n)$, where constant $c$ is such that $c\log(\alpha)<1/4$. Under Condition~(\ref{c1}) the matrix $B^{(\ell)}$ counting simple paths satisfies the following weak Ramanujan property
\begin{equation}
\sup_{|x|=1,x'B^{(\ell)}e=x'B^{(\ell)}\sigma=0}|B^{(\ell)}x|\le n^{\epsilon}\alpha^{\ell/2}.
\end{equation}
\end{theorem}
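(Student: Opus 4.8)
The plan is to feed the matrix identity~\eqref{matrix_expansion} of Theorem~\ref{global} into a Rayleigh-quotient estimate, exploiting that when $x$ is orthogonal to both $B^{(\ell)}e$ and $B^{(\ell)}\sigma$ the troublesome ``structured'' directions are killed off. Write $B^{(\ell)}=\Delta^{(\ell)}+\sum_{m=1}^\ell \Delta^{(\ell-m)}\bar A B^{(m-1)}-\sum_{m=1}^\ell \Gamma^{\ell,m}$. For a unit vector $x$ the $\Delta^{(\ell)}x$ term and the $\Gamma^{\ell,m}x$ terms are immediately controlled by the spectral-norm bounds~\eqref{rho_delta}–\eqref{rho_gamma}: the first contributes $n^\epsilon\alpha^{\ell/2}$ and the second contributes $\sum_{m=1}^\ell n^{\epsilon-1}\alpha^{(\ell+m)/2}\le n^{\epsilon-1}\cdot \ell\cdot \alpha^{\ell}$, which, because $c\log\alpha<1/4$ gives $\alpha^\ell\le n^{1/4}$, is $o(n^\epsilon\alpha^{\ell/2})$ and hence absorbed. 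So everything reduces to bounding the middle sum $\sum_{m=1}^\ell \Delta^{(\ell-m)}\bar A B^{(m-1)}x$ when $x\perp B^{(\ell)}e,\,B^{(\ell)}\sigma$.

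The key point is the low rank of $\bar A$. From~\eqref{bara}, $\bar A = \frac{\alpha}{n}ee' + \frac{\beta}{n}\sigma\sigma' - \frac{a}{n}I$, so $\bar A$ is a rank-two matrix plus a vanishingly small multiple of the identity. Therefore $\Delta^{(\ell-m)}\bar A B^{(m-1)}x = \frac{\alpha}{n}(\Delta^{(\ell-m)}e)(e'B^{(m-1)}x) + \frac{\beta}{n}(\Delta^{(\ell-m)}\sigma)(\sigma'B^{(m-1)}x) - \frac{a}{n}\Delta^{(\ell-m)}B^{(m-1)}x$. The last term has norm at most $\frac{a}{n}\rho(\Delta^{(\ell-m)})\rho(B^{(m-1)})$, which by~\eqref{rho_delta} and the crude bound $\rho(B^{(m-1)})\le n\max_i S_{m-1}(i)\le n\cdot C\log n\,\alpha^{m-1}$ from Theorem~\ref{lemma_LD_bounds} (or a direct $\ell^\infty\to\ell^\infty$ estimate) is again $O(n^\epsilon \alpha^{\ell/2})$ after summing over $m$. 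The heart of the matter is the two rank-one terms, for which I need to show the scalars $e'B^{(m-1)}x$ and $\sigma'B^{(m-1)}x$ are small. This is exactly where the orthogonality hypotheses enter: $e'B^{(\ell)}x=0$ and $\sigma'B^{(\ell)}x=0$ say that $B^{(\ell)}e$ and $B^{(\ell)}\sigma$ annihilate $x$, and the point of Theorem~\ref{lemma_LD_bounds}, specifically the quasi-deterministic growth~\eqref{control_2}, is that $B^{(m-1)}e$ is, up to lower-order corrections, the vector $\alpha^{m-1-\ell}B^{(\ell)}e$, and likewise $B^{(m-1)}\sigma \approx \beta^{m-1-\ell}B^{(\ell)}\sigma$. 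Hence $e'B^{(m-1)}x = \alpha^{m-1-\ell}(B^{(\ell)}e)'x + (\text{error})'x = 0 + (\text{error})'x$, and the error vector has $\ell^2$-norm $O(\sqrt n(\log n + \sqrt{\log n\,\alpha^{m-1}}))$ by~\eqref{control_2}, so by Cauchy–Schwarz $|e'B^{(m-1)}x|= O(\sqrt n\,\sqrt{\log n\,\alpha^{m-1}})$ and similarly for $\sigma$. (One must pass between the combinatorial quantities $S_t,D_t$ and the entries of $B^{(t)}e, B^{(t)}\sigma$; Section~\ref{sec:4} provides that bridge and I would cite it.)

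Plugging these in, the $m$-th rank-one contribution has norm at most $\frac{\alpha}{n}\,\rho(\Delta^{(\ell-m)})\,|e|\cdot O(\sqrt n\sqrt{\log n\,\alpha^{m-1}}) = \frac{\alpha}{n}\cdot n^\epsilon\alpha^{(\ell-m)/2}\cdot\sqrt n\cdot O(\sqrt n\sqrt{\log n}\,\alpha^{(m-1)/2}) = O(n^\epsilon\sqrt{\log n}\,\alpha^{\ell/2})$, uniformly in $m$; the $\beta$-term is no larger since $\beta<\alpha$. Summing over the $\ell=O(\log n)$ values of $m$ costs only another logarithmic factor, which is swallowed into $n^\epsilon$ after renaming $\epsilon$. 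Combining all three groups of terms gives $|B^{(\ell)}x|\le n^\epsilon\alpha^{\ell/2}$ for every unit $x$ orthogonal to $B^{(\ell)}e$ and $B^{(\ell)}\sigma$, which is the claim.

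The main obstacle I anticipate is not any single inequality but the bookkeeping of error propagation: the quasi-deterministic approximations in~\eqref{control_2} hold per coordinate with an additive slack, and I need the aggregated $\ell^2$-slack across all $n$ coordinates, summed over all $\ell$ path lengths, to stay below $n^\epsilon\alpha^{\ell/2}$ — which works precisely because the constraint $c\log\alpha<1/4$ forces $\alpha^\ell\le n^{1/4}$, leaving enough room. The secondary subtlety is justifying that $B^{(m-1)}e$ and $B^{(m-1)}\sigma$ (as opposed to the purely combinatorial $S,D$ vectors) genuinely inherit the growth law, i.e. that self-avoiding-path counts and shortest-path-sphere sizes agree up to the allowed error; this rests on Lemma~\ref{cycles} (at most one cycle in each $\ell$-neighborhood) and is handled in Section~\ref{sec:4}, so here I would invoke it rather than reprove it.
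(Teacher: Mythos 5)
Your proposal follows essentially the same route as the paper: both start from the expansion~\eqref{matrix_expansion}, dispose of the $\Delta^{(\ell)}$ and $\Gamma^{\ell,m}$ terms via Proposition~\ref{trace_delta}/\ref{trace_gamma}, and then attack the middle sum by exploiting that $\bar A=\frac{\alpha}{n}ee'+\frac{\beta}{n}\sigma\sigma'-\frac{a}{n}I$ is rank two plus a tiny multiple of the identity, with the orthogonality hypotheses and the quasi-deterministic growth law (Theorem~\ref{lemma_LD_bounds}, transported through Lemma~\ref{det_grow2} and Corollary~\ref{cor_int}) killing the two rank-one directions. The paper organizes this slightly differently — it bounds $|\bar A B^{(m-1)}x|$ as a scalar and then multiplies by $\rho(\Delta^{(\ell-m)})$, whereas you split $\Delta^{(\ell-m)}\bar A B^{(m-1)}x$ into three explicit pieces — but this is cosmetic.

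Two small cautions. First, your line $\rho(B^{(m-1)})\le n\max_i S_{m-1}(i)$ is off by a factor $n$: the row-sum bound actually gives $\rho(B^{(m-1)})\le\max_i(B^{(m-1)}e)_i=O(\log n\,\alpha^{m-1})$ (combining Lemma~\ref{det_grow2} with Theorem~\ref{lemma_LD_bounds}); with your stated $n$ factor the identity-piece contribution would be $\Theta(n^{\epsilon}\alpha^{\ell-1})\gg n^{\epsilon}\alpha^{\ell/2}$ and the argument would fail. Your parenthetical ``direct $\ell^\infty\to\ell^\infty$ estimate'' is the correct repair, so treat this as a slip rather than a gap. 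Second, the bound you quote for $|e'B^{(m-1)}x|$, namely $O(\sqrt n\sqrt{\log n\,\alpha^{m-1}})$, omits the $\log^3(n)\alpha^{\ell+m-1}$ term that Corollary~\ref{cor_int} carries to account for the cyclic nodes $\mathcal B$; under $c\log\alpha<1/4$ that extra term is indeed dominated, so the conclusion is unaffected, but it should be verified rather than silently dropped.
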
 
Another key ingredient consists in coupling the neighborhoods of nodes in graph $\g$ with a random tree process, and performing a martingale analysis of this tree process. This is done in Section~\ref{sec:coupling}. It establishes (see Theorem~\ref{thm_local}) that the vector $(\beta^{-\ell}D_{\ell}(i))$ is close in some sense to a vector $(\sigma_i D_i)$ where the $D_i$ are i.i.d., distributed as the limit of a martingale. This limiting martingale distribution is precisely that of variable $X$ in the statement of Theorem~\ref{th_1}. 
\section{Matrix expansion and spectral radii bounds}\label{sec:3} 
Our aim in this section is to establish Theorem~\ref{global}. 
Denoting $\xi_{ij}$ the indicator of edge $(i,j)$'s presence in $\g$ we can write
\begin{equation}
B^{(\ell)}_{ij}=\sum_{i_0,i_1,\ldots,i_{\ell}\in\nn}\II_{i_0=i}\II_{i_{\ell}=j}\II_{|\{i_0,\ldots,i_{\ell}\}|=\ell+1}\prod_{t=1}^{\ell}\xi_{i_{t-1}i_t}.
\end{equation}
Denote by $P_{ij}$ the set of all so-called {\em self-avoiding}, or {\em simple} paths $i_0^{\ell}:=\{i_0,\ldots,i_{\ell}\}$ from $i$ to $j$ appearing in the above sum. Let
\begin{equation}
\Delta^{(\ell)}_{ij}:=\sum_{i_0^{\ell}\in P_{ij}}\prod_{t=1}^{\ell}(A-\bar{A})_{i_{t-1}i_t}
\end{equation}
where $\bar{A}$ is as in~(\ref{bara}).
We then have the expansion:
\begin{equation}\label{exp1} 
\Delta^{(\ell)}_{ij}=B^{(\ell)}_{ij}-\sum_{m=1}^{\ell}\sum_{i_0^{\ell}\in P_{ij}} \prod_{t=1}^{\ell-m}(A-\bar{A})_{i_{t-1}i_t} \bar{A}_{i_{\ell-m}i_{\ell-m+1}} \prod_{t=\ell-m+2}^{\ell} A_{i_{t-1}i_t}. 
\end{equation}
Introduce the set $Q^m_{ij}$ of paths $i_0^{\ell}$ defined by:
$$
i_0^{\ell}\in Q^m_{ij}\Leftrightarrow i_0=i\;\&\; i_{\ell}=j\;\&\; |\{i_0,\ldots,i_{\ell-m}\}|=\ell-m+1\;\&\; |\{i_{\ell-m+1},\ldots,i_{\ell}\}|=m.
$$
Paths in $Q^m_{ij}$ are thus concatenations of simple paths $i_0^{\ell-m}$ and $i_{\ell-m+1}^{\ell}$. Note that $P_{ij}\subset Q^m_{ij}$. Let $R^m_{ij}$ denote the set difference $Q^m_{ij}\setminus P_{ij}$. It then consists of paths $i_0^{\ell}$ such that both $i_0^{\ell-m}$ and $i_{\ell-m+1}^{\ell}$ are simple, and further verify that the intersection of the corresponding sets is not empty.

Define matrix $\Gamma^{\ell,m}$ as
\begin{equation}
\Gamma^{\ell,m}_{ij}:=\sum_{i_0^{\ell}\in R^m_{ij}} \prod_{t=1}^{\ell-m}(A-\bar{A})_{i_{t-1}i_t} \bar{A}_{i_{\ell-m}i_{\ell-m+1}} \prod_{t=\ell-m+2}^{\ell} A_{i_{t-1}i_t}.
\end{equation}
With these notations at hand, one obtains from~(\ref{exp1}):
\begin{equation}
\begin{array}{lll}
\Delta^{(\ell)}_{ij}&=&B^{(\ell)}_{ij}-\sum_{m=1}^{\ell}\sum_{i_0^{\ell}\in Q^m_{ij}} \prod_{t=1}^{\ell-m}(A-\bar{A})_{i_{t-1}i_t} \bar{A}_{i_{\ell-m}i_{\ell-m+1}} \prod_{t=\ell-m+2}^{\ell} A_{i_{t-1}i_t}\\
&&+\sum_{m=1}^{\ell}\sum_{i_0^{\ell}\in R^m_{ij}} \prod_{t=1}^{\ell-m}(A-\bar{A})_{i_{t-1}i_t} \bar{A}_{i_{\ell-m}i_{\ell-m+1}} \prod_{t=\ell-m+2}^{\ell} A_{i_{t-1}i_t}\\
&=&B^{(\ell)}_{ij}-\sum_{m=1}^{\ell}(\Delta^{(\ell-m)}\bar{A} B^{(m-1)})_{ij}+\sum_{m=1}^{\ell}\Gamma^{\ell,m}_{ij},
\end{array}
\end{equation}
where we noticed that the summation over paths in $Q^m_{ij}$ of the corresponding products yields the $(ij)$-entry of the product matrix $\Delta^{(\ell-m)}\bar{A}B^{(m-1)}$. This is precisely expansion~(\ref{matrix_expansion}).

We then have the following
\begin{proposition}\label{trace_delta}
For all integers $k,\ell\ge 1$, it holds that
\begin{equation}\label{bound_delta}
\EE\left[\rho(\Delta^{(\ell)})^{2k}\right]\le \sum_{v=\ell+1}^{k\ell+1}\sum_{e=v-1}^{k\ell} n^{v} [(v+1)^2(\ell+1)]^{2k(1+e-v+1)}\left(\frac{\alpha}{n}\right)^{v-1} \left[\frac{\max(a,b)}{n}\right]^{e-v+1}\cdot 
\end{equation}
\end{proposition}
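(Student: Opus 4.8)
The plan is to bound $\EE[\rho(\Delta^{(\ell)})^{2k}]$ by the trace method: since $\Delta^{(\ell)}$ is symmetric, $\rho(\Delta^{(\ell)})^{2k} \le \mathrm{tr}\big((\Delta^{(\ell)})^{2k}\big)$, and the right-hand side expands as a sum over closed walks. Concretely, writing $\Delta^{(\ell)}_{ij} = \sum_{i_0^\ell \in P_{ij}} \prod_{t=1}^\ell (A-\bar A)_{i_{t-1}i_t}$, the trace of the $2k$-th power becomes a sum over $2k$-tuples of simple paths of length $\ell$ that concatenate into a closed circuit: a sequence of $2k\ell+1$ indices (with the last identified with the first) visiting vertices $v_0, v_1, \ldots$, where each consecutive block of $\ell$ steps is self-avoiding. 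So I would first set up the combinatorial bookkeeping: such a circuit visits some number $v$ of distinct vertices and traverses some number $e$ of distinct edges, and I would parametrize the sum by $(v,e)$.

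The second step is to take expectations over the independent edge variables. Each factor $(A-\bar A)_{i_{t-1}i_t}$ is centered, so in $\EE\big[\prod (A-\bar A)_{\cdot\cdot}\big]$ any edge traversed exactly once contributes zero. Hence only circuits in which every distinct edge is traversed at least twice survive, which forces $e \le k\ell$ (the circuit has $2k\ell$ steps). For a surviving circuit the expectation of the product of the $(A-\bar A)$ factors along a given edge traversed $s\ge 2$ times is at most $O((\max(a,b)/n))$ in absolute value — the $s$-th moment of a centered Bernoulli$(p)$ with $p = O(1/n)$ is $O(p)$ — while the $\bar A$ factors contribute the deterministic $\Theta(\alpha/n)$ per step. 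Counting: a connected circuit on $v$ vertices has $v-1$ "tree" edges and $e-(v-1)$ "excess" edges, and the cleanest bookkeeping assigns the $\bar A$-weight $\alpha/n$ to roughly $v-1$ of the factors and the $\max(a,b)/n$ weight to the remaining $e-v+1$, giving the factor $(\alpha/n)^{v-1}[\max(a,b)/n]^{e-v+1}$ in the stated bound.

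The third step — which I expect to be the main obstacle — is the \emph{entropy count}: bounding the number of circuit shapes contributing to a given $(v,e)$. One chooses the $v$ distinct vertices in at most $n^v$ ways, which supplies the $n^v$ factor. What remains is to bound, independently of the vertex labels, the number of ways the closed walk of $2k\ell$ steps can be routed so as to realize a given "excess structure," keeping in mind the extra constraint that each length-$\ell$ block is self-avoiding. The standard Füredi–Komlós-type argument encodes the walk by recording, at each step, whether it goes to a new vertex or an already-seen one, and in the latter case which one; the cost is controlled by the number of "innovation-free" steps, which is tied to $e-v+1$ (the cycle rank) and to $2k$ (the number of path-concatenation junctions, where the self-avoidance resets). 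Each such step can be charged a factor of at most $(v+1)^2(\ell+1)$ — roughly: choose which earlier vertex or junction, times the length-$\ell$ position. Raising this to the power $2k(1+e-v+1)$, with the $+1$ absorbing the $2k$ block-boundary junctions, yields the combinatorial prefactor $[(v+1)^2(\ell+1)]^{2k(1+e-v+1)}$. Summing over the admissible ranges $v \in \{\ell+1,\ldots,k\ell+1\}$ (a circuit of $2k$ simple length-$\ell$ paths has at least $\ell+1$ and at most $k\ell+1$ distinct vertices) and $e \in \{v-1,\ldots,k\ell\}$ gives exactly \eqref{bound_delta}. The delicate points, where I would spend the most care, are (a) making the charging scheme precise enough that the self-avoidance of each block is genuinely exploited to keep the exponent at $2k(1+e-v+1)$ rather than something larger, and (b) verifying that the moment bound $O(\max(a,b)/n)$ per multiply-traversed edge does not lose a factor growing with the multiplicity $s$ in a way that breaks the clean product form — this is fine since $s \le 2k\ell = O(\mathrm{polylog}\,n)$ and the Bernoulli moments are $p(1-p)^{\cdots} \le p$.
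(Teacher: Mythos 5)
Your roadmap matches the paper's (trace method, parametrize circuits by $(v,e)$, force doubled edges by centering, bound per-edge moments, and a Füredi--Komlós-type encoding in which the self-avoidance of each length-$\ell$ block caps the number of routing choices at $2k(1+e-v+1)$), and the combinatorial prefactor you arrive at is the right one. The paper's version of the encoding is just a precise instance of your charging scheme: each simple block is decomposed into a sequence of (tree-segment, discovery-segment, cycle-edge) triplets $(p,q,r)\in\{0,\dots,v\}\times\{0,\dots,\ell\}\times\{0,\dots,v\}$, and the self-avoidance is used to argue that each of the $e-v+1$ cycle-edges is hit at most $2k$ times.

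There is, however, a genuine error in your justification of the factor $(\alpha/n)^{v-1}$. You attribute it to ``the $\bar A$ factors contributing $\Theta(\alpha/n)$ per step,'' but $\Delta^{(\ell)}$ is a sum of products of the centered variables $(A-\bar A)_{i_{t-1}i_t}$ only --- there are no $\bar A$ factors in it. The actual source is that for every distinct edge traversed $m\ge 2$ times, $\EE\bigl[(A-\bar A)_{ij}^m\mid\sigma\bigr]\le a(\sigma_i,\sigma_j)/n$, a bound which is still $\sigma$-dependent. For the $e-v+1$ cycle edges one can afford to crush this to $\max(a,b)/n$, but for the $v-1$ tree edges one must \emph{retain} the $\sigma$-dependent bound and only then average over the spin vector; it is precisely the tree structure of those edges that makes the spin average factorize, $\EE_\sigma\bigl[\prod_{\text{tree }e}a(\sigma_{i(e)},\sigma_{j(e)})/n\bigr]=(\alpha/n)^{v-1}$ (root the tree and peel leaves, each leaf edge contributing an independent $\EE[a(\sigma,\sigma')]=\alpha$). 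This is not ``the cleanest bookkeeping choice'': if you applied $\max(a,b)/n$ uniformly to all $e$ edges you would lose the $\alpha^{v-1}$ and get a strictly weaker bound. So that step needs an actual spin-averaging argument over the tree, not an accounting convention, and your writeup as it stands would not produce it.
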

The proof uses the trace method, bounding $\rho(\Delta^{(\ell)})^{2k}$ by the trace of $(\Delta^{(\ell)})^{2k}$ and a specific encoding of circuits involved in this trace computation. The fact that circuits consist in concatenations of simple paths of length $\ell$ is then leveraged to control which codes of circuits have to be considered in the trace bound. The details are provided in the Appendix.

Inequality~(\ref{rho_delta}) readily follows from Proposition~\ref{trace_delta}. Indeed for $\ell=O(\log(n))$ and fixed $\epsilon>0$, choose an integer $k>0$ such that $\epsilon>1/(2k)$.
By~(\ref{bound_delta}), it holds that
$$
\EE(\rho(\Delta^{(\ell)})^{2k})\le (1+o(1))n \alpha^{k\ell}[(k\ell+2)^2(\ell+1)]^{2k}.
$$
Thus
$$
\begin{array}{ll}
\PP(\rho\ge n^{\epsilon}\alpha^{\ell/2})&\le \frac{\EE(\rho^{2k})}{n^{2k\epsilon}\alpha^{k\ell}}\\
&\le (1+o(1) \frac{n\alpha^{k\ell}[(k\ell+2)^2(\ell+1)]^{2k}}{n^{2k\epsilon}\alpha^{k\ell}}\\
&\le (1-o(1))n^{1-2k\epsilon}[(k\ell+2)^2(\ell+1)]^{2k}\\
&=o(1),
\end{array}
$$
since we chose $k$ so that $2k\epsilon>1$ and the last term is polylogarithmic in $n$. This establishes~(\ref{rho_delta}).

We now establish a bound on the spectral radius of the matrix $\Gamma^{\ell,m}$ previously introduced. Specifically, we have
\begin{proposition}\label{trace_gamma}
For all $k,\ell\ge 1$ and $m\in\{1,\ldots,\ell\}$ we have the following
\begin{equation}\label{bound_gamma} 
\EE((\rho(\Gamma^{\ell,m})^{2k})\le\sum_{v=m\vee(\ell-m+1)}^{1+k(\ell+m)} 
\sum_{e=v-1}^{k(\ell+m)}\left(\frac{a\vee b}{n}\right)^{2k+e-v+1} v^{2k}[(v+1)^2(\ell+1)]^{4k(1+e-v+1)}
n^v \left(\frac{\alpha}{n}\right)^{v-1}.
\end{equation}
\end{proposition}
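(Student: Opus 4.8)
\textbf{Proof plan for Proposition~\ref{trace_gamma}.}

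The plan is to mirror the trace-method argument used for Proposition~\ref{trace_delta}, adapted to the structure of the paths in $R^m_{ij}$. First I would bound $\rho(\Gamma^{\ell,m})^{2k}\le \mathrm{tr}\,(\Gamma^{\ell,m})^{2k}$ (using that $\Gamma^{\ell,m}$ has real entries, so $\rho\le \|\Gamma^{\ell,m}\|$ and $\|\Gamma^{\ell,m}\|^{2k}\le \mathrm{tr}((\Gamma^{\ell,m})^{*}\Gamma^{\ell,m})^{k}$, or work with the symmetrized matrix as in the companion proof). Expanding the trace yields a sum over closed circuits obtained by concatenating $2k$ copies of paths from the sets $R^m_{\cdot\cdot}$; each such building block is itself a concatenation of a simple path of length $\ell-m$ and a simple path of length $m$ whose vertex sets overlap nontrivially, followed by the $\bar A$-bridge structure $\prod(A-\bar A)\cdot \bar A\cdot \prod A$. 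The expectation factorizes over distinct edges, and each edge contributes at most $\max(a,b)/n$ in absolute value (entries of $A-\bar A$, $\bar A$, and $A$ are all bounded by this), while the $(A-\bar A)$ factors additionally have zero mean, which is what kills circuits that traverse any edge only once.

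The combinatorial heart of the argument is the encoding of circuits. As in Proposition~\ref{trace_delta}, I would encode a circuit by recording, as the circuit is traversed, whether each step goes to a fresh vertex or to an already-visited one, together with the identity of the revisited vertex in the latter case. If the circuit visits $v$ distinct vertices and uses $e$ distinct edges, then there are $n^v$ choices for the vertex labels (after fixing the starting point by the cyclic trace structure), $(\alpha/n)^{v-1}$ comes from the spanning-tree edges that must be present, and the remaining $e-v+1$ edges beyond a spanning tree each contribute a factor $\le (a\vee b)/n$; the additional $2k$ factors of $(a\vee b)/n$ in the exponent $2k+e-v+1$ account for the $2k$ mandatory $\bar A$-bridge edges, one per block. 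The number of circuits realizing a given $(v,e)$ is bounded by the number of encodings: each of the $\le 2k(\ell+m)$ steps is either ``fresh'' or chooses one of $\le v$ previously seen vertices, and — crucially for the $R^m$ structure — each of the $2k$ blocks must, by definition of $R^m$, contain a vertex where the length-$(\ell-m)$ part and the length-$m$ part meet; this forces at least one ``coincidence'' per block, so the number of non-fresh steps is at least $2k$, which is why the power of $(v+1)^2(\ell+1)$ in the bound scales with $1+e-v+1$ rather than with $v$. The ranges $v\in[m\vee(\ell-m+1),\,1+k(\ell+m)]$ and $e\in[v-1,k(\ell+m)]$ come respectively from: a single block already contains a simple path of length $\ell-m$ (hence $\ge \ell-m+1$ vertices) and a simple path of length $m$ (hence $\ge m$ vertices), so $v\ge m\vee(\ell-m+1)$; the whole circuit has $\le k(\ell+m)$ edges; and $e\ge v-1$ since the circuit is connected.

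The main obstacle I anticipate is getting the exponent of the combinatorial factor $[(v+1)^2(\ell+1)]$ exactly right, i.e.\ showing it is $4k(1+e-v+1)$ rather than something larger. This requires a careful bookkeeping of how many ``choices'' the encoding of an $R^m$-circuit really involves: one must argue that the only unconstrained data are the $e-v+1$ extra (non-tree) edges together with the mandatory per-block self-intersection, and that each such piece of data is specified by at most a bounded ($O((v+1)^2(\ell+1))$) amount of information — the factor $(\ell+1)$ recording the position within a block where a coincidence occurs, and $(v+1)^2$ recording which pair of vertices coincides. Relative to Proposition~\ref{trace_delta}, the extra factor of $2$ in the exponent (coefficient $4k$ versus $2k$) reflects that $R^m$-paths have two simple pieces rather than one, so each block can contribute up to twice as many coincidence-records; I would make this precise by treating the length-$(\ell-m)$ and length-$m$ parts of each block separately in the encoding. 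Once the per-$(v,e)$ bound is established, summing over the stated ranges of $v$ and $e$ gives~(\ref{bound_gamma}) directly, and the deduction of~(\ref{rho_gamma}) from it will follow exactly as~(\ref{rho_delta}) was deduced from Proposition~\ref{trace_delta}, noting the extra factor $(a\vee b)^{2k}/n^{2k}$ which, after taking $2k$-th roots, produces the additional $1/n$ and the shift from $\alpha^{\ell}$ to $\alpha^{(\ell+m)/2}\cdot n^{-1}$ scaling.
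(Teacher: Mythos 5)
Your plan follows the same route as the paper: bound $\rho(\Gamma^{\ell,m})^{2k}$ by a trace, expand into a sum over closed circuits built from $2k$ blocks, each block a concatenation of two simple pieces joined by one $\bar{A}$-edge, and encode circuits by a three-phase decomposition of each simple piece. The ranges for $v$ and $e$, the $(\alpha/n)^{v-1}$ tree contribution, the $\bigl((a\vee b)/n\bigr)^{2k+e-v+1}$ accounting ($2k$ for the mandatory $\bar A$-bridges, $e-v+1$ for excess edges), and the doubling $2k\to 4k$ in the exponent because each block now contains two simple paths rather than one, all match the paper's argument.

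Two bookkeeping items are under-specified in your sketch. First, you never account for the factor $v^{2k}$ in front of $\bigl[(v+1)^2(\ell+1)\bigr]^{4k(1+e-v+1)}$: the $\bar A$-edges are dense (every entry is $\Theta(1/n)$), so after each of the $2k$ bridge traversals the circuit can re-enter at \emph{any} already-visited vertex, and that landing vertex must be recorded, costing an extra $v$ choices per bridge. Second, your justification for the exponent $4k(1+e-v+1)$ ("at least one coincidence per block forces $\ge 2k$ non-fresh steps") is not the right count. The exponent is the number of three-phase sequences the encoding produces: each of the $4k$ simple pieces ends with one terminal sequence, and each of the $e-v+1$ excess edges can start a new sequence at most once per simple piece, giving at most $4k(1+e-v+1)$ sequences in total, each coded by $(v+1)^2(\ell+1)$ possible triplets. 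Relatedly, the bound $e\ge v-1$ requires that the circuit's edge set \emph{after discarding the $\bar A$-edges} be connected; this is exactly where the $R^m$ constraint (the two simple pieces of a block share a vertex) is used, so it is worth stating that link explicitly rather than asserting connectivity as if it were automatic.
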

The proof, postponed to the appendix, again uses the trace method, a specific encoding of circuits involved in the trace bound, and then leverages the constraints on circuits appearing in this bound. It readily implies inequality~(\ref{rho_gamma}).  
Indeed for $\ell=O(\log(n))$, and any fixed $\epsilon>0$, choose $k>0$ such that $\epsilon>1/(2k)$.
By (\ref{bound_gamma}) it holds that
$$
\EE(\rho(\Gamma^{\ell,m})^{2k})\le (1+o(1))n \alpha^{k(\ell+m)}\left[\frac{a\vee b}{n} (k(\ell+m)+2)^5 (\ell+1)^2\right]^{2k}.
$$
Thus
$$
\begin{array}{ll}
\PP(\rho\ge n^{\epsilon-1}\alpha^{(\ell+m)/2})&\le \frac{\EE(\rho(\Gamma^{\ell,m})^{2k})}{n^{2k(\epsilon-1)}\alpha^{k(\ell+m)}}\\
&\le (1+o(1)) n^{1-2k\epsilon}\left[(a\vee b) (k(\ell+m)+2)^5 (\ell+1)^2\right]^{2k},
\end{array}
$$
and this last bound decays to zero as a power of $n$ by the condition $2k\epsilon>1$ and the fact that the last term in the product is polylogarithmic in $n$. This completes the proof of Theorem~\ref{global}. 
%
\section{Local Analysis: structure of expanded neighborhoods}
\label{sec:4}
This section is devoted to the structure of the local neighborhoods of nodes. 
We start with general bounds. We then relate vectors of interest $B^{(\ell)}e$ and $B^{(\ell)}\sigma$ to the neighborhood structures. The martingale analysis of neighborhood structures follows.
\subsection{Preliminaries}\label{sec:general_bounds} \
For any $k\ge 0$, the number of nodes with spin $\pm$ at distance $k$ (respectively $\le k$) of node $i$ is denoted $U^{\pm}_k(i)$ (respectively, $U^{\pm}_{\le k}(i)$). We thus have
\begin{equation}
S_t(i)=U^+_t(i)+U^-_t(i),\; D_t(i)=U^+_t(i)-U^-_t(i).
\end{equation}
We shall omit indices $i$ when considering quantities related to a fixed node $i$. 
In the remainder of the section we condition on the spins $\sigma$ of all nodes. We denote $\npm$ as the number of nodes with spin $\pm$.

For fixed $i\in \nn$ it is readily seen that,  conditionally on ${\mathcal F}_{k-1}:=\sigma(U^+_t,U^-_t,t\le k-1)$, we have:
\begin{equation}\label{dynamics1} 
\begin{array}{ll}
U^+_k&\sim \hbox{Bin}\left(\np-U^+_{\le k-1}, 1-(1-a/n)^{U^+_{k-1}}(1-b/n)^{U^-_{k-1}}\right),\\
U^-_k&\sim \hbox{Bin}\left(\nm-U^-_{\le k-1}, 1-(1-a/n)^{U^-_{k-1}}(1-b/n)^{U^+_{k-1}}\right).
\end{array}
\end{equation}
Theorem~(\ref{lemma_LD_bounds}) is established based on these characterizations by extensive use of Chernoff bounds for binomial variables. Its proof is deferred to the Appendix. 

The next technical result establishes approximate independence of neighborhoods of distinct nodes. It is instrumental in Section~\ref{sec:coupling} e.g. in establishing weak laws of large numbers on the fraction of nodes satisfying a given property. 
\begin{lemma}\label{coupling_lemma_2}
Consider any two fixed nodes $i$, $j$ with $i\ne j$. Let $\ell=c\log(n)$ where constant $c$ is such that $c\log(\alpha)<1/2$. 
Then the variation distance between the joint law of their neighborhood processes ${\mathcal{L}}((U^{\pm}_k(i))_{k\le \ell},(U^{\pm}_k(j))_{k\le \ell})$ and the law with the same marginals and independence between them, denoted ${\mathcal{L}}((U^{\pm}_k(i))_{k\le \ell})\otimes {\mathcal{L}}((U^{\pm}_k(j))_{k\le \ell})$,  goes to zero as a negative power of $n$ as  $n\to\infty$. 
\end{lemma}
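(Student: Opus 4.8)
\textbf{Proof proposal for Lemma~\ref{coupling_lemma_2}.} The plan is to couple the exact exploration of the two neighborhoods with an idealized process in which the two explorations never interact, and bound the probability that the coupling fails. First I would run a joint breadth-first exploration of the neighborhoods of $i$ and of $j$ simultaneously, revealing at each step the newly discovered vertices and their spins. As long as the two explored sets remain disjoint and the combined explored set is small compared with $n$, the conditional binomial dynamics in~(\ref{dynamics1}) for each of the two processes are essentially unaffected by what the other process has revealed: the only coupling between them comes (a) from the shared pool of yet-unexplored vertices $\np-U^+_{\le k-1}$ and $\nm-U^-_{\le k-1}$, which must be decremented by the vertices the \emph{other} exploration has consumed, and (b) from the possibility that the two explored balls actually meet (an edge between them, or a common neighbor). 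I would construct the independent process by giving each exploration its own private copy of the vertex pool of sizes $\np$, $\nm$, so that it never ``sees'' the other's consumed vertices, and couple the two constructions step by step.

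The next step is to show that on a high-probability event the coupling is exact. By Theorem~\ref{lemma_LD_bounds}, with probability $1-O(n^{-\epsilon})$ we have $S_t(i),S_t(j)\le C\log(n)\alpha^t$ for all $t\le\ell$, so $U^+_{\le\ell}(i)+U^+_{\le\ell}(j)\le 2C\log(n)\sum_{t\le\ell}\alpha^t = O(\log(n)\alpha^\ell)=O(\log(n) n^{c\log\alpha})$, which is $n^{c\log\alpha+o(1)}=o(n^{1/2+o(1)})$ by the hypothesis $c\log\alpha<1/2$; a tiny fraction of $n$. On this event, decrementing the shared pool by the at most $n^{c\log\alpha+o(1)}$ vertices consumed by the other exploration changes each success probability $1-(1-a/n)^{U^+_{k-1}}(1-b/n)^{U^-_{k-1}}$ and each binomial size by a relative amount $O(n^{c\log\alpha-1+o(1)})$; summing over the $O(\log n)$ binomial draws in each of the $O(\log n)$ levels and using a standard coupling bound for binomials with slightly perturbed parameters (the total variation distance between $\mathrm{Bin}(N,p)$ and $\mathrm{Bin}(N',p')$ is at most $|N-N'|+Np\,|\log(p/p')|$-type quantities, all of which are polynomially small here), the accumulated coupling error is $n^{c\log\alpha-1+o(1)}\cdot\mathrm{polylog}(n)$, a negative power of $n$. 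Finally one must bound the probability that the two balls intersect at all: conditioned on the good growth event, the number of (unordered) pairs of a vertex in $i$'s ball and a vertex in $j$'s ball is at most $(n^{c\log\alpha+o(1)})^2$, and each such pair is joined by an edge with probability $\le (a\vee b)/n$, so by a union bound the intersection probability is $O(n^{2c\log\alpha-1+o(1)})=o(1)$ since $2c\log\alpha<1$; an analogous bound handles a common neighbor one level out. Assembling: the total variation distance between the true joint law and the product-of-marginals law is at most the probability that the good event of Theorem~\ref{lemma_LD_bounds} fails, plus the intersection probability, plus the accumulated binomial coupling error, all of which are negative powers of $n$.

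The main obstacle, and the place requiring the most care, is quantifying the binomial coupling error cleanly: one needs that perturbing the pool sizes and the per-vertex connection probabilities by a relative factor $1+O(n^{c\log\alpha-1+o(1)})$ costs only a polynomially small amount of total variation per step, and that these errors do not compound badly across the $O(\log n)$ levels of the exploration (they add rather than multiply, because total variation is subadditive along a sequence of conditional steps via the standard coupling/maximal-coupling argument). A secondary subtlety is that the ``good event'' of Theorem~\ref{lemma_LD_bounds} is defined for the \emph{original} graph, whereas during the coupling argument we are reasoning about the two slightly different (true vs.\ idealized) processes; one resolves this by noting that on the event where the coupling has not yet failed the two processes coincide, so the growth bounds transfer automatically, and one only pays for the first time they diverge. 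Everything else is routine: the key quantitative inputs are exactly the two ingredients the hypothesis $c\log\alpha<1/2$ is tailored to give, namely that each ball has size $o(\sqrt n)$ and hence that the pair of balls spans $o(n)$ potential connecting edges.
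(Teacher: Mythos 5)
Your proof is correct and in essence the same as the paper's: both are coupling arguments in which the failure probability is dominated by the probability that the two $\ell$-balls ever intersect, which under $c\log\alpha<1/2$ is $O(\log^2(n)\,\alpha^{2\ell}/n)=n^{-\Omega(1)}$. The paper's construction is slightly cleaner in presentation: it takes two \emph{a priori} independent copies of the numerical processes $(U^{\pm}_k(i))$ and $(U^{\pm}_k(j))$ and embeds them in the vertex set by drawing, at each level, a uniformly random subset of the correct size from the remaining pool, then declares the coupling to succeed precisely when the two embedded balls never meet; this sidesteps the explicit per-level binomial-perturbation accounting you carry out, because conditioning on non-intersection already restricts the $i$-draws away from $j$'s vertices. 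Your version makes explicit the shared-pool effect that the paper treats tersely, which is a reasonable thing to insist on. One small imprecision: the bound you quote, ``$d_{TV}(\mathrm{Bin}(N,p),\mathrm{Bin}(N',p'))\le |N-N'|+\cdots$,'' is missing a factor of $\max(p,p')$ on the $|N-N'|$ term (the correct coupling bound is $|N-N'|\max(p,p')+\min(N,N')|p-p'|$); as stated it could exceed $1$. Since $p=O(\alpha^{\ell}/n)=O(n^{c\log\alpha-1})$ this does not change your conclusion, but the resulting per-level error is actually $O(n^{2c\log\alpha-1+o(1)})$, the same order as the intersection probability, rather than the $O(n^{c\log\alpha-1+o(1)})$ you wrote.
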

\begin{proof}
Take two independent realizations of the processes $(U^{\pm}_k(i))_{k\le \ell})$ and $(U^{\pm}_k(j))_{k\le \ell})$. 
Use them to perform a joint construction of the two processes as follows. Having constructed the sets ${\mathcal{U}}^{\pm}_t(i)\subset \nn$, ${\mathcal{U}}^{\pm}_t(j)\subset \nn$ for $t=1,\ldots,k-1$ and assuming the $i$-sets and the $j$-sets have not yet met, we construct them at step $k$ as follows. To construct $\mathcal{U}_k^{\pm}(i)$ we select a size $U^{\pm}_k(i)$ subset uniformly at random from $\nn^{\pm}\setminus {\mathcal{U}}^{\pm}_{\le k-1}(i)$. We do similarly for $j$. The construction can proceed based on the independent inputs so long as the resulting $i$-sets and $j$-sets do not intersect. However on $\cap_{t\le k}\{S_t(i)\vee S_t(j)\le C\log(n)\}$, the expected size of the intersection will be upper-bounded by $O(\log^2(n) \alpha^{2k}/n)=O(\log^2(n) n^{-2\epsilon}$, where $c\log(\alpha)=1/2-\epsilon$. The controls in the proof of Theorem~\ref{lemma_LD_bounds} ensure that the probability of $\cap_{t\le k}\{S_t(i)\vee S_t(j)\le C\log(n)\}$ is $1-O(n^{-\epsilon})$ and the result follows. 
\end{proof}
We now state a lemma on the presence of cycles in the $\ell$-neighborhoods of nodes. It will be instrumental in bounding the discrepancy between vectors $B^{(\ell)}e$ (resp. $B^{(\ell)}\sigma$) and $\{S_{\ell}(i)\}$ (resp. $\{D_{\ell}(i)\}$).  
Its proof, deferred to the Appendix, relies on the previous coupling Lemma~\ref{coupling_lemma_2}. 
\begin{lemma}\label{cycles} 
Assume $\ell=c\log(n)$ with $c\log(\alpha)<1/2$. Then with high probability the number of nodes $i$ whose $\ell$-neighborhood contains one cycle is $O(\log^4(n)\alpha^{2\ell})$. Assume further that $c\log(\alpha)<1/4$. Then with high probability no node $i$ has more than one cycle-edge in its $\ell$-neighborhood.
\end{lemma}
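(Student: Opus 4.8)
The plan is to do a first-moment computation on the number of ``bad'' nodes, using the coupling Lemma~\ref{coupling_lemma_2} to reduce the question about a single $\ell$-neighborhood in $\g$ to a question about a Galton--Watson / branching exploration. First I would fix a node $i$ and count, during a breadth-first exploration of its $\ell$-neighborhood, the probability that a cycle-edge is created. A cycle-edge appears at step $k$ when, upon revealing the neighbors of the frontier $\mathcal{U}^{\pm}_{k-1}(i)$, one of the at most $O(\log(n)\alpha^{k-1})$ frontier vertices connects to one of the at most $O(\log(n)\alpha^{k-1})$ already-discovered vertices (or two frontier vertices share a new neighbor). On the event $\cap_{t\le \ell}\{S_t(i)\le C\log(n)\alpha^t\}$ from Theorem~\ref{lemma_LD_bounds}, which holds with probability $1-O(n^{-\epsilon})$, the conditional probability that a cycle-edge is created at step $k$ is at most $O(\log^2(n)\alpha^{2k}/n)$, since each candidate pair is an edge with probability at most $\max(a,b)/n$. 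Summing over $k\le \ell$ and using $\alpha^{2\ell}\ge \alpha^{2k}$ gives that the probability that $i$'s $\ell$-neighborhood contains at least one cycle is $O(\log^2(n)\alpha^{2\ell}/n)$.

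Multiplying by $n$ and taking expectations, the expected number of nodes whose $\ell$-neighborhood contains a cycle is $O(\log^2(n)\alpha^{2\ell})$; a Markov inequality then upgrades this to a with-high-probability bound, but to land the stated $O(\log^4(n)\alpha^{2\ell})$ I would instead observe that Markov only gives concentration up to a $\log$-type slack, so I would either state the first claim as an in-expectation bound and apply Markov with a $\log^2 n$ factor of slack, or — cleaner — bound the number of such nodes directly: each cycle-edge is ``charged'' to the $O(\log^2(n)\alpha^{2\ell})$ nodes within distance $\ell$ of its endpoints, and the expected number of cycle-edges whose both endpoints lie in some common $\ell$-ball is itself $O(\alpha^{2\ell}\log^2 n)$ times $n\cdot(\text{something}/n)$-type terms, so another Markov step with a polylog slack yields the $O(\log^4(n)\alpha^{2\ell})$ figure with high probability. (The precise bookkeeping of which polylog power appears is routine and I would relegate it to the appendix.)

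For the second, stronger assertion, under the sharper hypothesis $c\log(\alpha)<1/4$, i.e. $\alpha^{\ell}\le n^{1/4-\epsilon'}$, I would bound the expected number of nodes with \emph{two} cycle-edges in the $\ell$-neighborhood. Repeating the exploration argument, two cycle-edges being created costs a factor $O(\log^2(n)\alpha^{2\ell}/n)$ each and these events are (conditionally) nearly independent given the frontier sizes, so the per-node probability is $O(\log^4(n)\alpha^{4\ell}/n^2)$. Multiplying by $n$ gives expectation $O(\log^4(n)\alpha^{4\ell}/n)=O(\log^4(n) n^{4(1/4-\epsilon')-1})=O(\log^4(n)n^{-4\epsilon'})=o(1)$, so with high probability no node has two cycle-edges. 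Here the hypothesis $c\log\alpha<1/4$ is exactly what makes $\alpha^{4\ell}=o(n)$, so it is used in an essential way.

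The main obstacle I expect is the honest justification that the coupling of Lemma~\ref{coupling_lemma_2} (stated for two nodes) can be iterated, or that the exploration process for a single node really can be compared to an independent branching process well enough to make the frontier-size bounds and the ``each candidate pair is an edge independently with probability $\le\max(a,b)/n$'' claim legitimate even after conditioning. Concretely, the edges revealed during the exploration are not literally independent of the event we are conditioning on, so one must either condition carefully step-by-step (revealing edges only as needed, so the conditioning on $S_t(i)\le C\log n\,\alpha^t$ is on the \emph{past} of the exploration filtration and the new edges are still fresh) or absorb the small dependence into the $O(n^{-\epsilon})$ error already present. I would handle this by phrasing everything in terms of the exploration filtration $\mathcal{F}_{k-1}$ of \eqref{dynamics1}, so that conditionally on $\mathcal{F}_{k-1}$ the indicators of the candidate cycle-edges at step $k$ are genuinely independent Bernoulli$(\le\max(a,b)/n)$ variables, making the union bound over candidate pairs clean.
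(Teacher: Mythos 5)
Your plan is essentially the paper's proof: condition on the growth events $S_t(i)\le C\log n\,\alpha^t$ from Theorem~\ref{lemma_LD_bounds}, bound the conditional probability of a cycle-edge appearing at level $k$ by $O(\log^2 n\,\alpha^{2k}/n)$ (your two formation mechanisms — a back-edge to an already-discovered vertex, or two frontier vertices sharing a new neighbor — are exactly the paper's two cases), sum over $k\le\ell$, and close the first claim by Markov with a polylogarithmic slack factor, which is your first suggested fix, is valid, and is in fact a bit simpler than the paper's route of computing $\EE\bigl(\sum_i Z_i\bigr)^2$ and applying Chebyshev; the second claim via $\alpha^{4\ell}/n=o(1)$ under $c\log\alpha<1/4$ matches the paper's calculation as well. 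The filtration subtlety you flag at the end is resolved precisely as you propose — conditioning step-by-step on the exploration filtration $\mathcal F_{k-1}$ from~\eqref{dynamics1} so that the edges revealed at level $k$ are fresh — and the vaguer ``charging'' alternative you sketch in the middle is unnecessary.
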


\subsection{From  neighborhood variables $S_t$ and $D_t$ to path matrix $B^{(\ell)}$}
We first state how to transport the deterministic growth controls~(\ref{control_2}) of Theorem~\ref{lemma_LD_bounds} to vectors $B^{(m-1)}e$ and $B^{(m-1)}\sigma$, a key step in the proof of Theorem~\ref{th_ramanujan}. One has the following
\begin{lemma}\label{det_grow2}
Let $\mathcal{B}$ denote the set of nodes $i$ whose $\ell$-neighborhood contains a cycle. Then for $m\le \ell$, $\ell=c\log n$ with $c\log\alpha<1/4$: :
\begin{eqnarray}\label{control_growth1} 
i\notin\mathcal{B}\Rightarrow\left\{
\begin{array}{lll}
(B^{(m-1)} e)_i&=S_{m-1}(i)&=\alpha^{m-1-\ell}(B^{(\ell)} e)_i+O(\log(n))+O(\sqrt{\log(n)\alpha^{m-1}}),\\
(B^{(m-1)}\sigma)_i&=D_{m-1}(i)&=\beta^{m-1-\ell}(B^{(\ell)}\sigma)_i+O(\log(n))+O(\sqrt{\log(n)\alpha^{m-1}}),
\end{array}\right.
\\
\label{control_growth2} 
i\in\mathcal{B}\Rightarrow |(B^{(m)}\sigma)_i|\le |(B^{(m)}e)_i|\le 2 \sum_{t=0}^m S_i(t)=O(\log(n)\alpha^{m}).
\end{eqnarray} 
\end{lemma}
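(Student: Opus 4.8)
\textbf{Proof proposal for Lemma~\ref{det_grow2}.}
The plan is to separate the two assertions according to whether node $i$'s $\ell$-neighborhood is a tree or not, and in the tree case to identify the path-counting variables with the sphere-counting variables.

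First I would establish the identity $(B^{(m-1)}e)_i = S_{m-1}(i)$ and $(B^{(m-1)}\sigma)_i = D_{m-1}(i)$ for $i\notin\mathcal B$. The key observation is that when the $\ell$-neighborhood of $i$ is cycle-free, for every $m-1\le\ell$ and every node $j$ with $d_{\g}(i,j)=m-1$ there is a unique self-avoiding path of length $m-1$ from $i$ to $j$ in $\g$, namely the unique path in the tree; conversely any self-avoiding path of length $m-1$ starting at $i$ stays inside the $(m-1)$-neighborhood, hence inside the tree, and therefore reaches a node at graph-distance exactly $m-1$. Thus $B^{(m-1)}_{ij}=\II_{d_{\g}(i,j)=m-1}$ for all $j$, and summing against $e$ (resp.\ $\sigma$) gives $S_{m-1}(i)$ (resp.\ $D_{m-1}(i)$). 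The remaining equalities in~(\ref{control_growth1}) are then nothing but the deterministic growth estimates~(\ref{control_2}) of Theorem~\ref{lemma_LD_bounds}, applied with $t=m-1$, rewritten using $(B^{(\ell)}e)_i=S_\ell(i)$ and $(B^{(\ell)}\sigma)_i=D_\ell(i)$ — these last two identities being again the cycle-free identity with $m-1$ replaced by $\ell$. One must check the event on which Theorem~\ref{lemma_LD_bounds} holds is compatible, with the stated probability, with ``$i\notin\mathcal B$'': since by Lemma~\ref{cycles} the conclusions hold with high probability, intersecting the two high-probability events suffices.

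For the second assertion, take $i\in\mathcal B$, so by the second half of Lemma~\ref{cycles} (using $c\log\alpha<1/4$) the $\ell$-neighborhood of $i$ contains exactly one cycle-edge. The bound $|(B^{(m)}\sigma)_i|\le |(B^{(m)}e)_i|$ is immediate since entries of $B^{(m)}$ are nonnegative and $\sigma$ has entries $\pm1$. For $(B^{(m)}e)_i=\sum_j B^{(m)}_{ij}$ I would argue that any self-avoiding path of length $m$ from $i$ lives in the $m$-neighborhood, which contains at most one extra edge beyond a spanning tree; a self-avoiding path in a graph that is a tree-plus-one-edge is determined by its endpoint together with a binary choice of which of the (at most two) arcs around the unique cycle it uses, so the number of such paths to a given $j$ is at most $2$, whence $(B^{(m)}e)_i\le 2\,|\{j:d_\g(i,j)\le m\}| = 2\sum_{t=0}^m S_t(i)$. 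The final $O(\log(n)\alpha^m)$ bound then follows from the first line of~(\ref{control_1}) in Theorem~\ref{lemma_LD_bounds}, summing the geometric series $\sum_{t=0}^m C\log(n)\alpha^t = O(\log(n)\alpha^m)$ (using $\alpha>1$, which is implied by Condition~(\ref{c1})).

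The main obstacle I anticipate is the combinatorial claim in the $i\in\mathcal B$ case — showing that with a single cycle-edge present each $B^{(m)}_{ij}$ is at most $2$. One has to rule out that a self-avoiding path reuses structure in a way that creates more than two paths to the same endpoint; this is where ``one cycle-edge'' (as opposed to merely ``one cycle'') is used, and it is precisely why the lemma invokes the sharper regime $c\log\alpha<1/4$ of Lemma~\ref{cycles} rather than $c\log\alpha<1/2$. I would make this precise by a short case analysis on where along the path the cycle-edge is traversed. Everything else is bookkeeping: tracking the probability $1-O(n^{-\epsilon})$ through the intersection of the events from Theorem~\ref{lemma_LD_bounds} and Lemma~\ref{cycles}, and absorbing constants into the $O(\cdot)$ notation.
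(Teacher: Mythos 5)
Your proposal is correct and follows essentially the same route as the paper's own proof: in the cycle-free case one identifies $B^{(m-1)}_{ij}$ with $\II_{d_{\g}(i,j)=m-1}$ (hence $(B^{(m-1)}e)_i=S_{m-1}(i)$, $(B^{(m-1)}\sigma)_i=D_{m-1}(i)$) and then plugs in the growth estimates~(\ref{control_2}) of Theorem~\ref{lemma_LD_bounds}, while in the case $i\in\mathcal{B}$ one uses Lemma~\ref{cycles} to say the $\ell$-neighborhood has at most one cycle-edge, so each target node is reached by at most two simple paths, and then sums the bound~(\ref{control_1}) over $t\le m$. The extra detail you give on the ``tree-plus-one-edge implies at most two simple paths between any pair'' step is exactly the combinatorial content the paper leaves implicit.
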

Proof is in the Appendix, together with that of the following Corollary:
\begin{corollary}\label{cor_int}
For all $m\in\{1,\ldots,\ell\}$ it holds with high probability that 
\begin{eqnarray}\label{bound_growth_1} 
\sup_{|x|=1,x'B^{(\ell)}e=x'B^{(\ell)}\sigma=0}|e'B^{(m-1)}x|=O\left(\log^3(n)\alpha^{\ell+m-1}+\sqrt{n}[\log(n)+\sqrt{\log(n) \alpha^{m-1}}]\right),
\\\label{bound_growth_2} 
\sup_{|x|=1,x'B^{(\ell)}e=x'B^{(\ell)}\sigma=0}|\sigma'B^{(m-1)}x|=O\left(\log^3(n)\alpha^{\ell+m-1}+\sqrt{n}[\log(n)+\sqrt{\log(n) \alpha^{m-1}}]\right).
\end{eqnarray}
\end{corollary}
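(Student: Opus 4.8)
\textbf{Proof plan for Corollary~\ref{cor_int}.}
The plan is to bound $|e'B^{(m-1)}x|$ (and identically $|\sigma'B^{(m-1)}x|$) by splitting the all-ones vector $e$ according to whether or not a node lies in the exceptional set $\mathcal{B}$ of nodes whose $\ell$-neighborhood contains a cycle. Writing $e = e_{\mathcal{B}} + e_{\mathcal{B}^c}$ with the obvious meaning, we have $e'B^{(m-1)}x = (B^{(m-1)}e_{\mathcal{B}})'x + (B^{(m-1)}e_{\mathcal{B}^c})'x$, and we treat the two pieces with different tools. For the $\mathcal{B}^c$ piece we use~(\ref{control_growth1}) of Lemma~\ref{det_grow2}, which expresses $(B^{(m-1)}e)_i$ on $\mathcal{B}^c$ as $\alpha^{m-1-\ell}(B^{(\ell)}e)_i$ up to an additive error $O(\log n) + O(\sqrt{\log(n)\alpha^{m-1}})$; for the $\mathcal{B}$ piece we use the crude bound~(\ref{control_growth2}), namely $|(B^{(m-1)}e)_i| = O(\log(n)\alpha^{m-1})$, together with the bound $|\mathcal{B}| = O(\log^4(n)\alpha^{2\ell})$ from Lemma~\ref{cycles} (valid since $c\log\alpha<1/4$).

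For the $\mathcal{B}^c$ piece, the leading term $\alpha^{m-1-\ell}(B^{(\ell)}e)_i$ contributes $\alpha^{m-1-\ell}\, (B^{(\ell)}e)'_{\mathcal{B}^c}\, x$. The point is that the orthogonality constraint $x'B^{(\ell)}e=0$ kills exactly the vector $B^{(\ell)}e$; only the truncated version $(B^{(\ell)}e)_{\mathcal{B}^c}$ appears, so we must control the difference $(B^{(\ell)}e)_{\mathcal{B}} \cdot x$, which by Cauchy--Schwarz is at most $|(B^{(\ell)}e)_{\mathcal{B}}|\cdot|x| = |(B^{(\ell)}e)_{\mathcal{B}}|$. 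Using again~(\ref{control_1}) of Theorem~\ref{lemma_LD_bounds} that $(B^{(\ell)}e)_i = S_\ell(i) \le C\log(n)\alpha^\ell$ uniformly, together with $|\mathcal{B}| = O(\log^4(n)\alpha^{2\ell})$, we get $|(B^{(\ell)}e)_{\mathcal{B}}| = O(\log n\,\alpha^\ell \sqrt{\log^4(n)\alpha^{2\ell}}) = O(\log^3(n)\alpha^{2\ell})$, so after multiplying by $\alpha^{m-1-\ell}$ this contributes $O(\log^3(n)\alpha^{\ell+m-1})$ — the first term in the claimed bound. The additive error term in~(\ref{control_growth1}), summed against $x$ over all of $\mathcal{B}^c \subseteq \nn$, is by Cauchy--Schwarz at most $\sqrt{n}\cdot O(\log(n) + \sqrt{\log(n)\alpha^{m-1}})$ — the second term. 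Finally the $\mathcal{B}$ piece gives, by Cauchy--Schwarz, at most $\sqrt{|\mathcal{B}|}\cdot O(\log(n)\alpha^{m-1}) = O(\sqrt{\log^4(n)\alpha^{2\ell}}\cdot \log(n)\alpha^{m-1}) = O(\log^3(n)\alpha^{\ell+m-1})$, which is absorbed into the first term. Summing the three contributions yields~(\ref{bound_growth_1}), and the argument for~(\ref{bound_growth_2}) is word-for-word identical with $\sigma$ in place of $e$, using the $D$-versions of the bounds and the fact that $|\sigma_i|=1$ so $|\sigma|=\sqrt n$ as well.

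The main obstacle — really the only subtle point — is the interplay between the orthogonality constraint and the truncation to $\mathcal{B}^c$: one has to notice that $x\perp B^{(\ell)}e$ is \emph{not} the same as $x\perp (B^{(\ell)}e)_{\mathcal{B}^c}$, so the error incurred is exactly the mass of $B^{(\ell)}e$ on $\mathcal{B}$, and this must be shown to be small enough. This is where the sharper hypothesis $c\log\alpha < 1/4$ (rather than merely $<1/2$) is used: it guarantees via Lemma~\ref{cycles} that $|\mathcal{B}|=O(\log^4(n)\alpha^{2\ell})$ with $\alpha^{2\ell}=n^{4c\log\alpha}=n^{o(1)}\cdot$(something $<n$), and more precisely that $\alpha^{4\ell} = n^{8c\log\alpha}$ stays a small power of $n$, keeping the bound $\log^3(n)\alpha^{\ell+m-1}$ genuinely of lower order than $\sqrt{\rho(B^{(\ell)})}\cdot n^\epsilon = n^\epsilon\alpha^{\ell/2}$ after one divides through — which is what the bound will ultimately be compared against in the proof of Theorem~\ref{th_ramanujan}. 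Everything else is routine Cauchy--Schwarz bookkeeping and invocation of the uniform controls already established.
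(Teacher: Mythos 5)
Your argument is essentially the paper's own proof: the same decomposition into $\mathcal{B}$ and $\mathcal{B}^c$, the same use of~(\ref{control_growth1}), (\ref{control_growth2}) and Lemma~\ref{cycles}, the same Cauchy--Schwarz bookkeeping, and the same identification of the truncated-orthogonality subtlety (that $x\perp B^{(\ell)}e$ only kills the untruncated vector, leaving the mass of $B^{(\ell)}e$ on $\mathcal{B}$ to control). The one small citation slip is that for $i\in\mathcal{B}$ the identity $(B^{(\ell)}e)_i=S_\ell(i)$ fails (that holds only on $\mathcal{B}^c$); the entrywise bound $(B^{(\ell)}e)_i=O(\log n\,\alpha^\ell)$ you want there comes from~(\ref{control_growth2}) rather than from~(\ref{control_1}) directly, but the arithmetic and conclusion are unchanged.
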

We are now ready to prove Theorem~\ref{th_ramanujan}:
\begin{proof}(of Theorem~\ref{th_ramanujan}).
Using identity (\ref{matrix_expansion}), write for unit norm $x$:
$$
|B^{(\ell)}x|\le \rho(\Delta^{(\ell)})+\sum_{m=1}^{\ell}\rho(\Delta^{(\ell-m)})|\bar{A}B^{(m-1)} x|+\sum_{m=1}^{\ell}\rho(\Gamma^{\ell,m}).
$$
We can ignore the terms $\rho(\Delta^{(\ell)})$ and $\rho(\Gamma^{\ell,m})$, known to be less than $n^{\epsilon}\alpha^{\ell/2}$ from~(\ref{rho_delta}) and~(\ref{rho_gamma}). Recalling the expression~(\ref{bara}) of $\bar{A}$, one has:
$$
|\bar{A}B^{(m-1)}x|\le \frac{a}{n}|B^{(m-1)}x|+O(n^{-1/2}[|\sigma'B^{(m-1)}x|+|e'B^{(m-1)}x| ]).
$$
Using the bounds~(\ref{bound_growth_1},\ref{bound_growth_2}), the right-hand side is no larger than
$$
\frac{a}{n}|B^{(m-1)}x|+O\left(n^{-1/2}\left(\log^3(n)\alpha^{\ell+m-1}+\sqrt{n}[\log(n)+\sqrt{\log(n) \alpha^{m-1}}]\right)\right).
$$
By the previous inequalities~(\ref{control_1},\ref{control_growth1},\ref{control_growth2}) and the row sum bound, we have that 
$$
\rho(B^{(m-1)})=O(\log(n)\alpha^{m-1}).
$$
This thus yields
$$
\begin{array}{ll}
|\bar{A}B^{(m-1)}x|&\le O\left(\frac{\log(n)\alpha^{m-1}}{n}+\frac{\log^3(n)\alpha^{\ell+m-1}}{n^{1/2}}+\log(n)+\sqrt{\log(n) \alpha^{m-1}}\right)
\\
&=O(\log(n)+\sqrt{\log(n) \alpha^{m-1}}).
\end{array}
$$
We thus have
$$
\begin{array}{ll}
|B^{(\ell)}x|&\le n^{\epsilon}\alpha^{\ell/2}+\sum_{m=1}^{\ell}\rho(\Delta^{(\ell-m)})O(\log(n)+\sqrt{\log(n) \alpha^{m-1}})\\
&\le n^{\epsilon}\alpha^{\ell/2}+\sum_{m=1}^{\ell}n^{\epsilon}\alpha^{(\ell-m)/2}O(\log(n)+\sqrt{\log(n) \alpha^{m-1}})\\
&\le n^{\epsilon}\alpha^{\ell/2}O(1+2\ell \log(n)).
\end{array}
$$
The result readily follows.
\end{proof}
We now state two Lemmas which will allow to establish Theorem~\ref{spectral_thm}.
\begin{lemma}\label{lemma_align} 
The following evaluations hold whp for $\ell=c\log n$ with $c\log\alpha<1/4$:
\begin{equation}
\begin{array}{ll}
|B^{(\ell)}e-\{S_{\ell}(i)\}_{i\in\nn}|&=o\left(|B^{(\ell)}e|\right),\\
|B^{(\ell)}\sigma-\{D_{\ell}(i)\}_{i\in\nn}|&=o\left(|B^{(\ell)}\sigma|\right),\\
<B^{(\ell)}e,B^{(\ell)}\sigma>&=o\left(|B^{(\ell)}e|\times |B^{(\ell)}\sigma|\right)
\end{array}
\end{equation}
\end{lemma}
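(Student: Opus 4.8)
The plan is to control the three quantities by exploiting the cycle structure of $\ell$-neighborhoods (Lemma~\ref{cycles}) together with the quasi-deterministic growth bounds of Theorem~\ref{lemma_LD_bounds} and Lemma~\ref{det_grow2}. First I would recall that for a node $i$ whose $\ell$-neighborhood is a tree (i.e.\ $i\notin\mathcal B$), every walk counted in $B^{(\ell)}_{ij}$ that ends at distance $\ell$ from $i$ is automatically self-avoiding, so $(B^{(\ell)}e)_i=S_\ell(i)$ and $(B^{(\ell)}\sigma)_i=D_\ell(i)$ exactly; the discrepancy between $B^{(\ell)}e$ and $\{S_\ell(i)\}$ therefore comes only from the ``bad'' nodes $i\in\mathcal B$. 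For such a node, Lemma~\ref{det_grow2} gives $|(B^{(\ell)}e)_i|=O(\log(n)\alpha^\ell)$ and $|(B^{(\ell)}\sigma)_i|=O(\log(n)\alpha^\ell)$, and $S_\ell(i)\le C\log(n)\alpha^\ell$, $|D_\ell(i)|\le C\log(n)\beta^\ell$ by (\ref{control_1}). Since by Lemma~\ref{cycles} $|\mathcal B|=O(\log^4(n)\alpha^{2\ell})$ whp (under $c\log\alpha<1/4$), we obtain
\begin{equation}\label{eq:align1}
|B^{(\ell)}e-\{S_\ell(i)\}|^2=\sum_{i\in\mathcal B}\left|(B^{(\ell)}e)_i-S_\ell(i)\right|^2=O\!\left(\log^4(n)\alpha^{2\ell}\cdot\log^2(n)\alpha^{2\ell}\right)=O(\log^6(n)\alpha^{4\ell}),
\end{equation}
and the same bound for the $\sigma$-version.

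Next I would produce a matching lower bound on $|B^{(\ell)}e|$. Here the local analysis is again decisive: by Theorem~\ref{lemma_LD_bounds}, for every good node $i$ one has $S_\ell(i)=\alpha^{\ell}\cdot\alpha^{-\ell}S_\ell(i)$ trivially, but more usefully $(\beta^{-\ell}D_\ell(i))$ and $(\alpha^{-\ell}S_\ell(i))$ concentrate around their means, so that $\sum_i S_\ell(i)^2=\Theta(n\,\alpha^{2\ell})$ and $\sum_i D_\ell(i)^2=\Theta(n\,\beta^{2\ell})$ whp (the martingale analysis of Section~\ref{sec:coupling}, culminating in Theorem~\ref{thm_local}, gives that $\alpha^{-\ell}S_\ell(i)\to$ a nondegenerate limit and $\beta^{-\ell}D_\ell(i)\to\sigma_i D_i$ with $\EE(D_i^2)=\EE(X^2)>0$). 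Hence $|B^{(\ell)}e|=\Omega(\sqrt n\,\alpha^\ell)$ and $|B^{(\ell)}\sigma|=\Omega(\sqrt n\,\beta^\ell)$. Comparing with (\ref{eq:align1}), the ratio is $O(\log^3(n)\alpha^{2\ell}/(\sqrt n\,\alpha^\ell))=O(\log^3(n)\alpha^{\ell}/\sqrt n)=O(\log^3(n)n^{c\log\alpha-1/2})=o(1)$ since $c\log\alpha<1/4<1/2$; this gives the first identity. For the second, the corresponding ratio is $O(\log^3(n)\alpha^{2\ell}/(\sqrt n\,\beta^\ell))$. Writing $\alpha^{2\ell}/\beta^\ell=\beta^\ell(\alpha^2/\beta)^\ell=\beta^\ell\cdot n^{c\log(\alpha^2/\beta)}$ and noting $|B^{(\ell)}\sigma|=\Omega(\sqrt n\beta^\ell)$, one needs $c\log(\alpha^2/\beta)<1/2$; I would verify this follows from $c\log\alpha<1/4$ together with $\alpha>\beta$ (so $\alpha^2/\beta<\alpha^3$, giving exponent $<3/4$)—and if a factor of $2$ is lost here I would instead bound $|\mathcal B|$ and the contribution of bad nodes more carefully, or note that on bad nodes $|D_\ell(i)|\le C\log(n)\beta^\ell$ is the relevant estimate, replacing one $\alpha^\ell$ by $\beta^\ell$ in the sum over $\mathcal B$ and recovering the needed margin.

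Finally, for the inner product $\langle B^{(\ell)}e,B^{(\ell)}\sigma\rangle$, I would replace the two vectors by $\{S_\ell(i)\}$ and $\{D_\ell(i)\}$ at the cost of the already-controlled errors, reducing to $\sum_i S_\ell(i)D_\ell(i)$. Using the coupling of Lemma~\ref{coupling_lemma_2} and the martingale description, $\alpha^{-\ell}S_\ell(i)$ and $\beta^{-\ell}\sigma_i D_\ell(i)$ are asymptotically i.i.d.\ across $i$ with the first having mean $1$ and $\sigma_i D_i$ having conditional mean $0$ given $\sigma_i$; more precisely $\EE(S_\ell(i)D_\ell(i))=o(\alpha^\ell\beta^\ell)$ because $D_\ell(i)$ has vanishing conditional mean, so a weak law of large numbers gives $\sum_i S_\ell(i)D_\ell(i)=o(n\,\alpha^\ell\beta^\ell)=o(|B^{(\ell)}e|\cdot|B^{(\ell)}\sigma|)$. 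The main obstacle I anticipate is precisely this last step: establishing the near-orthogonality requires the joint distributional control of $(S_\ell(i),D_\ell(i))$ provided by the tree coupling and martingale convergence of Section~\ref{sec:coupling}, rather than just the one-sided growth bounds, so the cleanest route is to invoke Theorem~\ref{thm_local} directly; a secondary nuisance is keeping track of the exact exponents in the $\alpha,\beta$ comparisons above so that every error term is genuinely $o(1)$ under the single hypothesis $c\log\alpha<1/4$.
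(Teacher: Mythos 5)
Your proposal follows the paper's proof essentially step for step: exact agreement on cycle-free nodes, Lemma~\ref{cycles} plus the growth bounds of Lemma~\ref{det_grow2} and Theorem~\ref{lemma_LD_bounds} to control the contribution of nodes in $\mathcal{B}$, the coupling/martingale machinery (Lemmas~\ref{coupling_lemma_1},~\ref{coupling_lemma_2}, Theorem~\ref{thm_local}) to get the lower bounds $|B^{(\ell)}e|=\Omega(\sqrt{n}\alpha^\ell)$, $|B^{(\ell)}\sigma|=\Omega(\sqrt{n}\beta^\ell)$, and the same machinery plus the sign symmetry in $\sigma$ for the near-orthogonality. The one place you hesitate, the exponent check $c\log(\alpha^2/\beta)<1/2$, resolves more cleanly than your fallback suggests: Condition~(\ref{c1}) gives $\beta^2>\alpha>1$, hence $\beta>1$, so $\log(\alpha^2/\beta)=2\log\alpha-\log\beta<2\log\alpha$ and thus $c\log(\alpha^2/\beta)<2c\log\alpha<1/2$ directly (and note the line $\alpha^{2\ell}/\beta^\ell=\beta^\ell(\alpha^2/\beta)^\ell$ should read $(\alpha^2/\beta)^\ell$); your alternative of replacing $\alpha^\ell$ by $\beta^\ell$ on the bad nodes would not actually work, since for $i\in\mathcal{B}$ it is $(B^{(\ell)}\sigma)_i$ that is only bounded by $O(\log(n)\alpha^\ell)$, not by $O(\log(n)\beta^\ell)$.
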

\begin{lemma}\label{lemma_trees}
The following inequalities hold for $\ell=c\log n$ with $c\log\alpha<1/4$:
\begin{eqnarray}\label{norm_bound_1} 
\Omega(\alpha^{\ell}) |B^{(\ell)}e|\le |B^{(\ell)} B^{(\ell)}e|\le O(\log^2 n \alpha^{\ell}) |B^{(\ell)}e|
\\
\label{norm_bound_2} 
\Omega(\beta^{\ell}) |B^{(\ell)}\sigma| \le |B^{(\ell)} B^{(\ell)}\sigma|\le O(\log^{4}(n) \beta^{\ell}) |B^{(\ell)}\sigma|.
\end{eqnarray}
\end{lemma}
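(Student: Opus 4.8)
The plan is to bound $|B^{(\ell)}B^{(\ell)}u|$ from above and below for $u\in\{e,\sigma\}$ by exploiting, on the one hand, the matrix expansion~(\ref{matrix_expansion}) of Theorem~\ref{global}, and on the other hand the quasi-deterministic growth of the local quantities $S_t,D_t$ from Theorem~\ref{lemma_LD_bounds} together with their transfer to $B^{(m)}e$, $B^{(m)}\sigma$ provided by Lemma~\ref{det_grow2}. The key structural observation is that, modulo the exceptional set $\mathcal B$ of nodes with a cycle in their $\ell$-neighborhood (which by Lemma~\ref{cycles} is of negligible size $O(\log^4 n\,\alpha^{2\ell})$ and, by the row-sum bound $\rho(B^{(m)})=O(\log n\,\alpha^m)$, contributes only lower-order terms), one has $B^{(m-1)}e\approx \alpha^{m-1-\ell}B^{(\ell)}e$ and $B^{(m-1)}\sigma\approx \beta^{m-1-\ell}B^{(\ell)}\sigma$ up to additive errors $O(\log n)+O(\sqrt{\log n\,\alpha^{m-1}})$ per coordinate, hence $O(\sqrt n\,(\log n+\sqrt{\log n\,\alpha^{m-1}}))$ in Euclidean norm. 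Since $|B^{(\ell)}e|=\Theta(\sqrt n\,\alpha^\ell)$ up to logs (from $S_\ell(i)=\Theta(\alpha^\ell)$ for typical $i$) and $|B^{(\ell)}\sigma|=\Theta(\sqrt n\,\beta^\ell)$ up to logs, these additive errors are $o(\alpha^{\ell-m+1}|B^{(\ell)}e|)$ and $o(\beta^{\ell-m+1}|B^{(\ell)}\sigma|)$ respectively, using $c\log\alpha<1/4$.

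For the \emph{upper} bounds, apply~(\ref{matrix_expansion}) to $B^{(\ell)}u$:
\[
|B^{(\ell)}B^{(\ell)}u|\le \rho(\Delta^{(\ell)})|B^{(\ell)}u|+\sum_{m=1}^{\ell}\rho(\Delta^{(\ell-m)})\,|\bar A\,B^{(m-1)}B^{(\ell)}u|+\sum_{m=1}^{\ell}\rho(\Gamma^{\ell,m})|B^{(\ell)}u|,
\]
treating $B^{(\ell)}u$ as the vector to which the expansion is applied. By~(\ref{rho_delta}),~(\ref{rho_gamma}) the first and third sums are $O(n^\epsilon\alpha^{\ell/2}\,\ell)\cdot|B^{(\ell)}u|$, which is $o(\alpha^\ell)|B^{(\ell)}u|$ once $c\log\alpha<1/4$ (so that $\alpha^{\ell/2}=n^{c\log\alpha\cdot\log... }\ll\alpha^\ell$, and the $n^\epsilon$ factor is absorbed by taking $\epsilon$ small). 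For the middle sum, $B^{(m-1)}B^{(\ell)}u$ is computed coordinatewise: writing $w=B^{(\ell)}u$, one needs $|\bar A B^{(m-1)}w|$; using $\bar A=\frac an(\tfrac12(ee'+\sigma\sigma')-I)+\tfrac b{2n}(ee'-\sigma\sigma')$, this reduces to controlling $\frac1n|B^{(m-1)}w|$, $\frac1n|e'B^{(m-1)}w|\,|e|$ and $\frac1n|\sigma'B^{(m-1)}w|\,|\sigma|$. The first is $O(\log n\,\alpha^{m-1}/n)|w|$ by the row-sum bound; for the latter two, note $e'B^{(m-1)}=\{S_{m-1}(i)\}'$ and $\sigma'B^{(m-1)}=\{D_{m-1}(i)\}'$ outside $\mathcal B$, so $e'B^{(m-1)}B^{(\ell)}u=\sum_i S_{m-1}(i)(B^{(\ell)}u)_i$ which, by the growth law, is $\approx\alpha^{m-1-\ell}\sum_i (B^{(\ell)}e)_i(B^{(\ell)}u)_i$, i.e. $\alpha^{m-1-\ell}\langle B^{(\ell)}e,B^{(\ell)}u\rangle$ plus a controlled error. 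When $u=e$ this inner product is $|B^{(\ell)}e|^2=\Theta(n\alpha^{2\ell})$ up to logs; when $u=\sigma$ it is $o(|B^{(\ell)}e||B^{(\ell)}\sigma|)$ by Lemma~\ref{lemma_align}. Collecting terms and summing the geometric series in $m$, the dominant contribution is $O(\log^2 n\,\alpha^\ell)|B^{(\ell)}e|$ for $u=e$ and $O(\log^4 n\,\beta^\ell)|B^{(\ell)}\sigma|$ for $u=\sigma$, matching~(\ref{norm_bound_1}),~(\ref{norm_bound_2}).

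For the \emph{lower} bounds, the cleanest route is a test-vector argument rather than the expansion. For~(\ref{norm_bound_1}), since $B^{(\ell)}$ has nonnegative entries and $B^{(\ell)}e$ has nonnegative entries bounded below by $S_\ell(i)=\Omega(\alpha^\ell)$ on a constant fraction of $i$ (by Theorem~\ref{lemma_LD_bounds} and the local analysis), one estimates $\langle e, B^{(\ell)}B^{(\ell)}e\rangle=\langle B^{(\ell)}e,B^{(\ell)}e\rangle=|B^{(\ell)}e|^2$; but also $\langle e,B^{(\ell)}B^{(\ell)}e\rangle=\sum_i S_\ell(i)(B^{(\ell)}B^{(\ell)}e)_i\le |\{S_\ell(i)\}|\cdot|B^{(\ell)}B^{(\ell)}e|=O(|B^{(\ell)}e|)|B^{(\ell)}B^{(\ell)}e|$, giving $|B^{(\ell)}B^{(\ell)}e|\ge |B^{(\ell)}e|^2/O(|B^{(\ell)}e|)=\Omega(|B^{(\ell)}e|)$; sharpening this to the factor $\Omega(\alpha^\ell)$ requires that the mass $(B^{(\ell)}B^{(\ell)}e)_i$ is again of order $\alpha^\ell S_\ell(i)$-like, which follows from applying the growth law once more, i.e. $B^{(\ell)}(B^{(\ell)}e)\approx$ ``$(B^{(\ell)}e)$ re-expanded'', spreading mass $\Theta(\alpha^\ell)$ per additional level. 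Equivalently, and more robustly: take the unit vector $v=B^{(\ell)}e/|B^{(\ell)}e|$ and show $\langle v, B^{(\ell)}v\rangle=\Omega(\alpha^\ell)$ (the Rayleigh quotient is large because $B^{(\ell)}e$ is nearly the Perron vector, Theorem~\ref{th_1}(i) — though logically we should not invoke that here, so instead use $\langle B^{(\ell)}e, B^{(\ell)}B^{(\ell)}e\rangle=|B^{(\ell)}B^{(\ell)}e|^2+\dots$, no — the direct computation $\langle B^{(\ell)}e,B^{(\ell)}\cdot B^{(\ell)}e\rangle$ via the growth law gives $\approx \alpha^\ell |B^{(\ell)}e|^2$). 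For~(\ref{norm_bound_2}), run the identical argument with $\sigma$ in place of $e$: $\langle\sigma,B^{(\ell)}B^{(\ell)}\sigma\rangle=|B^{(\ell)}\sigma|^2$ while $\langle\sigma,B^{(\ell)}B^{(\ell)}\sigma\rangle=\sum_i D_\ell(i)(B^{(\ell)}B^{(\ell)}\sigma)_i$, and the growth law for $D$ with ratio $\beta$ forces $(B^{(\ell)}B^{(\ell)}\sigma)_i\approx\beta^\ell D_\ell(i)$ up to the cycle-set error, delivering the factor $\Omega(\beta^\ell)$.

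I expect the \textbf{main obstacle} to be the lower bound~(\ref{norm_bound_2}) for $\sigma$: unlike the $e$-case, the entries of $B^{(\ell)}\sigma$ are signed and can have cancellations, so one cannot argue by positivity, and one must genuinely use the quasi-deterministic growth $D_t(i)=\beta^{t-\ell}D_\ell(i)+O(\log n+\sqrt{\log n\,\alpha^t})$ to guarantee that $B^{(\ell)}(B^{(\ell)}\sigma)$ does not lose a polynomial-in-$n$ factor to cancellation. The subtle point is that the additive error $O(\sqrt{\log n\,\alpha^t})$ in the growth law, when propagated through a second application of $B^{(\ell)}$ and summed over coordinates, gives an $\ell^2$-norm error of order $\sqrt n\cdot\sqrt{\log n\,\alpha^\ell}$, which must be shown to be $o(\beta^\ell|B^{(\ell)}\sigma|)=o(\beta^\ell\cdot\sqrt n\,\beta^\ell)$ up to logs; this is where $\beta^2>\alpha$ (Condition~(\ref{c1})!) is essential — it is exactly the condition that makes $\alpha^\ell=o(\beta^{2\ell})$, so the second-moment/variance error is dominated by the signal. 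Handling this cleanly, and making sure the exceptional cycle-set $\mathcal B$ is correctly quarantined at both the first and second application of $B^{(\ell)}$, is the crux; everything else is bookkeeping with geometric sums and the already-established norm estimates $|B^{(\ell)}e|=\tilde\Theta(\sqrt n\,\alpha^\ell)$, $|B^{(\ell)}\sigma|=\tilde\Theta(\sqrt n\,\beta^\ell)$.
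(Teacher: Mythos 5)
Your lower-bound Cauchy--Schwarz is applied against the wrong test vector. You estimate $\langle e, B^{(\ell)}B^{(\ell)}e\rangle$ by rewriting it as $\sum_i S_\ell(i)\,(B^{(\ell)}B^{(\ell)}e)_i$ and then bounding by $|\{S_\ell(i)\}|\cdot|B^{(\ell)}B^{(\ell)}e|$; since $|\{S_\ell(i)\}|\asymp\sqrt n\,\alpha^\ell$, this loses the whole factor $\alpha^\ell$, as you noticed. The paper instead applies Cauchy--Schwarz directly against $e$ itself: $|B^{(\ell)}e|^2 = \langle e, B^{(\ell)}B^{(\ell)}e\rangle \le |e|\,|B^{(\ell)}B^{(\ell)}e| = \sqrt n\,|B^{(\ell)}B^{(\ell)}e|$, giving $|B^{(\ell)}B^{(\ell)}e|\ge |B^{(\ell)}e|^2/\sqrt n \asymp \alpha^\ell|B^{(\ell)}e|$. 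The identical one-line argument with $\sigma$ in place of $e$ (and $|\sigma|=\sqrt n$) delivers $\Omega(\beta^\ell)|B^{(\ell)}\sigma|$. In particular, the lower bound in~(\ref{norm_bound_2}) is \emph{not} the hard part: there are no cancellations to fear, since $\langle\sigma,B^{(\ell)}B^{(\ell)}\sigma\rangle = |B^{(\ell)}\sigma|^2$ is nonnegative by symmetry of $B^{(\ell)}$, and your worry about the signed entries of $B^{(\ell)}\sigma$ is misplaced there.

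The genuinely hard part is the \emph{upper} bound of~(\ref{norm_bound_2}), and your route via the matrix expansion does not close it. When you expand $|\bar A\,B^{(m-1)}B^{(\ell)}\sigma|$ you must control the cross term $e'B^{(m-1)}B^{(\ell)}\sigma \approx \alpha^{m-1-\ell}\langle B^{(\ell)}e,\,B^{(\ell)}\sigma\rangle$, and you invoke Lemma~\ref{lemma_align} to say this is $o(|B^{(\ell)}e|\,|B^{(\ell)}\sigma|)$. But that $o(\cdot)$ is only qualitative; it does not rule out, say, a polynomially small but non-negligible correlation, and tracing it through yields a contribution of order $n^\epsilon\alpha^\ell|B^{(\ell)}\sigma|$ rather than the claimed $\tilde O(\beta^\ell)|B^{(\ell)}\sigma|$. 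Since under Condition~(\ref{c1}) one can have $\alpha>\beta$ (take $\alpha$ slightly less than $\beta^2$ with $\beta$ barely above $1$), the factor $\alpha^\ell$ is strictly too large. The paper avoids this entirely: for $i$ with tree-like $2\ell$-neighborhood it writes $(B^{(\ell)}B^{(\ell)}\sigma)_i = \sum_{d}\sum_{j:d(i,j)=2d}\sigma_j|\{k:d(i,k)=d(j,k)=\ell\}|$, passes to the coupled Poisson branching process, and computes the conditional second moment using the \emph{explicit} correlation decay $\EE(\sigma_j\sigma_{j'}\,|\,\mathcal T)=O\bigl((\beta/\alpha)^{d(j,j')}\bigr)$ of the spin Markov chain on the tree. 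Summing over pairs at tree-distance $2(d+d'-\tau)$ produces $\EE(X^2|\mathcal T)=\tilde O(\beta^{4\ell})$, hence $|B^{(\ell)}B^{(\ell)}\sigma|=\tilde O(\sqrt{n\beta^{4\ell}})=\tilde O(\beta^\ell|B^{(\ell)}\sigma|)$. This spin-correlation mechanism is exactly what your expansion loses, because $\bar A$ keeps only the global rank-two structure $\tfrac{\alpha}{n}ee'+\tfrac{\beta}{n}\sigma\sigma'$ and cannot see the local decorrelation at tree-distance $d(j,j')$. (For the upper bound in~(\ref{norm_bound_1}) the paper simply uses the max row-sum bound $\rho(B^{(\ell)})=O(\log n\,\alpha^\ell)$, so the matrix expansion is overkill there too and, as you note, would only give an $n^\epsilon$ rather than a $\log$ factor.)
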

Using these, we now establish the following
\begin{theorem}\label{spectral_thm}
For $\ell=c\log n$ with $c\log\alpha<1/4$, the two leading eigenvectors of $B^{(\ell)}$ are asymptotically aligned with vectors $\{S_{\ell}(i)\}$, $\{D_{\ell}(i)\}$, with corresponding eigenvalues of order up to logarithmic terms $\Theta(\alpha^{\ell})$ and $\Omega(\beta^{\ell})$. All other eigenvalues are $O(n^{\epsilon}\sqrt{\alpha^{\ell}})$ for any fixed $\epsilon>0$. 
\end{theorem}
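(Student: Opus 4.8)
The plan is to combine the weak Ramanujan bound of Theorem~\ref{th_ramanujan} with the alignment Lemma~\ref{lemma_align} and the norm-growth Lemma~\ref{lemma_trees} through a standard min-max / Courant--Fischer argument. First I would set $u = B^{(\ell)}e$ and $v = B^{(\ell)}\sigma$. By Lemma~\ref{lemma_align}, $u$ and $v$ are asymptotically orthogonal (after normalization), so they span an essentially two-dimensional subspace $V$. I would exhibit that on $V$ the operator $B^{(\ell)}$ acts with large Rayleigh quotients: using Lemma~\ref{lemma_trees}, $|B^{(\ell)}u| \ge \Omega(\alpha^{\ell})|u|$ and $|B^{(\ell)}v| \ge \Omega(\beta^{\ell})|v|$; combined with the matching upper bounds this pins down the orders of magnitude. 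Note $B^{(\ell)}$ need not be symmetric, so I would actually work with the symmetric matrix $M = B^{(\ell)}(B^{(\ell)})' $ (or argue via singular values and the fact that the constructed vectors are near-eigenvectors), since the cleanest route to eigenvalue statements is through $M$'s Rayleigh quotients on $V$.

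Second, for the ``all other eigenvalues are small'' part, I would use the variational characterization: the third largest eigenvalue (in absolute value) of $B^{(\ell)}$ is at most $\sup\{|B^{(\ell)}x| : |x|=1,\ x\perp w_1,\ x\perp w_2\}$ where $w_1,w_2$ are the top two eigenvectors. Since $w_1,w_2$ are asymptotically aligned with $u$ and $v$ by the argument just sketched, being orthogonal to $w_1,w_2$ is, up to an $o(1)$ perturbation, the same as satisfying $x'B^{(\ell)}e = x'B^{(\ell)}\sigma = 0$; on that near-subspace Theorem~\ref{th_ramanujan} gives $|B^{(\ell)}x| \le n^{\epsilon}\alpha^{\ell/2}$. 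The perturbation incurred by replacing the exact orthogonality constraint with the approximate one costs only $o(1)$ times $\rho(B^{(\ell)}) = O(\log^2 n\, \alpha^{\ell})$ times the alignment error, which by Lemma~\ref{lemma_align} is $o(1)$, hence still dominated by — or absorbed into — $n^{\epsilon}\alpha^{\ell/2}$ after adjusting $\epsilon$; here one uses $\alpha^{\ell} = n^{c\log\alpha}$ with $c\log\alpha<1/4$, so a factor $\mathrm{polylog}(n)\cdot\alpha^{\ell/2}$ is still $\le n^{\epsilon}\alpha^{\ell/2}$.

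Third, to separate the leading eigenvalue from the second I would observe that $\alpha^{\ell} \gg \beta^{\ell}$ is false in general — wait, under Condition~(\ref{c1}) we have $\beta^2 > \alpha$, so in fact $\beta^{\ell}$ is the \emph{larger} of the two only if $\beta>\alpha$, which need not hold ($\beta \le \alpha$ always since $b\ge 0$). So $\alpha^{\ell} \ge \beta^{\ell}$ always, and the gap between them is what distinguishes the first eigenvalue ($\Theta(\alpha^\ell)$ up to logs) from the second ($\Omega(\beta^\ell)$). One must check they are genuinely of different orders, or at least that the first strictly dominates: since $\beta^2>\alpha$ gives $\beta^\ell = \alpha^{\ell/2}(\beta^2/\alpha)^{\ell/2}$, and $\beta^2/\alpha$ could be close to $1$, the separation between $\lambda_1$ and $\lambda_2$ may only be polynomial in $n$; but both exceed $n^{\epsilon}\alpha^{\ell/2}$ for small $\epsilon$ since $\beta^\ell/\alpha^{\ell/2} = (\beta^2/\alpha)^{\ell/2} = n^{\Omega(1)}$, which is what is needed to place them above the bulk. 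I would make this explicit: choose $\epsilon$ small enough that $n^{\epsilon}\alpha^{\ell/2} = o(\beta^{\ell})$, possible precisely because $\beta^2/\alpha>1$ is a fixed constant.

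The main obstacle I anticipate is the perturbation bookkeeping in the second step: turning ``asymptotically aligned'' eigenvectors into a rigorous control of the orthogonal complement. Concretely, one needs a Davis--Kahan / $\sin\Theta$-type argument showing that because $w_1,w_2$ have large eigenvalues well-separated from the rest, and $u,v$ are near-eigenvectors with the same large Rayleigh quotients, the angle between $\mathrm{span}(w_1,w_2)$ and $\mathrm{span}(u,v)$ is $o(1)$; this uses the gap $\beta^\ell \gg n^\epsilon\alpha^{\ell/2}$ crucially as the separation parameter. The non-symmetry of $B^{(\ell)}$ complicates the off-the-shelf application of such perturbation theorems, so some care is needed — either pass to $B^{(\ell)}(B^{(\ell)})'$ throughout, or invoke a non-symmetric analogue, checking that $B^{(\ell)}u$ and $B^{(\ell)}v$ (and the left-eigenvector analogues) all stay inside the near-span of $u,v$ using Lemma~\ref{lemma_trees}. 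Once that angle bound is in hand, the rest is the routine min-max estimate above.
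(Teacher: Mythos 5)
Your proposal follows essentially the same route as the paper: combine the weak Ramanujan bound (Theorem~\ref{th_ramanujan}), the norm-growth estimates of Lemma~\ref{lemma_trees}, and the asymptotic orthogonality and alignment of Lemma~\ref{lemma_align}, and finish with Courant--Fischer; the paper's proof is exactly this sketch, stated more tersely. One correction worth making: $B^{(\ell)}$ \emph{is} symmetric --- the graph is undirected, so reversing a simple path of length $\ell$ from $i$ to $j$ gives one from $j$ to $i$, hence $B^{(\ell)}_{ij}=B^{(\ell)}_{ji}$ --- so the detour through $B^{(\ell)}(B^{(\ell)})'$ and the worry about a non-symmetric $\sin\Theta$ theorem are unnecessary; the standard symmetric min-max and Davis--Kahan arguments apply directly, which is what makes the paper's one-paragraph proof legitimate.
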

\begin{proof}
Estimates~(\ref{norm_bound_1}--\ref{norm_bound_2}) and the weak Ramanujan property of Theorem~\ref{th_ramanujan} imply that the leading eigenvector is aligned with $B^{(\ell)}e$ and has eigenvalue $\alpha^{\ell}$ up to logarithmic terms. The second eigenvector is necessarily asymptotically in the span of  $\{B^{(\ell)}e,B^{(\ell)}\sigma\}$ and with eigenvalue $\Omega(\beta^{\ell})$. By asymptotic orthonormality of vectors $B^{(\ell)}e$ and $B^{(\ell)}\sigma$ and their asymptotic alignment with $\{S_{\ell}(i)\}$, $\{D_{\ell}(i)\}$ respectively, the conclusion regarding the first two eigen-elements follows. The bound on the magnitude of other eigenvalues follows from Theorem~\ref{th_ramanujan} and the Courant-Fisher theorem. 
\end{proof}
\subsection{Coupling with Poisson tree growth process}\label{sec:coupling}
Introduce the stochastic process $\{V^{\pm}_t\}_{t\ge 0}$ defined by
\begin{equation}\label{tree_process}
\begin{array}{l}
V^+_0=1,\; V^-_0=0,\\
V^+_t,\; V^-_t \hbox{independent conditionally on }\GG_{t-1},\\
\mathcal{L}(V^{\pm}_t|\GG_{t-1})=\hbox{Poi}((a/2)V^{\pm}_{t-1}+(b/2)V^{\mp}_{t-1})
\end{array}
\end{equation}
where $\GG_{t-1}=\sigma(V^{\pm_k},k\le t-1)$. We then have the following
\begin{lemma}\label{coupling_lemma_1}
Let $i\in\nn$ be fixed with spin $\sigma_i=\sigma$. For a constant $c>0$ such that $c\log(\alpha)<1/2$, and $\ell=c\log(n)$, the following holds. The variation distance between $(U^{\pm}_{t}(i))_{t\le \ell}$ and $(V^{\pm\sigma}_{t})_{t\le \ell}$ goes to zero as a negative power of $n$ as $n\to\infty$. 
\end{lemma}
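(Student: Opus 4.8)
The plan is to build an explicit coupling between the neighborhood exploration process $(U^\pm_t(i))_{t\le\ell}$ and the two-type Poisson branching process $(V^{\pm\sigma}_t)_{t\le\ell}$, exploiting the fact that for $\ell=c\log n$ with $c\log\alpha<1/2$ the total population explored stays $n^{1/2-\epsilon}\,\mathrm{polylog}(n)=o(\sqrt n)$ whp, so the ``without replacement'' and ``finite-$n$'' effects are negligible. Concretely, I would construct both processes on a common probability space level by level. At step $t$, conditionally on $\GG_{t-1}$ (equivalently $\mathcal F_{t-1}$), a node at distance $t-1$ with spin $s$ generates new $\pm$-neighbors; in the graph this is $\mathrm{Bin}(n^{s'}-U^{s'}_{\le t-1},\,q)$ summed over the $U^{s}_{t-1}$ active nodes (with $q$ one of $a/n$ or $b/n$ depending on $s,s'$), while in the tree it is $\mathrm{Poi}((a/2)V^{s}_{t-1}+(b/2)V^{s'}_{t-1})$-type offspring. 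Because the $\sigma_i$ are i.i.d. uniform, $n^\pm=n/2+O(\sqrt{n\log n})$ whp, so $n^{s'}-U^{s'}_{\le t-1}=n/2\,(1+o(1))$ on the good event, making the per-active-node offspring count approximately $\mathrm{Bin}(n/2,a/n)\approx\mathrm{Poi}(a/2)$ resp. $\mathrm{Bin}(n/2,b/n)\approx\mathrm{Poi}(b/2)$, which matches the tree dynamics.

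The key steps, in order, are: (1) Condition on the high-probability event $\mathcal E$ from the proof of Theorem~\ref{lemma_LD_bounds} that $S_t(i)\le C\log(n)\alpha^t$ for all $t\le\ell$, so that $U^\pm_{\le\ell}(i)=O(\log(n)\alpha^\ell)=O(n^{1/2-\epsilon'}\log n)$; this has probability $1-O(n^{-\epsilon})$. (2) On $\mathcal E$, bound the total-variation distance between the offspring law in the graph and in the tree at each step: a $\mathrm{Bin}(N,\lambda/N)$ versus $\mathrm{Poi}(\lambda)$ coupling costs $O(\lambda^2/N)$ in TV, and replacing $N=n^{s'}-U^{s'}_{\le t-1}$ by its ideal value $n/2$ costs $O(U^{s'}_{\le\ell}/n)+O(\sqrt{\log n/n})$; summing the per-node offspring over the $O(\log(n)\alpha^{t-1})$ active nodes and then over $t\le\ell$ gives a cumulative TV cost of order $\mathrm{polylog}(n)\cdot\alpha^{2\ell}/n=\mathrm{polylog}(n)\cdot n^{-2\epsilon'}$, a negative power of $n$. (3) Add the $O(n^{-\epsilon})$ probability of $\mathcal E^c$ and the coupling of spin labels (each newly discovered node gets an independent uniform $\pm1$ in the graph, matching the tree's deterministic type bookkeeping, since unexplored nodes have fresh i.i.d. spins), and conclude by the standard fact that a coupling with failure probability $p$ witnesses TV distance $\le p$.

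The main obstacle is step (2): making precise that the dependence introduced by sampling neighbors \emph{without} replacement from $\nn^{s'}\setminus\mathcal U^{s'}_{\le t-1}$ — so that distinct active nodes at level $t-1$ compete for the same pool, and a single target node could in principle be hit twice — is absorbed into the TV budget. The clean way is to run the graph exploration itself as a sequential process that reveals edges one active node at a time, use that on $\mathcal E$ the depletion $U^{s'}_{\le\ell}/n=O(n^{-\epsilon'}\log n)\to0$, and note that any ``collision'' event (which is exactly a cycle in the $\ell$-neighborhood, controlled by Lemma~\ref{cycles}) contributes only $O(\mathrm{polylog}(n)\,\alpha^{2\ell}/n)$; outside collisions the without-replacement Bernoullis dominate/are-dominated by i.i.d. Bernoullis with ratio $1\pm O(n^{-\epsilon'}\log n)$, which feeds back into the Binomial-to-Poisson estimate. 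This is precisely the same $\alpha^{2\ell}/n=o(1)$ mechanism already used in the proof of Lemma~\ref{coupling_lemma_2}, so the estimate is essentially a one-node specialization of that argument, and I would present it as such.
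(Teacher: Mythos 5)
Your proposal follows the same overall strategy as the paper's proof: condition on the high-probability event that $S_t(i)\le C\log(n)\alpha^t$ for all $t\le\ell$ (the paper's $\Omega_k$ from the proof of Theorem~\ref{lemma_LD_bounds}), then compare the conditional offspring law at each generation to the corresponding Poisson law, sum the per-level TV errors over $\ell=O(\log n)$ levels, and add the failure probability of the good event. The key quantitative ingredient is also the same: the Binomial-to-Poisson TV bound (the paper invokes Stein--Chen giving $n\min(1,\lambda^{-1})(\lambda/n)^2\le\lambda/n$; your $O(\lambda^2/N)$ is the unimproved Le Cam form, which suffices here), plus the cost $|\lambda-\lambda'|$ of mismatched Poisson parameters coming from the depletion $U^\pm_{\le k-1}$ and the fluctuation $|n^\pm-n/2|=O(\sqrt{n\log n})$.

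The one place where you diverge stylistically is that you decompose the level-$k$ offspring \emph{per active node} at level $k-1$ and then have to control collisions (two active nodes competing for the same target), appealing to the cycle count of Lemma~\ref{cycles}. The paper sidesteps this entirely by using the exact aggregate description in equation~(\ref{dynamics1}): conditionally on $\mathcal F_{k-1}$, the level-$k$ count is a \emph{single} binomial $\mathrm{Bin}\bigl(n^\pm-U^\pm_{\le k-1},\,1-(1-a/n)^{U^\pm_{k-1}}(1-b/n)^{U^\mp_{k-1}}\bigr)$, because each not-yet-explored node is hit independently by at least one level-$(k-1)$ node with that probability. Comparing this binomial to $\mathrm{Poi}\bigl(\tfrac{aU^\pm_{k-1}+bU^\mp_{k-1}}{2}\bigr)$ via Stein--Chen plus the elementary inequality~(\ref{basic_ineq}) gives the per-step TV bound directly, with no collision bookkeeping. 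Your route is valid (collisions are exactly the $O(\alpha^{2\ell}/n)$ events you identify), but the aggregate-binomial route is tighter and shorter; it is worth knowing both, since the per-node decomposition is what you'd need in models where the exact aggregate law is unavailable.
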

The proof given in the Appendix relies on the Stein-Chen method for Poisson approximation.

Define now the processes 
\begin{equation}
\begin{array}{ll}
M_t&=\alpha^{-t} (V^+_t+V^-_t),\\
\Delta_t&=\beta^{-t}(V^+_t-V^-_t),
\end{array}
\end{equation}
where $V^{\pm}_t$ is as defined in (\ref{tree_process}). We then have the following
\begin{lemma}\label{martingales1}
The two processes $\{M_t\}$, $\{\Delta_t\}$ are $\GG_t$-martingales. Process $\{M_t\}$ is uniformly integrable under Condition $\alpha>1$. Under Condition $\beta^2>\alpha$, process $\{\Delta_t\}$ is also uniformly integrable.
\end{lemma}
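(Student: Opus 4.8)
The plan is to verify the martingale property directly from the defining recursion \eqref{tree_process}, and then establish uniform integrability by showing that each process is bounded in $L^2$ under the stated conditions.

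\textbf{Martingale property.} First I would compute $\EE[V^{\pm}_t \mid \GG_{t-1}]$. Since $\mathcal{L}(V^{\pm}_t \mid \GG_{t-1}) = \hbox{Poi}((a/2)V^{\pm}_{t-1} + (b/2)V^{\mp}_{t-1})$, we have $\EE[V^{\pm}_t \mid \GG_{t-1}] = (a/2)V^{\pm}_{t-1} + (b/2)V^{\mp}_{t-1}$. Adding the two identities gives $\EE[V^+_t + V^-_t \mid \GG_{t-1}] = \frac{a+b}{2}(V^+_{t-1}+V^-_{t-1}) = \alpha (V^+_{t-1}+V^-_{t-1})$, so $\EE[M_t \mid \GG_{t-1}] = \alpha^{-t}\cdot\alpha(V^+_{t-1}+V^-_{t-1}) = M_{t-1}$; subtracting gives $\EE[V^+_t - V^-_t \mid \GG_{t-1}] = \frac{a-b}{2}(V^+_{t-1}-V^-_{t-1}) = \beta(V^+_{t-1}-V^-_{t-1})$, hence $\EE[\Delta_t \mid \GG_{t-1}] = \Delta_{t-1}$. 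Integrability of each $V^{\pm}_t$ (needed to make the conditional expectations meaningful) follows by induction since a Poisson variable with integrable random mean is integrable. This settles that both are $\GG_t$-martingales.

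\textbf{Uniform integrability.} For a martingale, boundedness in $L^2$ implies uniform integrability, so I would bound $\EE[M_t^2]$ and $\EE[\Delta_t^2]$. Using that for $N \mid \mu \sim \hbox{Poi}(\mu)$ one has $\EE[N^2 \mid \mu] = \mu + \mu^2$ and, by conditional independence of $V^+_t, V^-_t$ given $\GG_{t-1}$, $\EE[V^+_t V^-_t \mid \GG_{t-1}] = \EE[V^+_t\mid\GG_{t-1}]\EE[V^-_t\mid\GG_{t-1}]$, I would expand $\EE[(V^+_t + V^-_t)^2 \mid \GG_{t-1}]$ and $\EE[(V^+_t - V^-_t)^2 \mid \GG_{t-1}]$. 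The square terms reproduce $\alpha^2(V^+_{t-1}+V^-_{t-1})^2$ and $\beta^2(V^+_{t-1}-V^-_{t-1})^2$ respectively, while the Poisson variance contributes an extra linear term of order $\alpha(V^+_{t-1}+V^-_{t-1})$. Taking expectations and dividing by $\alpha^{2t}$ (resp. $\beta^{2t}$), I get recursions of the form $\EE[M_t^2] = \EE[M_{t-1}^2] + \alpha^{-t}\,\EE[M_{t-1}]$ and $\EE[\Delta_t^2] = \EE[\Delta_{t-1}^2] + c\,(\alpha/\beta^2)^{t}\,\EE[M_{t-1}]$ for an explicit constant $c$ (using $\EE[V^+_{t-1}+V^-_{t-1}] = \alpha^{t-1}$). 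Since $\EE[M_{t-1}] = M_0 = 1$, the first recursion telescopes to $\EE[M_t^2] = 1 + \sum_{s=1}^{t}\alpha^{-s} \le 1 + (\alpha-1)^{-1}$ under $\alpha>1$, and the second telescopes to a geometric series with ratio $\alpha/\beta^2 < 1$ under Condition~\eqref{c1}, giving $\sup_t \EE[\Delta_t^2] < \infty$ with limiting variance exactly $1/(\beta^2/\alpha - 1)$ — consistent with the variance claimed for $X$ in Theorem~\ref{th_1}. Bounded second moments then yield uniform integrability, hence $L^1$-convergence of both martingales.

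\textbf{Main obstacle.} The computations are all routine; the only point requiring care is handling the conditional independence correctly when expanding the cross term $\EE[V^+_t V^-_t \mid \GG_{t-1}]$, and keeping track of the exact constant in the $\Delta_t$ recursion so that the limiting variance matches $1/(\beta^2/\alpha - 1)$. One should also note that $\{M_t\}$ being a nonnegative martingale already converges a.s. by the martingale convergence theorem, so the $L^2$ bound is genuinely needed only to upgrade this to uniform integrability (equivalently, to rule out mass escaping to a degenerate limit); the same remark applies to $\{\Delta_t\}$ after splitting it into its positive and negative parts or simply invoking the $L^2$ bound directly.
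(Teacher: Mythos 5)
Your proposal is correct and takes essentially the same route as the paper: you verify the martingale property by computing the conditional means of $V^{\pm}_t$, and you derive the variance recursion $\EE[\Delta_t^2]=\EE[\Delta_{t-1}^2]+(\alpha/\beta^2)^t$ (resp.\ $\alpha^{-t}$ for $M_t$) and telescope it to get the bounded second moment. The paper phrases this via the conditional variance decomposition $\hbox{Var}(X)=\hbox{Var}(\EE(X|\mathcal{F}))+\EE(\hbox{Var}(X|\mathcal{F}))$ rather than expanding $\EE[(V^+_t\pm V^-_t)^2|\GG_{t-1}]$ directly, but the algebra and the resulting recursions are identical.
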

\begin{corollary}\label{corollary_martingales} 
Under $\alpha<\beta^2$ the martingale $\{\Delta_t\}$ converges almost surely to a unit mean random variable $\Delta_{\infty}$. Moreover this random variable has a finite variance $1/(\beta^2/\alpha-1)$ to which the variance of $\Delta_t$ converges. It further holds that $\EE|\Delta^2_t-\Delta_{\infty}^2|\to 0$ as $t\to\infty$.
\end{corollary}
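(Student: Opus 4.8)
The plan is to establish the three assertions of Corollary~\ref{corollary_martingales} in order, using Lemma~\ref{martingales1} as the starting point. First I would invoke the fact, proved in Lemma~\ref{martingales1}, that $\{\Delta_t\}$ is a $\GG_t$-martingale that is uniformly integrable under Condition~(\ref{c1}). By the martingale convergence theorem together with uniform integrability, $\Delta_t$ converges both almost surely and in $L^1$ to a limit $\Delta_\infty$, and the $L^1$-convergence gives $\EE(\Delta_\infty)=\lim_t \EE(\Delta_t)=\EE(\Delta_0)=1$, since $\Delta_0=\beta^{-0}(V_0^+-V_0^-)=1$. This disposes of the first sentence.

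Next I would compute the variance recursion explicitly. Conditioning on $\GG_{t-1}$ and using that $V_t^{\pm}$ are independent Poisson with means $(a/2)V_{t-1}^{\pm}+(b/2)V_{t-1}^{\mp}$, one gets $\EE((V_t^+-V_t^-)^2\mid\GG_{t-1}) = \mathrm{Var}(V_t^+-V_t^-\mid\GG_{t-1}) + (\EE(V_t^+-V_t^-\mid\GG_{t-1}))^2$. The conditional variance of $V_t^+-V_t^-$ equals the sum of the two conditional variances, namely $\bigl[(a/2)V_{t-1}^++(b/2)V_{t-1}^-\bigr]+\bigl[(a/2)V_{t-1}^-+(b/2)V_{t-1}^+\bigr]=\alpha(V_{t-1}^++V_{t-1}^-)$, while the conditional mean is $\beta(V_{t-1}^+-V_{t-1}^-)$. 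Writing $W_t:=V_t^++V_t^-$ and $Z_t:=V_t^+-V_t^-$, this gives $\EE(Z_t^2\mid\GG_{t-1})=\alpha W_{t-1}+\beta^2 Z_{t-1}^2$. Taking expectations and using $\EE(W_{t-1})=\alpha^{t-1}$ (the martingale property of $M_t$), one obtains the linear recursion $\EE(Z_t^2)=\alpha\cdot\alpha^{t-1}+\beta^2\EE(Z_{t-1}^2)$ with $\EE(Z_0^2)=1$. Solving and dividing by $\beta^{2t}$ yields $\EE(\Delta_t^2)=\beta^{-2t}\sum_{s=0}^{t}\alpha^{s}\beta^{2(t-s)}=\sum_{s=0}^{t}(\alpha/\beta^2)^{s}$, which converges to $\sum_{s\ge0}(\alpha/\beta^2)^s=1/(1-\alpha/\beta^2)$ precisely because $\alpha<\beta^2$. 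Since $\EE(\Delta_t)=1$ for all $t$ and $\EE(\Delta_\infty)=1$, the variance of $\Delta_t$ equals $\EE(\Delta_t^2)-1\to 1/(1-\alpha/\beta^2)-1=1/(\beta^2/\alpha-1)$, and by Fatou $\mathrm{Var}(\Delta_\infty)\le 1/(\beta^2/\alpha-1)$; in fact the $L^2$-boundedness just established upgrades the convergence.

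The third assertion, $\EE|\Delta_t^2-\Delta_\infty^2|\to 0$, follows once one knows $\Delta_t\to\Delta_\infty$ in $L^2$. Since $\sup_t\EE(\Delta_t^2)<\infty$, the martingale $\{\Delta_t\}$ is bounded in $L^2$, hence converges in $L^2$ to $\Delta_\infty$; this also confirms $\mathrm{Var}(\Delta_\infty)=\lim_t\mathrm{Var}(\Delta_t)=1/(\beta^2/\alpha-1)$. From $L^2$-convergence, $\EE|\Delta_t^2-\Delta_\infty^2|=\EE|(\Delta_t-\Delta_\infty)(\Delta_t+\Delta_\infty)|\le \|\Delta_t-\Delta_\infty\|_2\,\|\Delta_t+\Delta_\infty\|_2$ by Cauchy--Schwarz, and the second factor stays bounded while the first tends to $0$. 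I do not anticipate a genuine obstacle here: the argument is a textbook $L^2$-bounded-martingale computation, the only mild subtlety being to carry through the conditional variance bookkeeping for the difference of two conditionally independent Poissons correctly and to remember to feed in the already-established fact that $\{M_t\}$ is a martingale so that $\EE(W_{t-1})=\alpha^{t-1}$. The condition $\alpha<\beta^2$ enters exactly at the point where the geometric series in $\alpha/\beta^2$ must converge, which is why it is the hypothesis under which $L^2$-boundedness — and hence all three conclusions — hold.
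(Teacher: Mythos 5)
Your proof is correct, and it takes a genuinely different (and arguably cleaner) route than the paper's. The first step coincides: both invoke uniform integrability from Lemma~\ref{martingales1} and the martingale convergence theorem to get almost-sure and $L^1$ convergence with $\EE(\Delta_\infty)=1$. The divergence comes afterwards. The paper establishes $\mathrm{Var}(\Delta_\infty)<\infty$ via Fatou, then pins down its exact value by appealing to the recursive distributional equation $\Delta \stackrel{d}{=} \beta^{-1}\bigl(\sum_1^{\mathrm{Poi}(a/2)}\Delta_i-\sum_1^{\mathrm{Poi}(b/2)}\Delta'_i\bigr)$ and claiming uniqueness of its non-degenerate variance solution, and finally gets $\EE|\Delta_t^2-\Delta_\infty^2|\to 0$ from Scheff\'e's lemma (a.s. convergence of $\Delta_t^2$ plus convergence of $\EE(\Delta_t^2)$). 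You instead observe that $\sup_t \EE(\Delta_t^2)<\infty$ (which Lemma~\ref{martingales1} already computes, as does your own recursion $\EE(\Delta_t^2)=\sum_{s=0}^{t}(\alpha/\beta^2)^s$), invoke Doob's $L^2$-bounded martingale convergence theorem to get $\Delta_t\to\Delta_\infty$ in $L^2$, from which $\mathrm{Var}(\Delta_\infty)=\lim_t\mathrm{Var}(\Delta_t)$ and $\EE|\Delta_t^2-\Delta_\infty^2|\le\|\Delta_t-\Delta_\infty\|_2\,\|\Delta_t+\Delta_\infty\|_2\to 0$ follow at once by Cauchy--Schwarz. Your approach buys directness: it sidesteps both the RDE uniqueness assertion, which the paper only says is ``readily seen,'' and the Scheff\'e step, replacing them with the single standard fact that an $L^2$-bounded martingale converges in $L^2$. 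Nothing is missing; the only redundancy is that you re-derive the variance formula already stated in Lemma~\ref{martingales1}, but the derivation is correct.
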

Together these properties allow to establish the following
\begin{theorem}\label{thm_local} 
One has the following convergence in probability
\begin{equation}\label{second_moments} 
\lim_{n\to\infty}\frac{1}{n}\sum_{i=1}^n \beta^{-2\ell}D_{\ell}^2(i)=\EE(\Delta_{\infty}^2).
\end{equation}
For all $\tau\in\RR$ that is a point of continuity of the distribution of both $\Delta_{\infty}$ and $-\Delta_{\infty}$, one has the following convergence in probability for both signs $\pm$
\begin{equation}\label{empirical}
\lim_{n\to\infty}\frac{1}{n}\sum_{i\in\nn:\sigma_i=\pm}\II_{y_i\ge \tau/\sqrt{n\EE(\Delta_{\infty}^2)}}=\frac{1}{2}\PP(\pm\Delta_{\infty}\ge \tau).
\end{equation}
Let $y\in\RR^{n}$ be the normed vector defined as
\begin{equation}
y_i=\frac{D_{\ell}(i)}{\sqrt{\sum_{j=1}^n D_{\ell}(j)^2}},\; i=1,\ldots,n.
\end{equation}
Let $x$ be a vector in $\RR^n$ such that we have the convergence in probability 
\begin{equation}
\lim_{n\to\infty}||x-y||=0.
\end{equation}
Define the spin estimates 
\begin{equation}
\hat{\sigma}_i=\left\{\begin{array}{ll}
+&\hbox{if }x_i\ge t/\sqrt{n\EE(\Delta_{\infty}^2)},\\
-&\hbox{ otherwise.}
\end{array}\right.
\end{equation}
For each $t$ that is an atom of neither $\Delta_{\infty}$'s or $-\Delta_{\infty}$'s distribution, the following convergence in probability holds
\begin{equation}\label{empirical_overlap}
\lim_{n\to\infty}\frac{1}{n}\sum_{i=1}^n\sigma_i \hat{\sigma}_i=\frac{1}{2}\left(\PP(\Delta_{\infty}\ge t)-\PP(-\Delta_{\infty}\ge t)\right).
\end{equation}
\end{theorem}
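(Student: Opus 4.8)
The statement has four parts, and they should be attacked in the order in which they appear. The first display \eqref{second_moments} is a law of large numbers for the quantities $\beta^{-2\ell}D_\ell^2(i)$; the natural route is to combine the coupling of Lemma~\ref{coupling_lemma_1} (each neighborhood process $(U_t^\pm(i))_{t\le\ell}$ is within a negative power of $n$ in total variation of the tree process $(V_t^{\pm\sigma_i})_{t\le\ell}$) with Corollary~\ref{corollary_martingales} (which gives $\beta^{-2\ell}D_\ell^2 \to \Delta_\infty^2$ in $L^1$ as $\ell\to\infty$, uniformly enough since $\ell=c\log n\to\infty$). The coupling transfers the one-node marginal; to get the LLN one needs a second-moment computation, i.e. control of $\mathrm{Cov}(\beta^{-2\ell}D_\ell^2(i),\beta^{-2\ell}D_\ell^2(j))$ for $i\ne j$, and this is exactly what Lemma~\ref{coupling_lemma_2} provides: pairs of neighborhoods are asymptotically independent, so the covariance is $o(1)$, whence the empirical average concentrates around its mean $\EE(\Delta_\infty^2)$. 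One should also note that $\EE(\Delta_\infty^2) = 1 + 1/(\beta^2/\alpha-1)$ matches the variance-plus-mean-squared claimed for $X$ in Theorem~\ref{th_1}.

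**Second step: the empirical CDF \eqref{empirical}.** This is the same mechanism applied to the indicator functionals $\II_{D_\ell(i)\ge \beta^\ell \tau/\sqrt{n\EE(\Delta_\infty^2)}}$ restricted to nodes of a given spin $\sigma_i=\pm$. Rewriting $y_i = D_\ell(i)/\sqrt{\sum_j D_\ell(j)^2}$ and using \eqref{second_moments} to replace the random normalization $\sqrt{\sum_j D_\ell(j)^2}$ by the deterministic $\sqrt{n\EE(\Delta_\infty^2)}\,\beta^{\ell}$ (a continuous-mapping / Slutsky argument, valid off the atoms), the event $\{y_i\ge \tau/\sqrt{n\EE(\Delta_\infty^2)}\}$ becomes $\{\beta^{-\ell}D_\ell(i)\ge \tau(1+o(1))\}$. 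For a node of spin $+$, the coupling identifies $\beta^{-\ell}D_\ell(i)$ in law (up to $o(1)$ in TV) with $\Delta_\ell\to\Delta_\infty$; for a node of spin $-$, $D_\ell(i) = U_\ell^+-U_\ell^-$ couples to $V_\ell^{-\sigma}$-type growth, which flips the sign, giving $-\Delta_\infty$ in the limit. Averaging over the roughly $n/2$ nodes of each spin (the fractions $\npm/n\to 1/2$ a.s.) and again invoking pairwise asymptotic independence for the variance bound yields $\tfrac12\PP(\pm\Delta_\infty\ge\tau)$ at continuity points $\tau$ of both distributions.

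**Third step: the overlap \eqref{empirical_overlap}.** Here I would first pass from $y$ to $x$: since $\|x-y\|\to 0$ in probability and $x,y$ are unit vectors, the number of coordinates where the thresholding disagrees is negligible except possibly near the threshold $t$, and the assumption that $t$ is not an atom of $\Delta_\infty$ or $-\Delta_\infty$ kills the boundary mass — so $\frac1n\sum_i \sigma_i\hat\sigma_i$ computed with $x$ equals the same sum computed with $y$ up to $o(1)$. Then decompose $\frac1n\sum_i\sigma_i\hat\sigma_i = \frac1n\sum_{\sigma_i=+}\hat\sigma_i - \frac1n\sum_{\sigma_i=-}\hat\sigma_i$. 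Each term is $\frac1n\sum_{\sigma_i=\pm}(\II_{y_i\ge t/\sqrt{n\EE(\Delta_\infty^2)}} - \II_{y_i< t/\sqrt{n\EE(\Delta_\infty^2)}})$, which by \eqref{empirical} converges to $\tfrac12(\PP(\pm\Delta_\infty\ge t) - \PP(\pm\Delta_\infty<t)) = \pm(\PP(\pm\Delta_\infty\ge t)) - \tfrac12 \cdot(\text{const})$; collecting the pieces and using $\PP(\Delta_\infty<t) = 1-\PP(\Delta_\infty\ge t)$ (no atom at $t$) gives exactly $\tfrac12(\PP(\Delta_\infty\ge t)-\PP(-\Delta_\infty\ge t))$.

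**Main obstacle.** The genuinely delicate point is the interchange of the two limits "$\ell\to\infty$" and "$n\to\infty$": $\ell=c\log n$ is tied to $n$, so one cannot simply quote the martingale convergence of Corollary~\ref{corollary_martingales} at fixed large $\ell$. One needs the $L^1$ (and $L^2$) convergence $\EE|\Delta_\ell^2-\Delta_\infty^2|\to 0$ to be quantitative enough — together with the negative-power-of-$n$ total-variation bounds of Lemmas~\ref{coupling_lemma_1}–\ref{coupling_lemma_2} — that all error terms are $o(1)$ when $\ell$ and $n$ grow together. A secondary but real nuisance is handling the rare nodes in $\mathcal B$ whose $\ell$-neighborhood contains a cycle: Lemma~\ref{cycles} bounds their number by $O(\log^4 n\,\alpha^{2\ell})$, and since $c\log\alpha<1/4$ this is $o(n^{1/2})=o(n)$, so such nodes can only perturb the empirical averages by $o(1)$ — but this must be checked explicitly, as must the fact that on $\mathcal B^c$ one has the genuine equality $D_\ell(i)=(B^{(\ell)}\sigma)_i$ used implicitly when passing between $\{D_\ell(i)\}$ and $B^{(\ell)}\sigma$.
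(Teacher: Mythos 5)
Your plan matches the paper's proof in structure and in every essential tool: Lemma~\ref{coupling_lemma_1} for the one-node tree coupling, Lemma~\ref{coupling_lemma_2} for pairwise asymptotic independence (to kill the variance by Chebyshev), Corollary~\ref{corollary_martingales} for the quantitative $L^1$ convergence $\EE|\Delta_\ell^2-\Delta_\infty^2|\to 0$ (exactly the device that lets $\ell\to\infty$ ride along with $n\to\infty$), and Theorem~\ref{lemma_LD_bounds} for the worst-case $O(\log n)$ bound when coupling fails. The only stylistic difference is in the third step: you first establish that switching $y$ for $x$ changes the thresholded count by $o(n)$ (using $\|x-y\|\to 0$ and the no-atom condition at $t$) and then invoke \eqref{empirical}, whereas the paper re-runs the sandwiching argument with two pairs of Lipschitz functions $f_\pm\le\II_{\pm u\ge t}\le g_\pm$ and absorbs the $x$-versus-$y$ error inside the Lipschitz bound via Cauchy--Schwarz. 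Both routes are valid; the paper's version is self-contained while yours reuses \eqref{empirical}. Your worry about $\mathcal B$ is the right instinct for the surrounding argument, but it does not enter this particular theorem, which is stated purely in terms of $D_\ell(i)$; the identification $D_\ell(i)=(B^{(\ell)}\sigma)_i$ off $\mathcal B$ belongs to Lemma~\ref{lemma_align}.

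One concrete slip to flag: your ``collecting the pieces'' does \emph{not} produce the prefactor $\tfrac12$. Carrying the arithmetic through, $\tfrac1n\sum_{\sigma_i=+}\hat\sigma_i\to\PP(\Delta_\infty\ge t)-\tfrac12$ and $\tfrac1n\sum_{\sigma_i=-}\hat\sigma_i\to\PP(-\Delta_\infty\ge t)-\tfrac12$, so the overlap tends to $\PP(\Delta_\infty\ge t)-\PP(-\Delta_\infty\ge t)$, with no $\tfrac12$. You have in fact reproduced the constant $r=\PP(\Delta_\infty\ge t)-\PP(-\Delta_\infty\ge t)$ appearing in the discussion following Theorem~\ref{th_1} and in the last line of the paper's own proof; the $\tfrac12$ in the displayed statement of \eqref{empirical_overlap} appears to be a typo, and you should not contort the computation to obtain it.
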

To convey the main ideas of the proof (deferred to the Appendix), we now indicate how to establish a property similar to~(\ref{empirical}), namely for a continuous bounded function $g$ we establish convergence in probability 
\begin{equation}\label{test} 
\frac{1}{n}\sum_{i\in\nn"'\sigma_i=\pm}g(\beta^{-\ell}D_{\ell}(i))\to \frac{1}{2} \EE g(\pm \Delta_{\infty}).
\end{equation}
The expectation of the sum in the left-hand side reads
$$
\frac{\npm}{n}\left[\EE g(\pm\Delta_{\ell})\II_{\hbox{coupling}}+O(|g|_{\infty})(1-\PP(\hbox{coupling}))\right]
$$
where the event $\{\hbox{coupling}\}$ refers to coupling as in Lemma~\ref{coupling_lemma_1}. By this Lemma, the martingale convergence property of $\Delta_{\ell}$, the fact that $\npm/n\to 1/2$ and continuity of $g$, this expectation converges to $(1/2)\EE g(\pm \Delta_{\infty})$. Now the expectation of the square of the empirical average in~(\ref{test}) reads
$$
\frac{1}{n^2}\left[\npm O(|g|^2_{\infty})+2\binom{\npm}{2} \EE g(\beta^{-\ell}D_{\ell}(i))g(\beta^{-\ell}D_{\ell}(j))\right]
$$
where $i\ne j$ are two fixed nodes with spin $\pm$. By the coupling lemma~\ref{coupling_lemma_2} it holds that 
$$
\EE g(\beta^{-\ell}D_{\ell}(i))g(\beta^{-\ell}D_{\ell}(j))=\left(\EE g(\beta^{-\ell}D_{\ell}(i))\right)^2+O(n^{-\epsilon})|g|^2_{\infty}.
$$
It follows that the variance of the empirical average in~(\ref{test}) goes to zero as $n\to\infty$. Its announced convergence in probability to $(1/2)\EE g(\pm \Delta_{\infty})$ then follows by Tchebitchev's inequality.

Theorems~\ref{spectral_thm} and ~\ref{thm_local} readily imply Theorem~\ref{th_1}.

\section{Conclusions}\label{sec:conclusions}
The methods developed here may find further applications, e.g. to prove the more general conjecture by Heimlicher et al.~\cite{NIPS12} of a phase transition in the labeled stochastic block model. More generally one might ask what is the range of applicability of our path expansion approach to ``fix'' spectral methods by recovering Ramanujan-like spectral separation properties. It is likely that a similar regularization would occur by considering matrix $\hat{B}$ defined by $\hat{B}_{ij}=\II_{d_{\g}(,ij)=\ell}$ but we have not been able to prove this yet.

{\bf acknowledgements:} The author gratefully acknowledges stimulating discussions on the topic with Marc Lelarge and Charles Bordenave.
\bibliography{BibCommunityDetection}
\newpage
\begin{appendix}
\section{Proof of Proposition~\ref{trace_delta}}
\begin{proof}We control the spectral radius of $\Delta^{(\ell)}$ using the trace method. Specifically chose $k>0$. It holds that
\begin{equation}
\EE(\rho(\Delta^{(\ell)})^{2k})\le \EE \hbox{Tr}(\Delta^{(\ell)})^{2k}).
\end{equation}
Note that $\hbox{Tr}(\Delta^{(\ell)})^{2k})$ is the sum over circuits of length $2k$ of the products of the entries $\Delta^{(\ell)}_e$ over the edges $e$ in the circuit. Moreover, given the definition of $\Delta^{(\ell)}$, these correspond to products of entries $A_{e}-\bar{A}_e$ over edges $e$ of circuits of length $2k\ell$ satisfying the property that consecutive length $\ell$-paths are simple.

We bound the expectation of the corresponding sum as follows. Let $v$ (respectively, $e$) be the number of nodes (respectively, edges) traversed by a particular circuit. The quantity $c=e-v+1$ is the so-called ``tree excess'', counting the number of edges that are traversed while not being part of the tree consisting of edges whose first traversal strictly augments the number of spanned nodes.

We represent the corresponding circuit as follows. 

We number nodes by the order in which they are met by the circuit, starting with node 1.

We break each length $\ell$-simple path into consecutive sequences consisting of 
\begin{itemize} 
\item a path using only edges already used in the circuit, and lying on the tree of new node discoveries
\item a path of discoveries of new nodes
\item a cycle edge connecting the end of the two previous steps to a node already spanned. Such a cycle edge may have already been traversed by the circuit one or several times.
\end{itemize}
Given the tree previously spanned, and the current position on it, the first part of the sequence is characterized by the node label of its end: indeed, since on this subsequence we enforce the condition that the paths are simple, back-tracking is forbidden. Hence there is only one path on the tree going from the origin to the destination. We thus represent the first part by the number of the destination node if this part is non-empty, and by zero otherwise. 

The second part of the sequence is simply represented by its length, which is constrained to lie in $\{0,\ldots,\ell\}$. Indeed, it cannot exceed $\ell$, as we consider sequences that lie within a length $\ell$-simple path. 

Finally, the third part of the sequence is simply characterized by the number characterizing its end point, and by zero if this part is not present. We must allow for this case, as when we break up a length $\ell$-simple path into constituting such sequences, the last such sequence may not end up by traversal of such a redundant edge.

Let us now use this representation to bound the number of corresponding sequences. An individual sequence is represented by a triplet $(p,q,r)$ with $p\in\{0,\ldots,v\}$, $q\in\{0,\ldots,\ell\}$, and $r\in\{0,\ldots,v\}$. Note further that each such sequence corresponds to either the end of an individual length $\ell$-simple path, or the traversal of a redundant edge. The number of such edges is $c=e-v+1$, and each edge can be traversed at most $2k$ times by the constraint that circuits are formed from length $\ell$ simple paths. Thus the number of valid circuits corresponding to $v$ and $e$ is at most
$$
[(v+1)^2(\ell+1)]^{2k(1+e-v+1)}.
$$ 
For a given number of nodes $v$ and edges $e$, the number of corresponding nodes in $\{1,\ldots,n\}$ is upper-bounded by $n^v$. For a given edge present with multiplicity $m\in\{1,\ldots, 2k\}$, the corresponding expectation is zero if $m=1$, and for $m\ge 2$, we have
$$
\EE((\xi_{ij} -\EE(\xi_{ij}|\sigma))^m|\sigma)\le \frac{a(\sigma_i,\sigma_j)}{n}\cdot
$$ 
Here $a(\sigma_i,\sigma_j)$ equals $a$ if $\sigma_i=\sigma_j$ and $b$ otherwise.
For the cyclic edges we use the upper bound $\max(a,b)/n$. There are $e-v+1$ such factors. We are left with a tree with $v-1$ edges, for which upon averaging over $\sigma$ we get a contribution $(\alpha/n)^{v-1}$. Now the number of nodes $v$ on any configuration whose contribution in expectation does not vanish must lie between $\ell+1$ and $k\ell+1$: indeed each node discovery costs one edge, but this edge must be doubled for the contribution not to vanish. Since there are in total $2k\ell$ edges, at most $k\ell$ nodes can be discovered in addition to the original node of the circuit. The number of distinct edges is similarly bounded by $k\ell$ in any configuration with non-vanishing expectation.  This gives the bound
$$
\EE(\rho(\Delta^{(\ell)})^{2k})\le\sum_{v=\ell+1}^{k\ell+1}\sum_{e=v-1}^{k\ell}[(v+1)^2(\ell+1)]^{2k(1+e-v+1)} n^{v}\left(\frac{\alpha}{n}\right)^{v-1}\left(\frac{a\vee b}{n}\right)^{e-v+1}
$$
as announced.
\end{proof}
\section{Proof of Proposition~\ref{trace_gamma}}
\begin{proof}
By the trace method again, we have that $ \rho(\Gamma^{\ell,m})^{2k}$ is upper-bounded by the trace of $[\Gamma^{\ell,m}(\Gamma^{\ell,m})']^{k}$. The latter corresponds to sum over circuits of length $2\ell k$ of products of terms that can be either $A_e-\bar{A}_e$, $\bar{A}_e$ or $A_e$. The constraints are that a length $\ell$ chunk of the circuit is the concatenation of two simple paths of length $m-1$ and $\ell-m$, and that the two of them have a non-empty intersection.

We represent such contributions as follows. We let $v$ denote the number of nodes traversed by the circuit, and by $e$ the number of edges traversed by the circuit, while ignoring edges that are weighed by an $\bar{A}$-term. Note that by the constraint that the concatenated simple parts of each length $\ell$-chunk intersect, the corresponding graph is necessarily connected. We adopt the following representation of the corresponding circuits.

Nodes are again denoted by the order in which they are first met, starting with node 1. We represent each simple path that constitutes the circuit by sequences of three phases as before. Note that there are now $4k$ such simple paths: each length $\ell$ chunk of the original circuit is broken into an $m-1$- and an $\ell-m$-path. We adopt the same representation as before, except that we must now also incorporate the label of the starting point after traversal of an $\bar{A}$-edge. 

Thus we have the upper bound on the number of valid circuit labels with $v$ nodes and $e$ edges:
$$
v^{2k}[(v+1)^2(\ell+1)]^{4k(1+e-v+1)}.
$$
Let us bound the values that $v$ and $e$ can take. Necessarily, $v\ge \max(m,\ell-m+1)$: indeed, each length $\ell$ chunk comprises simple paths of length $m-1$ and $\ell-m$. Moreover, there are overall $2k(\ell-1)$ edges (recall that we discount the $\bar{A}$-edges). Out of these, $2k (\ell-m)$ must be doubled for the expectation not to vanish. There are thus at most $1+k(\ell+m)$ nodes $v$ in total, and at most $k(\ell+m)$ distinct edges in total. 

We thus obtain the upper bound
$$
\EE[(\rho(\Gamma^{\ell,m})^{2k}]\le\sum_{v=m\vee (\ell-m+1)}^{1+k(\ell+m)}\sum_{e=v-1}^{k(\ell+m)}\left(\frac{a\vee b}{n}\right)^{2k} v^{2k}[(v+1)^2(\ell+1)]^{4k(1+e-v+1)}
n^v \left(\frac{\alpha}{n}\right)^{v-1} \left(\frac{a\vee b}{n}\right)^{e-v+1}.
$$
\end{proof}
\section{Proof of Theorem~\ref{lemma_LD_bounds}}
The following inequality is easily verified to hold for any non-negative $U$, $V$, $a$, $b$, $n$ such that $a/n,b/n\le 1$, and will be instrumental in the sequel:
\begin{equation}\label{basic_ineq} 
\frac{a U+b V}{n}-\frac{1}{2}\left(\frac{a U+b V}{n}\right)^2\le 1-(1-a/n)^U(1-b/n)^V\le \frac{a U+b V}{n}.
\end{equation}
Next lemma is the key ingredient to establish Theorem~\ref{lemma_LD_bounds}. 
\begin{lemma}
The following properties hold with high probability for all $i\in \nn$ and all $t\le\ell$, with $\ell=C\log(n)$.

Let $T=\inf\{t\le\ell: S_t\ge K\log(n)\}$ for some constant $K$. Then $S_T=\Theta(\log(n))$.

Let $\epsilon_t:=\epsilon \alpha^{-(t-T)/2}$ for some constant $\epsilon>0$. Then for all $t,t'\in\{T,\ldots,\ell\}$, $t>t'$, the vector $U_t=(U^+_t,U^-_t)$ verifies the coordinate-wise bounds:
\begin{equation}\label{ineq_prod} 
U_t\in\left[\prod_{s=t'+1}^t(1-\epsilon_s) M^{t-t'+1}U_{t'}, \prod_{s=t'+1}^t(1+\epsilon_s) M^{t-t'+1} U_{t'}\right],
\end{equation}
where $M$ denotes the matrix $(a/2\;  b/2, b/2\; a/2)$.
\end{lemma}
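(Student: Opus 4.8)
The plan is to run an inductive argument over $t$ from $T$ to $\ell$, where at each step I control the conditional increment via the characterization~(\ref{dynamics1}) together with Chernoff bounds, and then take a union bound over $i\in\nn$ and over $t,t'\le\ell$ (polynomially many events). First I would establish the statement about $S_T=\Theta(\log n)$: at the stopping time $T$ the quantity $S_{T-1}<K\log n$ by definition of $T$, so conditionally on $\FF_{T-1}$ the new shell $S_T=U^+_T+U^-_T$ is a sum of two binomials whose means are $\Theta(S_{T-1})$ by~(\ref{basic_ineq}) and the fact that $\alpha,\beta$ are constants; hence $S_T\le$ (something like) $a\cdot K\log n = O(\log n)$ with the required probability, and $S_T\ge K\log n$ since $t\ge T$, giving $S_T=\Theta(\log n)$. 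The point of introducing $T$ is precisely that from this level onwards the shells are large enough ($\Omega(\log n)$) for Chernoff concentration to kick in with failure probability $n^{-\omega(1)}$ per step.

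The core is the two-sided bound~(\ref{ineq_prod}). I would prove it by induction on $t$, it suffices to handle the one-step version $t'=t-1$ and then multiply (the matrices $M$ commute, being symmetric $2\times2$ circulant-type, so $M^{t-t'}$ telescopes and the $\prod(1\pm\epsilon_s)$ factors compose cleanly). For the one-step bound: condition on $\FF_{t-1}$; by~(\ref{dynamics1}) $U^{\pm}_t$ is binomial with parameters $(\npm - U^{\pm}_{\le t-1},\, p^{\pm})$ where $p^\pm = 1-(1-a/n)^{U^{\pm}_{t-1}}(1-b/n)^{U^{\mp}_{t-1}}$. By~(\ref{basic_ineq}) the mean is $\tfrac{aU^{\pm}_{t-1}+bU^{\mp}_{t-1}}{n}(\npm - U^{\pm}_{\le t-1})(1+O(S_{t-1}/n))$; using $\npm/n = 1/2 + o(1)$ and $U^{\pm}_{\le t-1}=O(\log n\,\alpha^{t-1}) = n^{o(1)}$ (from the crude bound~(\ref{control_1}), which I would establish first, or bootstrap in parallel), the conditional mean of $U_t$ is $(M U_{t-1})^{\pm}(1+O(\alpha^t/n))$. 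Then a Chernoff bound gives $|U^\pm_t - \EE[U^\pm_t\mid\FF_{t-1}]| \le \tfrac12\epsilon_t\, (MU_{t-1})^\pm$ except with probability $\exp(-c\,\epsilon_t^2 (MU_{t-1})^\pm)$; since $(MU_{t-1})^\pm = \Theta(\alpha^{t-1})\cdot\Theta(\log n)$ (using the induction hypothesis downward to level $T$ and $S_T=\Theta(\log n)$) and $\epsilon_t^2=\epsilon^2\alpha^{-(t-T)}$, the exponent is $\Theta(\epsilon^2\alpha^{T-1}\log n) = \Theta(\log^2 n)$ because $\alpha^{T}\gtrsim S_T = \Theta(\log n)$ — comfortably $n^{-\omega(1)}$. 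Absorbing the multiplicative $(1+O(\alpha^t/n))$ error (negligible since $\alpha^\ell = n^{o(1)}\ll n$) into the $\epsilon_t$ slack closes the one-step induction.

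\textbf{Main obstacle.} The delicate point is the bookkeeping of \emph{compounding} errors: the allowed relative deviation $\epsilon_t=\epsilon\alpha^{-(t-T)/2}$ is chosen precisely so that $\prod_{s>T}(1\pm\epsilon_s)$ converges (geometric sum of $\alpha^{-(t-T)/2}$, summable since $\alpha>1$), so the product stays within a bounded multiplicative factor of $M^{t-T}U_T$ for \emph{all} $t\le\ell$ simultaneously — this is what ultimately gives~(\ref{control_2}). I must make sure that (a) the Chernoff failure probability at level $t$, which scales like $\exp(-c\epsilon_t^2\alpha^{t})\sim\exp(-c\epsilon^2\alpha^{T}\cdot\alpha^{t-T}\alpha^{-(t-T)})=\exp(-c\epsilon^2\alpha^T)$, stays $n^{-\omega(1)}$ uniformly in $t$ (it does, since it is essentially $t$-independent and $\alpha^T=\Omega(\log n)$), allowing the union bound over all $O(n\log^2 n)$ choices of $(i,t,t')$; and (b) the induction hypothesis is used \emph{self-consistently} — at step $t$ I need lower/upper bounds on $(MU_{t-1})^\pm$ that come from the already-established bracketing at level $t-1$, so the argument must be phrased as "the bracketing holds for all $s\le t-1$ implies it holds at $t$", with the base case $t=T$ trivial. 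A secondary nuisance is that $M$ has eigenvalues $\alpha$ and $\beta$ with $\beta$ possibly negative or small, so the coordinate-wise (rather than norm) bracketing~(\ref{ineq_prod}) is what must be maintained; since $M$ has nonnegative entries this is fine, but one should note that coordinate-wise two-sided control of $U_t$ is strictly stronger than control of $S_t$ and $D_t$ separately and is exactly what yields both lines of~(\ref{control_2}) upon taking sums and differences of coordinates.
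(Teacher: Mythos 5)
Your proposal follows essentially the same route as the paper: introduce the stopping time $T$, run a one-step Chernoff argument with a geometrically decaying relative-error budget $\epsilon_t=\epsilon\,\alpha^{-(t-T)/2}$, and compound via the matrix $M$ — the heart of the argument and the choice of $\epsilon_t$ so that $\prod_s(1\pm\epsilon_s)$ converges are exactly the paper's.

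There is, however, a quantitative error in your Chernoff exponent that is worth flagging because it changes the order of the failure probability and hence what you must demand of the constants. You write the per-step failure probability as $\exp(-c\,\epsilon_t^2\alpha^t)$ and then argue it equals $\exp(-c\epsilon^2\alpha^T)=n^{-\omega(1)}$ by asserting $\alpha^T\gtrsim S_T=\Theta(\log n)$. Both steps are off. First, the correct Chernoff scale is not $\alpha^t$ but the conditional mean $(MU_{t-1})^\pm$, and the induction hypothesis gives $(MU_{t-1})^\pm\asymp\alpha^{t-T}S_T\asymp\alpha^{t-T}\log n$, which is not $\alpha^t$ in general: a node with degree $\ge K\log n$ has $T=1$ and $\alpha^T=O(1)$ while $S_T=\Theta(\log n)$, so $\alpha^T\gtrsim S_T$ simply fails. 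Using the correct mean, $\epsilon_t^2(MU_{t-1})^\pm\asymp\epsilon^2\alpha^{-(t-T)}\cdot\alpha^{t-T}\log n=\epsilon^2\log n$, so the exponent is $\Theta(\log n)$, not $\Theta(\log^2 n)$, and the per-step failure probability is $n^{-\Theta(1)}$, not $n^{-\omega(1)}$. This makes the constant matter: you must choose $K$ (and hence a lower bound on $S_T$) large enough relative to $\epsilon$ that the decay exponent exceeds $2$, so that a union bound over $O(n\log^2 n)$ events still vanishes. The paper does precisely this bookkeeping, e.g.\ picking $K'$ so that $\kappa h(K'/2\kappa)>2$ and then $K''$, $K'''$ with $(1-O(\epsilon))K''K'''\epsilon^2\ge 2$; your writeup should include the analogous tuning rather than asserting superpolynomial decay. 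With that correction, your outline closes.

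A minor presentational point: the displayed bracket~(\ref{ineq_prod}) as printed has $M^{t-t'+1}$, while the natural one-step recursion (and the way the bound is subsequently used to derive~(\ref{control_2})) corresponds to $M^{t-t'}$; your telescoping description is the right mental picture.
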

\begin{proof}
Recall that conditionally on $\mathcal{F}_{t-1}$ the random variables $U^+_t$ and $U^-_t$ are independent, distributed according to 
$$
U^{\pm}_t\sim\hbox{Bin}\left(n^{\pm}-U^{\pm}_{<t},1-(1-a/n)^{U^{\pm}_{t-1}}(1-b/n)^{U^{\mp}_{t-1}}\right).
$$
Let $T$ be the first instant $t$ for which $U_t\ge K\log(n)$, for some $K$ to be specified. 

By definition of $T$, necessarily $U_{T-1}<K\log(n)$. Thus
$$
U^{\pm}_{T}\le\hbox{Bin}(n^{\pm},(a\vee b)\frac{K\log(n)}{n}).
$$
The mean of the Binomial distribution in the right-hand side of the above is equivalent to $(a\vee b)(1/2)K\log(n)$ and less than $\kappa \log(n)$ for $\kappa=(a\vee b)K$. 
Hence by Chernoff's inequality, for $h(x):=x\log(x)-x+1$,
$$
\PP(U^{\pm}_T\le K'/2 \log(n)|\mathcal{F}_{T-1})\le e^{-\kappa \log(n)h(K'/2\kappa)}.
$$
Take $K'$ so that $\kappa h(K'/2\kappa)>2$. The right-hand side of the above is then no larger than $n^{-2}$.

Thus properties (\ref{control_1}) clearly hold for $t\le T$. We now establish that they hold with sufficiently large probability for larger $t$. 

Conditional on $\mathcal{F}_T$, the binomial distribution of $U^{\pm}_{T+1} $ has mean
$$
[n^{\pm}-U^{\pm}_{<T+1}]\times[1- (1-a/n)^{U^{\pm}_T}(1-b/n)^{U^{\mp}_T}].
$$
Using the inequalities (\ref{basic_ineq}) we obtain that this mean lies in the interval
$$
[(a\wedge b)\frac{1}{3}K\log(n),(a\vee b)K'\log(n)].
$$
For a given $\epsilon>0$, we can choose $K$ sufficiently large so that 
$$
(a\wedge b)\frac{1}{3}K h(1+\epsilon)>2.
$$
It follows that $U^{\pm}_{T+1}$ admits a relative deviation from its conditional mean by $\epsilon$ with probability at most $n^{-2}$. 

We now define
$$
\epsilon_t=\epsilon \alpha^{-(t-T)/2},
$$
and consider the events $\mathcal{A}_t:=\{U^{\pm}_t\in[1-\epsilon_t,1+\epsilon_t]\frac{a U^{\pm}_{t-1}+b U^{\mp}_{t-1}}{2}\}$.
Conditionally on $\mathcal{A}_T,\ldots,\mathcal{A}_{t}$, the vector $U_t=(U^+_t,U^-_t)$ verifies the announced inequality (\ref{ineq_prod}). 
 Given that $\alpha$ is the spectral radius of $M$, it follows from this condition that $U^{\pm}_t\ge (1-O(\epsilon))\alpha^{t-T}K'' \log(n)$. We then check that Chernoff's bound applies to show that the condition holds at step $t$ with high enough probability. It suffices to ensure that 
$$
U^{\pm}_t \tilde{h}(\epsilon_t)\ge 2\log(n),
$$
where we take $\tilde{h}(u):=\min[(1+u)\log(1+u)-u,(1-u)\log(1-u)+u)$. 
However as we just saw the left-hand side of this expression is lower-bounded by
$$
(1-O(\epsilon))\alpha^{t-T} K''\log(n) \tilde{h}(\epsilon_t)\ge (1-O(\epsilon))\alpha^{t-T} K''\log(n) K''' \epsilon_t^2,
$$
where we took a second-order expansion of $\tilde{h}$ around 0.
The condition is therefore met as soon as $(1-O(\epsilon))K'' K''' \epsilon^2 \ge 2.$ For $K$ large enough this holds.
\end{proof}
\begin{proof}(of Theorem~\ref{lemma_LD_bounds}). 
For $t\le \ell$, if $t\le T$, we necessarily have that $S_t, |D_t|=O(\log n)$. Consider then $t>T$. Note that matrix $M$ is such that 
$$
M^{k}=\frac{1}{2}\left(\begin{array}{cc} \alpha^k+\beta^k&\alpha^k-\beta^k\\ \alpha^k-\beta^k&\alpha^k+\beta^k
\end{array}
\right)\cdot
$$
Using (\ref{ineq_prod}), we readily have for $t,t'\le T$, with $t>t'$:
$$
\begin{array}{ll}
S_t&\le \prod_{s=t'+1}^t(1+\epsilon_s)(1,1) M^{t-t'} U_{t'}\\
&=\prod_{s=t'+1}^t (1+\epsilon_s) \alpha^{t-t'}S_{t'}.
\end{array} 
$$
A similar lower bound holds with $-\epsilon_s$ in place of $+\epsilon_s$. Setting $t'=T$ in the upper bound, since $S_T=O(\log(n)$, the upper bound (\ref{control_1}) follows for $S_t$, as $\prod_{s=T+1}^t(1+\epsilon_s)=O(1)$.

Note now that 
$$
\max\left(\prod_{s=t'+1}^t (1+\epsilon_s)-1,1-\prod_{s=t'+1}^t (1-\epsilon_s)\right)=O(\epsilon_{t'})=O(\alpha ^{-t'/2}).
$$
It readily follows that (\ref{control_2}) holds for $S_t$.

Consider now $D_t$. Using (\ref{ineq_prod}) again, we have

$$
D_{t-1}-\alpha \epsilon_t S_t \le D_t \le \beta D_{t-1}+\alpha \epsilon_t S_t.
$$
Iterating, we obtain
\begin{equation}\label{bound_D}
|D_t-\beta^{t-t'}D_{t'}|\le \sum_{s=t'+1}^t \alpha \beta^{t-s} \epsilon_s S_s.
\end{equation}
Since $S_s=O(\log(n) \alpha^{s-T})$, $|D_T|=O(\log(N)$ and $\epsilon_s=O(\alpha^{-(s-T)/2})$, we obtain for $t'=T$:
$$
|D_t|=O(\log(n)\beta^t+\sum_{s=T+1}^t \beta^{t-s}\log(n) \alpha^{(s-T)/2})=O(\log(n)\beta^t),
$$
where we have used the assumption that $\beta^2>\alpha$ to bound $\sum_{u>0}\beta^{-u}\alpha^{u/2}$. Property (\ref{control_1}) thus holds for $D_t$.

Finally, the right-hand side of (\ref{bound_D}) is of order
$$
\sum_{s=t'+1}^t  \beta^{t-s}\alpha^{(s-T)/2} \log(n)=O(\log(n)\beta^{t-t'}\alpha^{t'/2}).
$$
Thus setting $t=\ell$, for $\ell>t'\ge T$ we have
$$
D_{t'}=\beta^{t'-\ell}D_{\ell}+O(\log(n)\alpha^{t'/2}).
$$
Since for $t'<T$ we readily have $D_{t'}=O(\log(n)$ by definition of $T$, property (\ref{control_2}) follows for $D_t$.
\end{proof}
\section{Proof of Lemma~\ref{cycles}}
\begin{proof}
There are two ways for creating cycles within the distance $k$-neighborhood of $i$: an edge may be present between two nodes at distance $k-1$ of $i$, or two nodes at distance $k-1$ may be connected to the same node at distance $k$ of $i$. The number of edges of the first type is stochastically dominated by $\hbox{Bin}(S_{k-1}^2, a\vee b/n)$. Its expected number conditionally on $\Omega_{k-1}(i)$, defined as
$$
\Omega_{k-1}(i)=\{S_{k-1}(i)\le C(\log n) \alpha^{k-1}\}
$$ is at most $O(\log^2(n) \alpha^{2\ell}/n)$. Thus by the union bound the probability that there is such an edge in the $\ell$-neighborhood of $i$ is at most:
$$
\ell \times O(\log^2(n) \alpha^{2\ell}/n)+\sum_{k=1}^{\ell}(1-\PP(\Omega_k(i)))=O(\log^3(n)\alpha^{2\ell}/n).
$$
As for the second type of cycles, its number is stochastically dominated by 
$$
\hbox{Bin}(n,(a\vee b/n)^2 S_{k-1}^2).
$$
On $\Omega_{k-1}(i)$ its conditional expectation is $O( \log^2(n)\alpha^{2\ell})$. 

By the same argument, the probability that there are two cycle-edges within the $\ell$-neighborhood of $i$ is upper-bounded by $O(\log^6(n)\alpha^{4\ell}/n^2)$. By the union bound we readily have that with high probability no node has two cycle-edges within its $\ell$-neighborhood as soon as $\log^6(n)\alpha^{4\ell}\ll n$, which holds for $\ell=c\log(n)$ with $c\log(\alpha)<1/4$.

Let $Z_i$ denote the event that the $\ell$-neighborhood of $i$ contains a cycle. On the event 
$$
\cap_{i\in \nn}\cap_{k\le \ell}\Omega_k(i),
$$
 the $\ell$-neighborhoods of an arbitrary pair of distinct nodes $i$, $j$ are disjoint with probability $1-O(\log^2(n)\alpha^{2\ell}/n)$, conditionally upon which the probability that they both have a cycle in their neighborhood is upper-bounded by $O(\log^6(n)\alpha^{4\ell}/n^2)$. Conditionally on the event that their neighborhoods meet, the expectation of the product $Z_i Z_j$ is still upper-bounded by $O(\log^3(n)\alpha^{2\ell}/n)$.

Eventually Markov's inequality yields
$$
\begin{array}{ll}
\PP(\sum_i Z_i\ge m \log^3(n)\alpha^{2\ell})&\le \frac{\EE(\sum_i Z_i)^2}{m^2 \log^6(n)\alpha^{4\ell}}\\
&\le 
\frac{ n\EE(Z_1)+n^2\EE(Z_1 Z_2)}{m^2 \log^6(n)\alpha^{4\ell}}\\
&\le
\frac{O(\log^3(n)\alpha^{2\ell})+n^2[O(\log^6(n)\alpha^{4\ell}/n^2)+(\log^2(n)\alpha^{2\ell}/n)(\log^3(n)\alpha^{2\ell}/n)}{m^2 \log^6(n)\alpha^{4\ell}}\\
&=O(\frac{1}{m^2}).
\end{array}
$$
Taking $m=\log(n)$ (say), then with high probability $\sum_i Z_i=O(\log^4(n) \alpha^{2\ell})$.
\end{proof}
\section{Proof of Lemma~\ref{det_grow2}}
Consider first those indices $i\notin\mathcal{B}$ whose $\ell$-neighborhood is a tree. For any $k$ and any $m\le \ell$, $B^{(m)}_{ik}$ can take values only 0 or 1 (there is at most one simple path connecting $i$ to $k$), and $B^{(m)}_{ik}=1$ if and only if $d_{\g}(i,k)=m$. For such $i$, one therefore has the following identities:
\begin{equation}
i\notin\mathcal{B}\Rightarrow\left\{
\begin{array}{ll}
(B^{(m)} e)_i=S_i(m),\\
(B^{(m)}\sigma)_i=D_i(m),
\end{array}\right.
\end{equation}
Relations~(\ref{control_growth1}) readily follows  from Theorem~\ref{lemma_LD_bounds}.
Let us now consider an index $i\in\mathcal{B}$ whose $\ell$-neighborhood is not a tree. We know from Lemma~\ref{cycles} that such nodes have in their $\ell$-neighborhood only one cycle. Clearly only nodes at distance at most $\ell$ of $i$ can be counted in $(B^{(\ell)}e)_i$, and they can be counted at most twice because the neighborhood contains only one cycle.
Control~(\ref{control_growth2}) readily follows. 
\section{Proof of Corollary~\ref{cor_int}}
\begin{proof}
Let ${\mathcal B}$ denote the set of nodes $i$ such that their $\ell$-neighborhood contains a cycle. Let $x$ be a normed vector such that $x'B^{(\ell)}e=0$. We then have
\begin{equation}\label{display42} 
\begin{array}{lll}
|e'B^{(m-1)}x|&=&|\sum_{i\in\nn}x_i (B^{(m-1)}e)_i|\\
&\le&|\sum_{i\in{\mathcal{B}}} x_i(B^{(m-1)}e)_i|\\
&&+|\sum_{i\in{\mathcal{B}}} x_i \alpha^{m-1-\ell} (B^{(\ell)}e)_i|\\
&&+|\sum_{i\in\nn} x_i[\alpha^{m-1-\ell} (B^{(\ell)}e)_i+O(\log(n)+\sqrt{\log(n)\alpha^{m-1}})|.
\end{array}
\end{equation}
Using the bound (\ref{control_growth2}) for $i\in {\mathcal{B}}$, we can bound the first summation, using Cauchy-Schwarz's inequality by
$$
\begin{array}{ll}
|\sum_{i\in{\mathcal{B}}} x_i[(B^{(m-1)}e)_i|&\le O(\log(n)\alpha^{m-1})\sqrt{|{\mathcal{B}}|}\\
&\le O(\log^3(n) \alpha^{\ell+m-1}),
\end{array}
$$
where we have used the bound on the size of ${\mathcal{B}}$ derived in Lemma~\ref{cycles}. The second summation in(\ref{display42}) is similarly bounded. As for the third summation, using the fact that $e'B^{(\ell)}x=0$, it is upper-bounded by
$$
|\sum_{i\in \nn}x_i O(\log(n)+\sqrt{\log(n)\alpha^{m-1}})|.
$$
By Cauchy-Schwarz again, this is no larger than
$$
O\left(\sqrt{n}(\log(n)+\sqrt{\log(n)\alpha^{m-1}})\right). 
$$
The announced bound (\ref{bound_growth_1}) on $|e'B^{(m-1)}x|$ follows. Similarly, the bound (\ref{bound_growth_2}) on $|\sigma' B^{(m-1)}x|$ follows by using property $\sigma'B^{(\ell)}x=0$ instead of property $e' B^{(\ell)}x=0$.
\end{proof}
\section{Proof of Lemma~\ref{coupling_lemma_1}}
\begin{proof}
We assume that $\sigma_i=+$, the case $\sigma=-$ being similar. Introduce the events 
\begin{equation}\label{omega} 
\Omega_k=\{S_k\le C\log(n)\alpha^k\} ,\; k\ge 1,
\end{equation}
where constant $C$ is as in Theorem~\ref{lemma_LD_bounds}. As established in the proof of Theorem~\ref{lemma_LD_bounds}, the probability of each $\Omega_k$ is $1-o(n^{-2})$. 

Let us evaluate, conditionally on ${\mathcal F}_{k-1}$ and on $\Omega_{k-1}$ the variation distance between $(U^+_k,U^-_k)$ and a pair of (conditionally on ${\mathcal F}_{k-1}$) independent random variables with respective distributions 
$$
W^+_k\sim\hbox{Poi}\left(\frac{a U^+_{k-1}+b U^-_{k-1}}{2}\right), \; W^-_k\sim \hbox{Poi}\left(\frac{a U^-_{k-1}+b U^+_{k-1}}{2}\right).
$$
The Stein-Chen method enables to bound the variation distance between a $\hbox{Bin}(n,\lambda/n)$ and a $\hbox{Poi}(\lambda)$ random variables by $n\min(1,\lambda^{-1})(\lambda/n)^2\le \lambda/n$. Furthermore, two Poisson random variables with respective parameters $\lambda$, $\lambda'$ have variation distance at most $|\lambda-\lambda'|$. This entails the bounds
$$
\begin{array}{lll}
d_{var}(U^+_k,W^+_k)&\le & \left[1-(1-a/n)^{U^+_{k-1}}(1-b/n)^{U^-_{k-1}}\right]\\
&&+\left| (\np-U^+_{\le k-1})[1-(1-a/n)^{U^+_{k-1}}(1-b/n)^{U^-_{k-1}}] - \frac{a U^+_{k-1}+b U^-_{k-1}}{2}\right|.
\end{array}
$$
We now use (\ref{basic_ineq}) to obtain
$$
\begin{array}{lll}
d_{var}(U^+_k,W^+_k)& \le & \frac{a U^+_{k-1}+b U^-_{k-1}}{n}\\
&&+ |\np-U^+_{\le k-1}-n/2|[1-(1-a/n)^{U^+_{k-1}}(1-b/n)^{U^-_{k-1}}]\\
&&+ \frac{1}{4n}(a U^+_{k-1}+bU^-_{k-1})^2.
\end{array}
$$
Let us now specify constant $c$ such that $c\log(\alpha)<1/2$, i.e. there is $\epsilon>0$ such that $c\log(\alpha)\le 1/2-\epsilon$. 
For $k\le\ell=c\log(n)$, on $\Omega_{k-1}$, it holds that $S_{k-1}\le C \log(n) n^{1/2-\epsilon}$. This, together with the bound $|n/2-\np|\le O(\log(n)n^{1/2})$ ensures the upper bound
$$
d_{var}(U^+_k,W^+_k)\le O(\log(n)n^{-\epsilon}).
$$
The same bound is readily established for the variation distance $d_{var}(U^-_k,W^-_k)$. These bounds readily establish by induction that the variation distance between the two sequences $(U^{\pm}_k)_{k\le\ell}$ and $(V^{\pm}_k)_{k\le\ell}$ are upper-bounded by $O(\log^2(n)n^{-\epsilon})$, establishing the Lemma.
\end{proof}
\section{Proof of Lemma~\ref{martingales1}}
\begin{proof}
Write 
$$
\begin{array}{ll}
\EE(V^{\pm}_{t+1}|\GG_t)&=\frac{a}{2}V^{\pm}_{t}+\frac{b}{2}V^{\mp}_{t}\\
&=\frac{a}{2}\frac{\alpha^{t}M_{t}\pm\beta^t\Delta_{t}}{2}+\frac{b}{2}\frac{\alpha^t M_t\mp\beta^t \Delta_t}{2}\\
&=\alpha^{t+1}M_t\pm\beta^{t+1}\Delta_t.
\end{array}
$$
It readily follows that both processes $\{M_t\}$, $\{\Delta_t\}$ are martingales. To establish uniform integrability we shall show that both processes have uniformly bounded variance. To that end we use the conditional variance formula
$$
\hbox{Var}(X)=\hbox{Var}(\EE(X|\FF))+\EE(\hbox{Var}(X|\FF)),
$$
and the fact that the variance of a Poisson random variable equals its mean. Thus
$$
\EE(V^{\pm}_t|\GG_{t-1})=\hbox{Var}(V^{\pm}_t|\GG_{t-1})=\frac{a V^{\pm}_{t-1}+bV^{\mp}_{t-1}}{2}.
$$
This yields by the conditional variance formula
$$
\begin{array}{ll}
\hbox{Var}(M_t)&=\hbox{Var}(M_{t-1})+\EE(\alpha^{-t}M_{t-1})\\
&=\hbox{Var}(M_{t-1})+\alpha^{-t}.
\end{array}
$$
Since $\hbox{Var}(M_0)=0$, it follows by induction that
$$
\hbox{Var}(M_t)=\frac{1-\alpha^{-t}}{\alpha-1},\; t>0.
$$ 
The latter is uniformly bounded for $\alpha>1$ hence the uniform integrability of $\{M_t\}$ under this condition.

Write now
$$
\begin{array}{ll}
\hbox{Var}(\Delta_t)&=\hbox{Var}(\Delta_{t-1})+\EE(\beta^{-2t}\hbox{Var}(V^+_t-V^-_t|\GG_{t-1}))\\
&=\hbox{Var}(\Delta_{t-1})+\EE(\beta^{-2t}\alpha^t M_{t-1})\\
&=\hbox{Var}(\Delta_{t-1})+\beta^{-2t}\alpha^t.
\end{array}
$$
It thus follows by $\hbox{Var}(\Delta_0)=0$ and induction that
$$
\hbox{Var}(\Delta_t)=\frac{1-(\alpha/\beta^2)^t}{\beta^2/\alpha-1},\; t>0,
$$
thus establishing uniform integrability of martingale $\{\Delta_t\}$. 
\end{proof}
\section{Proof of Corollary~\ref{corollary_martingales}}
\begin{proof}
Convergence almost surely and in $L_1$ is guaranteed under uniform integrability by the martingale convergence theorem (\cite{williams91}). Finiteness of the limiting variable's variance under uniform bounds on the variance is also standard; it follows from Fatou's lemma. Convergence of the variances is established as follows. The limiting variable satisfies a distributional equation given by
\begin{equation}
\Delta=\beta^{-1}\left(\sum_1^{\hbox{Poi}(a/2)}\Delta_i-\sum_{i=1}^{\hbox{Poi}(b/2)}\Delta'_i\right)
\end{equation}
where the $\Delta_i$, $\Delta'_i$ are i.i.d. and distributed as $\Delta$. The only solution for the variance of $\Delta$, apart from the degenerate solution $0$, is then readily seen to be $1/(\beta^2/\alpha-1)$, which is indeed the limit of the variance of $\Delta_t$. The $L_1$-convergence of $\Delta^2_t$ to $\Delta_{\infty}^2$ is then a direct consequence of Scheff\'e's lemma.
\end{proof}
\section{Proof of Theorem~\ref{thm_local}}
\begin{proof}
Note that with probability of order $1-O(n^{-\epsilon})$ for fixed positive $\epsilon$, $\sigma(i)\beta^{-\ell} D_{\ell}(i)$ coincides with $\Delta_{\ell}$ by the coupling lemma~\ref{coupling_lemma_1}. When this coupling fails, by the bounds established in Theorem~\ref{lemma_LD_bounds}, it holds that $\beta^{-\ell}D_{\ell}(i)$ is $O(\log(n))$. This entails that the left-hand side of (\ref{second_moments}) verifies
$$
\EE\left(\frac{1}{n}\sum_{i=1}^n \beta^{-2\ell}D_{\ell}^2(i)\right)=O(\log^2(n))n^{-\epsilon}+\EE(\Delta_{\ell}^2 \II_{coupling}).
$$
Write
$$
\begin{array}{ll}
|\EE(\Delta_{\ell}^2 \II_{coupling})-\EE(\Delta_{\infty}^2|&\le \EE|\Delta_{\ell}^2-\Delta_{\infty}^2|+\EE(\Delta_{\infty}^2\II_{no\;coupling}).
\end{array}
$$
By Corollary~\ref{corollary_martingales}, we have that the first term in the right-hand side goes to with $\ell$; the second term goes to zero with the probability that coupling fails (e.g. using Hardy-Littlewood-Polya's rearrangement inequalities). Thus the expectation converges to $\EE \Delta_{\infty}^2$.

Let us now consider the second moment of the empirical sum:
$$
\EE\left(\frac{1}{n}\sum_{i=1}^n\beta^{-2\ell}D_{\ell}^2\right)^2.
$$
We break it into two terms, the first being
$$
\frac{1}{n^2}\EE\sum_{i=1}^n \beta^{-4\ell} D_{\ell}^4(i).
$$
Using Lemma~\ref{coupling_lemma_1} and Theorem~\ref{lemma_LD_bounds}, using similar arguments as before we can bound this term by
$$
\frac{1}{n}O(\log^2(n))\left[\EE(\Delta_{\infty}^2)+o(1)\right]
$$
which clearly goes to zero as $n\to\infty$.

The second term is given by
$$
\frac{2}{n^2}\sum_{i<j}\beta^{-4\ell}\EE(D_{\ell}^2(i)D_{\ell}^2(j)).
$$
For given $i<j$, using Lemma~\ref{coupling_lemma_2} and Theorem~\ref{lemma_LD_bounds}, we bound the product $D_{\ell}^2(i)D_{\ell}^2(j)$ by $O(\log^4(n)\beta^{4\ell}$ on the event that the coupling with independent copies fails, and by $\EE(D_{\ell}^2(i))\EE(D_{\ell}^2(j))$ on the event that it succeeds. We then bound each of the individual terms in this product as in the control of the expectation done in the first half of the proof, thus obtaining the upper bound for the second moment:
$$
o(1)+\frac{2}{n^2}\sum_{i<j} \left[O(n^{-2\epsilon} \log^4(n))+(\EE(\Delta_{\infty}^2))^2+o(1)\right].
$$
It readily follows that 
$$
\lim_{n\to\infty}\EE\left[\frac{1}{n}\sum_{i=1}^n\beta^{-2\ell}D_{\ell}^2-\EE(\Delta_{\infty}^2)\right]^2=0.
$$

The convergence in probability~(\ref{second_moments}) follows. 

We now turn to establishing~(\ref{empirical}). We shall only consider the case of sign +, the other being handled similarly. Fix some arbitrarily small $\delta>0$. Because $\tau$ is a continuity point of the distribution of $\Delta_{\infty}$, we can find two bounded Lipschitz-continuous functions $f$, $g$ such that 
$$
f(u)\le \II_{u\ge \tau}\le g(u), \; u\in \RR
$$
and 
$$
0\le \EE(g(\Delta_{\infty})-f(\Delta_{\infty}))\le \delta.
$$
Consider then the empirical sum
$$
\frac{1}{n}\sum_{i\in\np}f(y_i\sqrt{n\EE(\Delta_{\infty}^2)}).
$$
Writing 
$$
y_i=\frac{\beta^{-\ell}D_{\ell}(i)}{\sqrt{\sum_{j=1}^n \beta^{-2\ell}D_{\ell}^2(j)}},
$$
we have that this empirical sum differs from the simpler one
\begin{equation}\label{empirical_2} 
\frac{1}{n}\sum_{i\in\np}f(\beta^{-\ell}D_{\ell}(i))
\end{equation}
by at most 
$$
K\frac{1}{n}\sum_{i\in\np}\beta^{-\ell}|D_{\ell}(i)|\times \left|1-\sqrt{\frac{\EE(\Delta^2_{\infty})}{A}}\right|,
$$ 
where $K$ is the Lipschitz continuity constant of function $f$ and $A$ is the empirical sum in~(\ref{second_moments}). This correction tends to zero in probability by dominated convergence. Indeed, convergence to zero of $1-\sqrt{\EE(\Delta_{\infty}^2)/A}$ has just been established. By similar arguments as before based on Theorem~\ref{lemma_LD_bounds} and Lemmas~\ref{coupling_lemma_1} and~\ref{coupling_lemma_2}), the empirical average of the $|\beta^{-\ell}D_{\ell}(i)|$ is bounded. Convergence in probability of (\ref{empirical_2}) to $(1/2)\EE(f(\Delta_{\infty}))$ is then established by evaluating the first and second moments of this sum as previously done.

The same argument can be applied to $g$, eventually leading to the convergence in probability
$$
\lim_{n\to\infty}\frac{1}{n}\sum_{i\in\np}g(y_i\sqrt{n\EE(\Delta_{\infty}^2)})=\frac{1}{2}\EE(g(\Delta_{\infty})).
$$
It readily follows that
$$
\limsup_{n\to\infty}\left|\frac{1}{n}\sum_{i\in\np}\II_{y_i\ge \tau/\sqrt{n\EE(\Delta_{\infty}^2)}}-\frac{1}{2}\PP(\Delta_{\infty}\ge \tau)\right|\le \delta.
$$
As $\delta$ is arbitrary, this establishes~(\ref{empirical}).

Pick again an arbitrary $\delta>0$, two pairs of Lipschitz-continuous functions $f_{\pm}$ and $g_{\pm}$ such that 
$$
f_{\pm}(u)\le \II_{\pm u\ge t}\le g_{\pm}(u),
$$
and 
$$
0\le \EE(g_{\pm}(\pm\Delta_{\infty}))-\EE(f_{\pm}(\pm \Delta_{\infty}))\le \delta.
$$
The empirical overlap~(\ref{empirical_overlap}) reads
$$
\frac{1}{n}\left[\sum_{i\in \np}-\sum_{i\in\nm}\right]\left( 2\cdot \II_{x_i\sqrt{n\EE(\Delta_{\infty}^2)}\ge t}-1\right).
$$
The difference $(\np-\nm)/n$ is of order $1/\sqrt{n}$ and thus vanishes. We upper-bound the remaining terms by
$$
\frac{2}{n}\left[\sum_{i\in \np}g_+(x_i\sqrt{n\EE(\Delta_{\infty}^2)})-\sum_{i\in\nm}f_-(x_i(-\sqrt{n\EE(\Delta_{\infty}^2)})\right]
$$
Letting $K$ denote the Lipschitz-continuity constant for both $g_+$ and $f_-$, this last display differs from
\begin{equation}\label{empirical_3}
\frac{2}{n}\left[\sum_{i\in\np}g_+(\beta^{-\ell}D_{\ell}(i))-\sum_{i\in\nm}f_-(-\beta^{-\ell}D_{\ell}(i))\right]
\end{equation}
by at most 
\begin{equation}\label{empirical_4}
\frac{2K}{n}\sum_{i=1}^n\left|(x_i-y_i)\sqrt{n\EE(\Delta_{\infty}^2)}\right|+\beta^{-\ell}|D_{\ell}(i)|\times \left|1-\sqrt{\frac{\EE(\Delta^2_{\infty})}{A}}\right|.
\end{equation}
Because of the assumed convergence in probability $\lim_{n\to\infty}||x-y||=0$, the first error term necessarily tends to zero in probability by Cauchy-Schwarz inequality. The second term is dealt with as mentioned in the proof of the previous lemma. Finally, using the coupling lemmas~\ref{coupling_lemma_1} and~\ref{coupling_lemma_2}, by evaluating the first and second moments of (\ref{empirical_3}), we obtain the convergence in probability
$$
\lim_{n\to\infty}\frac{2}{n}\left[\sum_{i\in\np}g_+(\beta^{-\ell}D_{\ell}(i))-\sum_{i\in\nm}f_-(-\beta^{-\ell}D_{\ell}(i))\right]=\EE(g_+(\Delta_{\infty})-f_-(\Delta_{\infty})).
$$
The latter term is then an upper bound on the $\limsup$ of the empirical overlap. By the same approach, we obtain a lower bound of 
$$
\EE(f_+(\Delta_{\infty})-g_-(\Delta_{\infty}))
$$
on the $\liminf$ of the overlap. These upper and lower bounds differ by at most $2\delta$, and differ from $\PP(\Delta_{\infty}\ge t)-\PP(\Delta_{\infty}\le -t)$ by at most $\delta$. Since $\delta$ is arbitrary, this establishes the announced convergence in probability of the empirical overlap to quantity $x$ where
$$
x=\PP(\Delta_{\infty}\ge t)-\PP(\Delta_{\infty}\le -t)
$$
is strictly positive by our choice of $t$.
\end{proof}
\section{Proof of Lemma~\ref{lemma_align}}
\begin{proof}
The first and second evaluations follow by noting that the vectors whose difference is considered in the left-hand side agree on the set of entries $i$ whose $\ell$-neighborhood is cycle-free. Thus
$$
\begin{array}{ll}
|B^{(\ell)}e-\{S_{\ell}(i)\}|&\le \sqrt{|{\mathcal{B}}|}O(\log(n) \alpha^{\ell})\\
&\le O(\log^3(n)\alpha^{2\ell}),
\end{array}
$$
and the same bound holds for $|B^{(\ell)}\sigma-\{D_{\ell}(i)\}|$. This upper bound is $o(\sqrt{n}\beta^{\ell})$ so that the first two assertions follow, by further noticing that $|\{D_{\ell}(i)\}|=\Theta(\beta^{\ell})$, as follows from combining Theorem~\ref{lemma_LD_bounds} with Lemmas~\ref{coupling_lemma_1} and~\ref{coupling_lemma_2} as in the proof of Theorem~\ref{thm_local}. 

For the third assertion, consider the scalar product $<\{S_{\ell}(i)\},\{D_{\ell}(i)\}>$. Using the same arguments as in the proof of Theorem~\ref{thm_local} we obtain that this scalar product is indeed $o(|\{S_{\ell}(i)\}|\times |\{D_{\ell}(i)\}|$.
\end{proof}
\section{Proof of Lemma~\ref{lemma_trees}}
\begin{proof}
To establish the lower bound of~(\ref{norm_bound_1}), note that by Cauchy-Schwarz,
$$
<e, B^{(\ell)} B^{(\ell)}e>\le |e|\times |B^{(\ell)} B^{(\ell)}e|.
$$
However the left-hand side reads $|B^{(\ell)}e|^2$. Thus
$$
|B^{(\ell)} B^{(\ell)}e|\ge \frac{|B^{(\ell)}e|^2}{|e|}\cdot
$$
However it holds that $|B^{(\ell)}e|=\Theta(\sqrt{n}\alpha^{\ell})$ (this follows from the methods in the proof of Theorem~\ref{thm_local}). Since $|e|=\sqrt{n}$, the lower bound in~(\ref{norm_bound_1}) follows. For the upper bound, we note that by Lemma~\ref{det_grow2} and Theorem~\ref{lemma_LD_bounds}, the max row sum for matrix $B^{(\ell)}$ is of order $O(\log(n)\alpha^{\ell})$. 

The lower bound in (\ref{norm_bound_2}) is established similarly, from the inequality
$$
<\sigma, B^{(\ell)} B^{(\ell)}\sigma>\le |\sigma|\times |B^{(\ell)} B^{(\ell)}\sigma|
$$ 
The upper bound requires additional arguments. Assuming the $2\ell$-neighborhood of $i$ is cycle-free, the $i$-th entry of vector $B^{(\ell)} B^{(\ell)}\sigma$ can be written as
$$
\sum_{d=0}^{\ell}\sum_{j:d_G(i,j)=2d}\sigma_j |\{k: d_G(i,k)=d_G(j,k)=\ell\}|.
$$
We control the magnitude of this quantity in the tree model; using coupling we will then transpose the corresponding estimates to the original scenario.

Let then ${\mathcal{T}}$ denote a branching process with offspring $\hbox{Poi}(\alpha)$. The process of spins is then constructed by sampling uniformly the root's spin, and then propagating spins in a Markovian fashion with transition matrix $(a/(a+b) b(a+b), b(a+b), a(a+b))$ that is $\alpha^{-1}M$. Its eigenvalues are thus $(1,\beta/\alpha)$. 

The variable of interest is written
$$
X=\sum_{d=0}^{\ell} \sum_{j:d(j,i)=2d}\sigma_j |\{k: d(j,k)=d(i,k)=\ell\}|.
$$
We evaluate its second moment conditionally on ${\mathcal{T}}$ by writing $X^2$ as 
$$
\begin{array}{ll}
X^2=&\sum_{d=0}^{\ell}\sum_{d'=0}^{\ell}\sum_{j':d(j',i)=2d'}\sum_{j:d(j,i)=2d}\sigma_j \sigma_{j'}\times\\
&\times |\{k: d(j,k)=d(i,k)=\ell\}|\cdot |\{k': d(j',k')=d(i,k')=\ell\}|.
\end{array} 
$$
Now it holds that 
$$
\EE(\sigma_j \sigma_{j'}|{\mathcal{T}})=O\left(\left(\frac{\beta}{\alpha}\right)^{d(j,j')}\right).
$$
We will use this formula, and further distinguish nodes $j'$ according to their distance $2(d+d'-\tau)$ for $\tau=0,\ldots,2 (d\wedge d')$. This yields
$$
\begin{array}{ll}
\EE(X^2|{\mathcal{T}})=&\sum_{d,d'=0}^{\ell}\sum_{\tau=0}^{2(d\wedge d')}\sum_{j':d(j',i)=2d'}\sum_{j:d(j,i)=2d}O(\left(\left(\frac{\beta}{\alpha}\right)^{2(d+d'-\tau)}\right)\times\\
&\times |\{k: d(j,k)=d(i,k)=\ell\}|\cdot |\{k': d(j',k')=d(i,k')=\ell\}|.
\end{array}
$$
Note now that with high probability, we have the following evaluations
$$
\begin{array}{ll}
|\{k: d(j,k)=d(i,k)=\ell\}|&=\tilde{O}(\alpha^{\ell-d}),\\
|\{k': d(j',k')=d(i,k')=\ell\}|&=\tilde{O}(\alpha^{\ell-d'}),\\
|\{j:d(j,i)=2d\}|&=\tilde{O}(\alpha^{2d}),\\
|\{j':d(j',i)=2d'\&d(j,j')=2(d+d'-\tau)\}|&=\tilde{O}(\alpha^{2d'-\tau}).
\end{array}
$$
Plugging these in, we have 
$$
\begin{array}{ll}
\EE(X^2|{\mathcal{T}})&=\sum_{d,d'=0}^{\ell}\sum_{\tau=0}^{2(d\wedge d')}\tilde{O}\left(\left(\frac{\beta}{\alpha}\right)^{2(d+d'-\tau)}\right)
\alpha^{2\ell-d-d'+2(d+d')-\tau}\\
&=\sum_{d,d'=0}^{\ell}\sum_{\tau=0}^{2(d\wedge d')}\tilde{O}\left(\alpha^{2\ell}(\frac{\beta^2}{\alpha})^{d+d'-\tau}\right)\\
&=\tilde{O}(\alpha^{2\ell}(\beta^2/\alpha)^{2\ell})\\
&=\tilde{O}(\beta^{4\ell}).
\end{array}
$$
By coupling (techniques of Theorem~\ref{thm_local} involving Tchebitchev inequality, based on the bounds of Theorem~\ref{lemma_LD_bounds} and Lemmas~\ref{coupling_lemma_1} and~\ref{coupling_lemma_2}) we thus have that with high probability,
$$
|B^{(\ell)}B^{(\ell)}\sigma|=\tilde{O}(\sqrt{n\beta^{4\ell}})=\tilde{O}(\beta^{\ell}|B^{(\ell)}\sigma|)
$$
as announced.
\end{proof}
\end{appendix}
\end{document}